\newcommand{\cost}{\mathrm{cost}}
\newcommand{\costavg}{\mathrm{avg}\text{-}\mathrm{cost}}
\newcommand{\costworst}{\mathrm{worst}\text{-}\mathrm{cost}}
\newcommand{\ind}{\mathrm{ind}} 
\newcommand{\istar}{i^\ast}
\newcommand{\opt}{\mathrm{opt}}
\newcommand{\rmd}{\mathrm{d}}
\newcommand{\optavg}{\opt^{\mathrm{avg}}}
\newcommand{\optworst}{\opt^{\mathrm{w}}}
\newcommand{\error}{\mathrm{error}}
\newcommand{\bias}{\mathrm{bias}}
\newcommand{\1}[1]{\mathbf{1}\left\{#1\right\}}
\newcommand{\agrees}{\triangleleft}
\newcommand{\ex}[2]{\Ex_{#1}\left[#2\right]}
\newcommand{\pr}[2]{\Prx_{#1}\left[#2\right]}
\newcommand{\IPRR}{\textsc{IPRR}}
\newcommand{\WIPRR}{\textsc{Warmup}\text{-}\textsc{IPRR}}
\newcommand{\bpi}{\boldsymbol{\pi}}
\newcommand{\fullAlg}{\textsc{Online\text{-}Query}}
\title{
No Price Tags?~No Problem:~Query Strategies\\ for Unpriced Information 
}
\author{
Shivam Nadimpalli
\thanks{MIT. Email: \url{shivamn@mit.edu}.}
\and 
Mingda Qiao
\thanks{University of Massachusetts Amherst. Email: \url{mingda.qiao.cs@gmail.com}.}
\and 
Ronitt Rubinfeld
\thanks{MIT. Email: \url{ronitt@csail.mit.edu}.} 
}
\date{\today}
\begin{document}

\pagenumbering{gobble}

\maketitle

\hypersetup{linkcolor={purple}}

\begin{abstract}
    The classic \emph{priced query model} introduced by Charikar et al.~(STOC 2000) captures the task of computing a known function on an unknown input, where each input variable can only be revealed by paying an associated cost. 
    The goal is to design a query strategy that determines the function's value while minimizing the total cost incurred. 
    All prior work in this model, however, assumes complete advance knowledge of these query costs---an assumption that breaks down dramatically in many realistic settings. 

    We introduce a variant of the priced query model designed explicitly to handle \emph{unknown} variable costs. 
    We prove a separation from the traditional priced query model, showing that uncertainty in variable costs imposes an unavoidable overhead for every query strategy. 
    Despite this, we are able to give query strategies that essentially match our lower bound and are competitive with the best-known cost-aware strategy for an arbitrary Boolean function. 
    Our results build on a recent connection between priced query strategies and the analysis of Boolean functions, and also draw on techniques from online algorithms. 
\end{abstract}

\newpage

\pagenumbering{arabic}

\section{Introduction}
\label{sec:intro}

Information is rarely free, and its value is often revealed only after the cost to acquire it has been paid. 
A scientist may devote weeks to running an experiment, only to find the results inconclusive; a data analyst may purchase costly annotations that add little insight; a policymaker may commission new surveys, learning only afterward which questions truly mattered.
Even in familiar settings---medical diagnostics or crowd-sourced data collection---each ``query'' carries a cost and a payoff that remain unknown until a response has been obtained.   
The same tension arises in collective decision-making: for instance, the U.S.~Electoral College can be viewed as a ``majority of majorities'' function, where confident prediction demands focusing effort on the few states whose outcomes remain uncertain. 
Additionally, costly observations may sometimes yield no new information, whereas a single inexpensive one can sometimes decisively influence the outcome. 

The same fundamental difficulty runs through all these scenarios: an overall outcome depends on many unknown variables---not all of which may need to be learned---yet revealing each variable carries a cost that is itself unknown in advance.
The challenge is to allocate effort strategically: which variables to reveal, and in what order, when each becomes known only once its cost threshold has been met. 
To capture this problem, we introduce a theoretical model of information acquisition under unknown query costs.
Our framework extends the classical \emph{priced query} model of Charikar, Fagin, Guruswami, Kleinberg and Raghavan~\cite{charikar2000query} to an ``online'' setting, where the cost required to reveal a variable is hidden until the cumulative investment surpasses its threshold. 
We begin by briefly recalling the classical model.

\paragraph{Background: Query Strategies for Priced Information.} 

The \emph{priced query} model introduced in~\cite{charikar2000query} provides a clean abstraction for decision-making when query costs are known in advance.
In this model, the goal is to evaluate a Boolean function $f\isazofunc$ on an unknown input $x\in\zo^n$, where each input bit can only be revealed by paying a specified cost. 
The challenge is to design query strategies that minimize the total cost incurred by the algorithm. 

Over the past two decades, there has been substantial work designing priced query strategies for various function classes; an incomplete list includes~\cite{cicalese2005new,cicalese2005optimal,cicalese2008function,cicalese2011beyond,cicalese2011competitive,cicalese2011threshold,deshpande2014approximation,allen2017evaluation,gkenosis2018stochastic,hellerstein2018stochastic,blanc2021query,hellerstein2022adaptivity,hellerstein2024quickly,ghuge2024nonadaptive} among many others. 
The work of Charikar et al.~\cite{charikar2000query} also inspired priced variants of other basic algorithmic tasks such as learning~\cite{kaplan2005learning}, optimization~\cite{singla2018price}, and search or sorting~\cite{gupta2001sorting,kannan2003selection}. 
The problem has also been extensively studied within the operations research community under the label of \emph{sequential testing}~\cite{unluyurt2004sequential}.

\paragraph{This Work: Query Strategies for \emph{Unpriced} Information.} 

The classical model, however, assumes that the algorithm knows all query costs in advance—an assumption that fails dramatically in many natural settings. 
In practice, the cost of uncovering information is often uncertain: we discover how expensive a variable is only through the process of revealing it.
This brings us to the focus of the present work:  
\begin{center}
    Can we design effective query strategies for \emph{unpriced} information?
\end{center} 
In other words, we consider the task of designing query strategies in the setting when the variable costs are \emph{entirely unknown} to the algorithm. 
We study this problem in the framework of competitive analysis, where our goal is to design query strategies whose total cost is comparable to that of the optimal algorithm that knows the variable costs in advance. 
Throughout, we refer to this setting as the ``online'' model, to distinguish it from the ``offline'' (known-cost) framework of~\cite{charikar2000query}.

We note that our problem bears resemblance to several models in the online algorithms literature, such as threshold queries and Pandora's box. 
These online frameworks incorporate uncertainty and sequential revelation, yet they typically do not aim to evaluate a fixed Boolean function; the emphasis there is often on regret or reward objectives rather than computing a prescribed function of the inputs. 
Conversely, the classical priced-query model assumes complete knowledge of all costs in advance.
(We refer the reader to \Cref{subsec:related-work} for a detailed discussion of related work.) 
Our work bridges these two areas, both conceptually and technically.

Algorithmically, we draw on ideas from the query-strategies literature---particularly influence-based techniques from the analysis of Boolean functions, building on a recent connection highlighted by Blanc, Lange, and Tan~\cite{blanc2021query}---as well as tools from online algorithms, including martingale arguments and testing/consistency checks.
In addition to our algorithmic contributions, we establish lower bounds for both specific strategies and general online query algorithms. 
We believe that the ideas and techniques developed here will find broader applications across these domains.

\subsection{Our Results}
\label{subsec:our-results}

For the sake of readability, we restrict ourselves to our main algorithmic results and lower bounds in the discussion below. 
We summarize all our results in~\Cref{table:all-results}, and give a detailed technical overview in \Cref{sec:proof-overview}.

\subsubsection{Problem Setup}
\label{subsubsec:setup-intro}

We start with an informal overview of our setup and refer the reader to~\Cref{subsec:problem-setup} for a precise formulation.
Throughout, we assume that the goal is to compute a Boolean function $f\isazofunc$ on an unknown input $x\in\zo^n$. 
We also denote the fixed (but unknown) vector of costs as $c \in \R_{\geq 0}^n$. 
At each time step, the algorithm may invest in a single variable; its cumulative investment is maintained as a vector $\theta \in \R^n$. 
The variable $x_i$ is revealed to the algorithm when $\theta_i \geq c_i$. 

We consider both \emph{zero-error} algorithms (i.e.~algorithms that correctly compute $f(x)$ on \emph{every} input $x\in\zo^n$) as well as \emph{$\eps$-error} algorithms (i.e.~algorithms which are allowed to err in computing $f(\bx)$ for $\eps$-fraction of $\bx\sim\zo^n$). 
We benchmark our algorithms in terms of both the optimal \emph{average cost} and the \emph{worst-case cost} among all offline algorithms. 
For example, $\optavg_0 := \optavg_0(f,c)$ will denote the optimal expected cost among all zero-error offline algorithms computing $f(\bx)$ on a uniformly random $\bx\sim\zo^n$; we correspondingly define $\optavg_\eps$, $\optworst_0$, and $\optworst_\eps$ with the latter two being \textbf{\underline{w}}orst-case costs. 
See~\Cref{def:benchmark} for formal definitions of these performance measures, and note for now that
\begin{equation} \label{eq:lego}
    \optavg_\eps \leq \optavg_0 \leq \optworst_0  
    \qquad\text{and}\qquad 
    \optavg_\eps \leq \optworst_\eps \leq \optworst_0\,. 
\end{equation}
Thus, benchmarking against $\optavg_\eps$ is the strongest guarantee we can hope for. 

Much of our focus will be on the average-case setting; we will thus refer to the \emph{competitive ratio} against the benchmark \smash{$\opt^{\star}_{\diamond}$} as the ratio between the expected cost of algorithm on a uniformly random $\bx\sim\zo^n$ and the benchmark \smash{$\opt^{\star}_{\diamond}$} where $\star,\diamond$ will always be clear from context. 

\begin{table}[]
    \renewcommand{\arraystretch}{1.6}
    \centering
    \begin{tabular}{@{}lllll@{}}
    \toprule
    Function $f\isazofunc$   & Error              & Benchmark      & Competitive Ratio  \\ \midrule
    $\AND$  (\Cref{prop:and-lower-bound})                             & $0$                & $\optavg_0$    &  $\Omega(n)$                            \\
    $\Tribes$ (\Cref{thm:tribes-lower-bound-informal})                          & $\eps \in [0,1/4)$ & $\optavg_0$    &   $\Omega(\log n) $                          \\
    \midrule
    General (\Cref{thm:warmup-iprr-informal})                             & $\eps$             & $\optworst_0$  &              $O\left(\frac{\TInf[f] \log n}{\eps}\right)$             \\
    Symmetric (\Cref{thm:symmetric-informal})                           & $\eps$             & $\optavg_0$    &     $O\left(\log\frac{1}{\eps}\right)$             \\
    General (\Cref{thm:iprr-informal})                          & $O(\eps)$             & $\optavg_\eps$ & $O\left(\frac{\TInf[f] \log n}{\eps^3}\cdot\log\pbra{\frac{\log \optavg_\eps}{\eps}} \right)$ &            \\
    D.T.~with avg.~depth $d$ (\Cref{thm:follow-the-tree-general-informal}) & $\eps$             & $\optavg_0$    &                    $O\left(\frac{d}{\eps}\right)$          \\
    \bottomrule
    \end{tabular}
    \caption{An overview of our bounds on the competitive ratio in the online setting.
    The upper bounds hold for specific online algorithms that satisfy the corresponding error bounds, and D.T.~is shorthand for decision tree. 
    See also~\Cref{table:influence-bounds} for bounds on the total influence $\TInf[f]$ (cf.~\Cref{def:influence}) for some function classes.  
    The lower bounds hold against \emph{all} online algorithms with the corresponding error guarantees.} 
    \label{table:all-results}
\end{table}

\subsubsection{Separating the Offline and Online Settings} 

We first show that the online setting is strictly harder than the offline setting of~\cite{charikar2000query}. 
In particular, the following result gives a lower bound on the competitive ratio of average-case online algorithms against the benchmark $\optavg_0$, even when the online algorithm is allowed to err on a constant fraction of inputs.

\begin{theorem}[Informal version of \Cref{thm:tribes-lower-bound}] \label{thm:tribes-lower-bound-informal}
    Let $f\isazofunc$ be the $\Tribes$ function on $n$ variables (cf.~\Cref{def:tribes-instance}). 
    Then every $0.24999$-error online algorithm computing $f$ has competitive ratio $\Omega(\log n)$ against $\optavg_0$. 
\end{theorem}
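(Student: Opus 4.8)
The plan is to pin down a single worst-case cost vector---obtained by averaging, in the style of Yao's minimax principle, over a symmetric family of them---on which every low-error online strategy overpays by a $\Theta(\log n)$ factor. Write the $\Tribes$ instance as an $\mathrm{OR}$ of $s = \Theta(n/\log n)$ disjoint $\AND$ gadgets (``tribes''), each on $w = \Theta(\log n)$ variables, and assign costs tribe by tribe: inside each tribe, give its $w$ variables the costs $2, 4, 8, \dots, 2^{w}$ (in some order). Since both $\Tribes$ and this cost multiset are invariant under permuting variables within a tribe, $\optavg_0(f,c)$ equals the same value $V$ for every such assignment, so by Yao's principle it suffices to put an independent uniformly random permutation of the costs on each tribe, take $\bx$ uniform and independent, and show that every \emph{deterministic} $0.24999$-error online algorithm has expected cost $\Omega(\log n)\cdot V$ under this distribution.

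First I would bound the benchmark $V = O(n)$ by exhibiting an offline strategy: scan the tribes one at a time, and inside a tribe query its variables in increasing order of cost, stopping at the first $0$ (the tribe is dead) or upon reading all $w$ ones (output $1$ and halt). Conditioned on reaching a tribe, the first $0$ appears at cost-rank $k$ with probability $2^{-k}$, so that tribe costs $\sum_k 2^{-k}\cdot O(2^{k}) = O(w)$ in expectation (the all-ones tribe, reached with probability $O(2^{-w})$, costs $O(2^{w})$); since the number of tribes reached is $O(2^{w}) = O(s)$ in expectation, the total is $O(sw + 2^{w}) = O(n)$.

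The heart of the argument is the online lower bound $\Omega(n\log n)$, which I would split into two steps. \textbf{Structural step:} because $\Tribes$ is balanced and each tribe is independently all-ones with probability $\Theta(1/s)$, a standard query-complexity argument shows a $0.24999$-error algorithm cannot ``give up'' on more than a constant fraction of the tribes: on an $\Omega(1)$ fraction of inputs---namely the $f=0$ inputs it answers correctly---its transcript must read a $0$ in $\Omega(s)$ distinct tribes, since otherwise resampling the unread coordinates of the un-hit tribes would turn $f$ into $1$ with probability bounded away from $0$, exceeding the allowed error. \textbf{Cost step:} conditioned on the random within-tribe permutation, the money any online strategy spends inside a tribe before it reads a $0$ (or confirms the tribe is all-ones) is $\Omega(w^{2})$ in expectation. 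With a geometric cost profile the only non-wasteful way to reveal a fresh variable is to invest roughly uniformly across the tribe's not-yet-revealed variables---pouring the maximum cost $2^{w}$ into one variable is exponentially wasteful, and betting that a given variable is cheap succeeds with probability only $O(1/w)$---so revealing the $k$ cheapest variables of a tribe costs $\Omega((w-k)\cdot 2^{k})$; since the first $0$ appears at cost-rank $k$ with probability $2^{-k}$, the expected spend is $\sum_{k} 2^{-k}\cdot\Omega((w-k)\cdot 2^{k}) = \Omega(w^{2})$. Combining the two steps---the algorithm must independently incur this $\Omega(w^{2})$ spend in $\Omega(s)$ tribes on an $\Omega(1)$ fraction of inputs---yields expected online cost $\Omega(s\cdot w^{2}) = \Omega(n\log n) = \Omega(\log n)\cdot V$, as claimed.

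I expect the cost step to be the main obstacle: it must be made rigorous against arbitrary \emph{adaptive} strategies, where the algorithm may spread partial investments across many variables of many tribes, revisit tribes, and adapt to the revealed bits---so one really has to prove a search lower bound stating that locating cheap variables hidden (in a random spot) among geometrically more expensive ones costs $\Omega(w^{2})$ in expectation, however cleverly the investments are interleaved. The structural step is more routine, but care is needed to choose the notion of ``resolving'' a tribe so that the resampling argument both forces $\Omega(s)$ resolved tribes and keeps the per-tribe cost lower bound applicable.
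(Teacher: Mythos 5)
Your plan is a genuinely different route from the paper's. The paper also randomizes the costs as independent within-tribe permutations, but it uses \emph{linear} costs $\{1, 2, \ldots, w\}$ rather than your geometric profile $\{2, 4, \ldots, 2^w\}$, and that choice is exactly what makes the per-tribe lower bound tractable. The paper's per-tribe lemma is deliberately weak: any algorithm that merely \emph{reveals one variable} of a $w$-variable $\AND$ instance with probability $\Omega(1)$ must pay $\Omega(w)$. The proof is a short coupling: simulate the algorithm with all costs set to $w$; since $\Pr_c[c_i \le a_i] \le a_i/w$ under a uniform permutation, the real run agrees with the simulation with probability $\ge 1/2$ until the total investment reaches $w/2$, and on that event any revelation forces total cost $\ge w/2$. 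Two further moves glue this to $\Tribes$: a ``canonical-form'' normalization followed by the OSSS inequality shows that $\Omega(2^w)$ tribes must be \emph{revealed} in expectation, and a symmetry/planting trick (embed the $\AND$ instance as a uniformly random one of the $2^w$ tribes) converts the average per-tribe spend of the $\Tribes$ algorithm into a cost lower bound for the $\AND$ instance. The ratio $\Omega(2^w w)/O(2^w) = \Omega(w)$ matches yours.

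The gap in your argument is the one you flag yourself, and it is a real one. The claim that resolving a tribe with geometric costs forces $\Omega(w^2)$ against an arbitrary adaptive strategy is not something you have proved, and it is genuinely harder than the paper's $\Omega(w)$ statement: applied to your geometric costs, the coupling/simulation technique only shows that revealing a single variable costs $\Omega(w)$ (because now $\sum_i \Pr_c[c_i \le a_i] \approx (\sum_i \log_2 a_i)/w$ can be kept below $1/2$ at total investment $\Theta(w)$), and chaining such bounds across ranks is delicate since investment at level $2^j$ is reusable at level $2^{j+1}$ and the algorithm may target any zero, not the cheapest one. Your structural step also needs more than the one-line resampling sketch: to contradict $\eps_0 < 1/4$ you must bound $\Pr[f=1 \mid \text{transcript}]$ \emph{away from both $0$ and $1$} on a constant mass of transcripts, which requires controlling not only how many tribes have a zero read but also how many ones the algorithm has accumulated in the alive tribes (a transcript that has seen many ones in many alive tribes can push that conditional probability close to $1$). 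The paper sidesteps this by first transforming the algorithm into a canonical form that never queries inside a dead tribe and always outputs the Bayes-optimal bit, so that the number of queried \emph{coordinates} is at most twice the number of revealed tribes, at which point OSSS together with $\Inf_{i,j}[\Tribes] \le 2^{-w}$ finishes the argument. Your offline benchmark computation and the final ratio bookkeeping are fine, but both load-bearing lemmas---the $\Omega(w^2)$ per-tribe cost and the quantitative structural bound---are unproved, and the paper's choice of linear costs is precisely what lets it replace the former with the clean $\Omega(w)$ coupling argument.
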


The $\Tribes$ function is a well-known extremal function in Boolean function analysis and social choice theory~\cite{BenOrLinial:85,KKL:88,odonnell-book}. 
Note that that a lower bound on the competitive ratio against $\optavg_0$ is a stronger guarantee than comparing against $\optavg_\eps$ (cf.~\Cref{eq:lego}).
En route to proving~\Cref{thm:tribes-lower-bound-informal}, we give an $\Omega(n)$ lower bound on the competitive ratio against $\optavg_0$ for zero-error online algorithms computing the $\AND$ function (\Cref{prop:and-lower-bound}).

\subsubsection{Online Query Algorithms} 

Despite \Cref{thm:tribes-lower-bound-informal}, we give online algorithms that match (up to a logarithmic factor) the performance of the best known offline strategy for a \emph{generic} Boolean function $f\isazofunc$ obtained by Blanc, Lange, and Tan~\cite{blanc2021query}. 
Prior to~\cite{blanc2021query}, much of the work on priced query strategies focused on designing algorithms tailored to structured classes of Boolean functions such as disjunctions, DNF formulas, game trees, halfspaces, and so on. 
In contrast, Blanc, Lange, and Tan~\cite{blanc2021query} give an offline algorithm whose performance can be expressed in terms of a basic complexity measure of the function $f$, namely its \emph{total influence}~$\TInf[f]$:

\begin{restatable}
{definition}
{infDef}
\label{def:influence}
	Given a Boolean function $f:\zo^n\to\zo$, we define the \emph{influence of the $i^\text{th}$ variable on $f$} as 
	\[
		\Inf_i[f] := \Prx_{\bx\sim\zo^n}\sbra{f(\bx)\neq f(\bx^{\oplus i})}
	\]
	where $x^{\oplus i} = (x_1, \dots, x_{i-1}, 1-x_i, x_{i+1}, \dots, x_n)$. 
	Furthermore, we define the \emph{total influence of $f$} as 
	\[
		\TInf[f] := \sumi \Inf_i[f]. 
	\]
\end{restatable}

We note that both the coordinate influences and total influences are central notions across a number of fields including combinatorics, complexity theory, social choice, and statistical physics; see~\cite{Kalai:04,Juknacomplexitybook,odonnell-book,garban2014noise} for more information. 
See~\Cref{table:influence-bounds} for bounds on the total influence of structured function classes including DNF formulas, circuits, intersections of halfspaces, and monotone functions.

\begin{table}[]
    \centering
    \renewcommand{\arraystretch}{1.5}
    \begin{tabular}{@{}lll@{}}
    \toprule
    Family of functions $f\isazofunc$~~          & $\TInf[f]$ & Reference \\ \midrule
    Monotone functions    &  $O(\sqrt{n})$                  & \cite{BshoutyTamon:96}                                     \\
    Intersections of $s$ halfspaces           &               $O(\sqrt{n\log s})$            &   \cite{Kane:sens14}        \\
    Size-$s$ DNFs                             &                    $O(\log s)$       &    \cite{Boppana1997}       \\
    Depth-$d$ size-$s$ $\mathrm{AC}^0$ circuits         &       $O(\log s)^{d-1}$                    &    \cite{LMN:93,Boppana1997}       \\
    \bottomrule
    \end{tabular}
    \caption{Upper bounds on the total influence of various function classes.}
    \label{table:influence-bounds}
\end{table}

\paragraph{From~\cite{blanc2021query} to Unknown Costs.} 

We start by recalling \cite{blanc2021query}'s guarantee in the offline setting: 

\begin{theorem}[Theorem~1 of~\cite{blanc2021query}] \label{thm:BLT}
    Let $f\isazofunc$, $c \in \N^n$ be a \emph{known} cost vector, and $\eps \in (0, 0.5)$. 
    There is an efficient $O(\eps)$-error query strategy for $f$ with expected cost at most 
    \[
        \optavg_\eps\cdot \frac{\TInf[f]}{\eps^2}\,.
    \]
\end{theorem}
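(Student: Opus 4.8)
The plan is to analyze the natural \emph{influence-greedy} query strategy together with a budget cap. Maintain the restriction $\rho$ recording the coordinates revealed so far and their values, and fix the threshold $\tau := \optavg_\eps/\eps$ (a function of the known $f,c$, so the strategy may compute it, or guess it by doubling). Call a coordinate \emph{cheap} if $c_i \le \tau$; the strategy only ever invests in cheap coordinates. While $f|_\rho$ is non-constant as a function of the cheap coordinates, invest in the cheap coordinate $i$ minimizing the cost-to-influence ratio $c_i/\Inf_i[f|_\rho]$ until $\theta_i \ge c_i$, append its revealed value to $\rho$, and recurse; once $f|_\rho$ is constant, output that value. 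Finally, stop and output an arbitrary bit once the total investment reaches $B := \optavg_\eps\cdot\TInf[f]/\eps^2$. Since the output is correct whenever the algorithm halts because $f|_\rho$ became constant, it suffices to bound by $O(\eps)$ the probability over $\bx\sim\zo^n$ of the only two ways an error can arise---$f$ depends on a discarded (expensive) coordinate along $\bx$'s path, or the cap is triggered---while noting that the cost is at most $B$ by construction.

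The discarding loss is easy to control: the optimal $\eps$-error tree $T^\ast$ queries \emph{some} coordinate of cost $>\tau$ with probability at most $\eps$ (else $\Ex[\cost_{T^\ast}] \ge \tau\cdot\Pr[\cdot] > \optavg_\eps$), so truncating $T^\ast$ at such queries yields a $2\eps$-error tree of cost at most $\optavg_\eps$ that touches only cheap coordinates; hence ignoring the expensive coordinates costs only $O(\eps)$ in error, and we may henceforth compare against this cheap-only near-optimal tree. For the cap I would use two exact identities describing how influence evolves when one coordinate is revealed. Writing $\rho_b = \rho\cup\{i\mapsto b\}$, expanding the indicator in \Cref{def:influence} over the bit $x_i$ gives $\Inf_j[f|_\rho] = \tfrac12\Inf_j[f|_{\rho_0}] + \tfrac12\Inf_j[f|_{\rho_1}]$ for every $j \neq i$; consequently, for a uniformly random value of $x_i$, both the total influence $\TInf[f|_\rho]$ and the cost-weighted total influence $\Phi(\rho) := \sum_j c_j\Inf_j[f|_\rho]$ decrease in expectation by exactly $\Inf_i[f|_\rho]$ and $c_i\Inf_i[f|_\rho]$, respectively. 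Thus, along the random root-to-leaf path the algorithm traces on $\bx$, the total influence is a supermartingale that starts at $\TInf[f]$ and ends at $0$, so the sum of the revealed coordinates' influences along the path has expectation exactly $\TInf[f]$; and the same identity for $\Phi$, together with the offline lower bound $\optavg_0(f|_\rho) \ge \Phi(\rho)$ (a zero-error tree must query $j$ whenever $f|_\rho(x)\neq f|_\rho(x^{\oplus j})$, an event of probability $\Inf_j[f|_\rho]$, hence spends at least $c_j\Inf_j[f|_\rho]$ on $j$), ties the potential drop to the offline optimum.

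Putting this together, the cost on $\bx$ equals $\sum_v \Inf_{i_v}[f|_{\rho_v}]\cdot r_v$ over the nodes $v$ on $\bx$'s path, where $i_v$ is the coordinate revealed at $v$ and $r_v := c_{i_v}/\Inf_{i_v}[f|_{\rho_v}] = \min_j c_j/\Inf_j[f|_{\rho_v}]$ is the greedy ratio; since the influence weights sum in expectation to $\TInf[f]$, if one could bound $r_v \le \optavg_\eps/\eps$ along the path then the uncapped cost would be $O(\optavg_\eps\cdot\TInf[f]/\eps)$ in expectation, and Markov's inequality against the cap $B$ would give a cap-triggering probability of $O(\eps)$, completing the proof. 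Making this precise is the main obstacle: the ratios $r_v$ are \emph{not} uniformly bounded, since coordinate influences shrink as $\rho_v$ deepens, so one must instead show that with probability $1 - O(\eps)$ over $\bx$ the greedy path never reaches a restriction on which \emph{every} surviving cheap coordinate has cost-to-influence ratio exceeding $\optavg_\eps/\eps$. I would establish this by charging against the cheap-only near-optimal tree from the previous paragraph---intuitively, if that tree still evaluates $f$ correctly and cheaply from $\rho_v$ using only cheap coordinates, it certifies that some cheap coordinate at $\rho_v$ has small cost-to-influence ratio---but carrying this charging out node by node along the greedy path, whose coordinates need not be among those the witness tree ever queries, is exactly the delicate step, and is where the second factor of $1/\eps$ enters. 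The remaining bookkeeping---adding the $O(\eps)$ discarding error to the $O(\eps)$ cap-triggering error, and tightening the $O(\cdot)$ constants to match the stated bound $\optavg_\eps\cdot\TInf[f]/\eps^2$---is routine.
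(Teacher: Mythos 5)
Your high-level architecture is in the right spirit --- a greedy influence-to-cost rule, a Markov-style threshold to discard expensive coordinates, a budget cap, and a telescoping argument showing that the influence weights along a root-to-leaf path sum (in expectation) to $\TInf[f]$ --- and all of those pieces do appear in \cite{blanc2021query}. But the proposal has a genuine gap exactly where you flag it, and that gap is fatal as written. The claim you need --- that whenever the greedy is still running, the best cheap coordinate satisfies $c_i/\Inf_i[f|_\rho] \le \optavg_\eps/\eps$ --- is simply \emph{false} for the algorithm you describe. Your stopping rule ``while $f|_\rho$ is non-constant as a function of the cheap coordinates'' keeps the greedy alive long after every remaining cheap coordinate has negligible influence: if every surviving cheap coordinate has influence, say, $\eps^2$ but cost $\approx\tau$, the greedy will still invest, paying ratio $\tau/\eps^2 = \optavg_\eps/\eps^3$. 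So the expected uncapped cost is not $O(\optavg_\eps\TInf[f]/\eps)$, and the Markov-against-the-cap step collapses. Crucially, the charging you sketch (``if a cheap, low-error witness tree still works at $\rho$, then some coordinate must have a small ratio'') is not something you can afford to leave informal: it \emph{is} the OSSS inequality (\Cref{thm:osss-JZ-version}), in the form of Lemma~3.1 of \cite{blanc2021query}, restated as \Cref{lemma:OSSS-consequence} in this paper --- $\max_i \Inf_i[f_\rho]/c_i \ge (\bias(f_\rho) - \error_{f_\rho}(\calA))/\costavg_c(\calA)$ for any query algorithm $\calA$. That inequality is the load-bearing tool of the whole proof, and your proposal never invokes it.

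Once you put OSSS in, the stopping rule has to change accordingly, and then the argument closes. The correct rule (as in \cite{blanc2021query} and mirrored in the paper's own \IPRR, \Cref{alg:iprr}) is to halt as soon as $\max_i \Inf_i[f_\rho]/c_i$ drops below $\eps/\tau$ with $\tau = \optavg_\eps/\eps$. With that rule, the ratio bound $r_v\le\tau/\eps$ holds by construction at every surviving node, so your cost telescoping gives expected cost $\le(\tau/\eps)\TInf[f]=\optavg_\eps\TInf[f]/\eps^2$, and no cap-plus-Markov step is even needed for the cost bound. For the error, OSSS applied to the restriction of an $\eps$-error near-optimal tree $T$ shows that the stopping condition forces $\bias(f_\rho) \le \eps + \dist(T_\rho, f_\rho)$, so the expected bias at stopping is $\le 2\eps$ (this is precisely the calculation in the proof of \Cref{prop:thresholded-IPRR-on-correct-cost}; your own ``discarding'' argument needs the same ``restriction does not increase error'' averaging, which you gesture at but do not write). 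In short: your identities and the supermartingale intuition are fine, and your threshold-at-$\tau$ device is the same Markov step the paper uses (\Cref{eq:almost-free}), but you are missing the OSSS inequality and, as a consequence, have chosen a stopping rule under which the central inequality you need is false.
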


The algorithm of Blanc, Lange, and Tan~\cite{blanc2021query} establishing~\Cref{thm:BLT} follows a simple but powerful strategy: repeatedly query the variable $x_i$ that maximizes the ratio of its influence $\Inf_i[f]$ to its (known) cost $c_i$, updating the function with its restrictions as coordinates are revealed.
A natural first attempt in the \emph{unknown-cost} setting is to adapt this rule by maintaining a cumulative investment vector $\theta$ proportional to the vector of coordinate influences. 

This serves as a warm-up to our main algorithm: we show that this natural online adaptation is competitive with respect to the weakest benchmark, $\optworst_0$, up to a multiplicative factor of the total influence; see~\Cref{thm:warmup-iprr} for a formal statement. 
Even in this preliminary setting, our analysis already departs from~\cite{blanc2021query}: we require a martingale argument to control the evolution of partial influences as costs are gradually revealed.

However, recall from~\Cref{eq:lego} that $\optworst_0$ is the weakest benchmark we can hope to be competitive against. 
If we wish to compete against the stronger quantities $\optavg_\eps$ or even $\optavg_0$, this specific ``influence-proportional'' query strategy \emph{provably fails}. 
In particular, we construct instances where its competitive ratio is $\TInf[f]\cdot\wt{\Omega}(\sqrt{n})$, polynomially worse than the offline guarantee achieved by~\cite{blanc2021query} (cf.~\Cref{thm:BLT}). 
We refer the reader to~\Cref{prop:warmup-iprr-hard-instance} for the formal statement and proof, and to~\Cref{fig:address-func} for an illustration of the hard instance, which is based on a modification of the classical ``address function'' from Boolean function analysis~\cite{odonnell-book}. 

\paragraph{Our Main Algorithmic Result.} 

Our main result builds on the online adaptation of~\cite{blanc2021query}’s algorithm, while taking into account the specific structure of the hard instance discussed above.
In essence, the $\poly(n)$ degradation relative to \Cref{thm:BLT} arises from the algorithm's \emph{local} stopping condition: 
the algorithm described above terminates once the function becomes $\eps$-close to constant. 
While this stopping condition suffices for benchmarking against $\optworst_0$, we show that it fails to be competitive against the benchmark $\optavg_\eps$, as local termination can lead to significant over-investment when the input comes from a small but ``difficult'' subset of the hypercube, whereas the optimal $\eps$-error strategy could safely ignore or ``give up'' on this region. 

This motivates our main algorithmic contribution: we continue to use influence as the guiding statistic but replace the local stopping rule with a more \emph{global} termination condition that aggregates progress across coordinates.
The modification may appear minor, but its analysis is substantially more intricate: in addition to the martingale argument from the warm-up algorithm, it requires additional ``testing'' and ``consistency'' steps. 
We give a detailed technical overview of these ideas in~\Cref{sec:proof-overview} and state our main result below.

\begin{theorem}[Informal version of \Cref{thm:iprr-main} and \Cref{prop:thresholded-IPRR-on-correct-cost}]\label{thm:iprr-informal}
    Let $f\isazofunc$, $\eps \in (0, 0.5)$, and $c \in \R^n_{\geq 0}$ be an \emph{unknown} cost vector.
    There is an efficient algorithm \fullAlg~(cf.~\Cref{alg:full-alg}) which computes $f$ to error $O(\eps)$ and an expected cost of
    \[
        \optavg_\eps \cdot O\left(\frac{\TInf[f]\log n}{\eps^3} \cdot \log\pbra{\frac{\log \optavg_\eps}{\eps}}\right).
    \]
\end{theorem}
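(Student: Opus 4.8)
The plan is to build on the ``influence-proportional'' warm-up strategy while replacing its local stopping rule (stop once the restricted function is $\eps$-close to constant) with a \emph{global} termination condition that aggregates progress across coordinates, and then to control the error and cost via a combination of the martingale argument already used in the warm-up together with new testing and consistency checks. The overall structure of \fullAlg~should be: (i) an outer loop that repeatedly invests in coordinates in proportion to their current coordinate influences in the restricted function $f_\rho$, revealing variables as the cumulative investment $\theta_i$ crosses the (unknown) threshold $c_i$; (ii) a global progress measure---the natural candidate is the total influence $\TInf[f_\rho]$ of the restriction, or a closely related potential that tracks how much of the ``average difficulty'' has been resolved---which triggers termination once enough mass of the hypercube has been reduced to near-constant behavior; and (iii) periodic ``testing'' steps that estimate this global quantity from the revealed coordinates so that the algorithm knows when to stop, plus ``consistency'' checks ensuring the guesses it commits to are compatible with what has been revealed.

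The first main step is to set up the right potential and show a \emph{per-phase progress lemma}: in each phase, either a substantial fraction of the remaining influence is resolved, or the algorithm can safely declare the answer on the current (large) portion of inputs. This is where the global-versus-local distinction does its work: unlike the warm-up, we do not insist on driving $f_\rho$ to constant on the \emph{current} input; instead we argue that, averaged over $\bx \sim \zo^n$, the fraction of inputs still ``in play'' after $k$ phases decays, so that after $O(\log(1/\eps))$-ish many doublings of the investment scale the leftover mass contributes at most $O(\eps)$ to the error. Here I expect to need a martingale / supermartingale argument: as costs are revealed, the conditional coordinate influences $\Inf_i[f_\rho]$ evolve, and I want to show the expected total influence does not spike---this is exactly the departure from \cite{blanc2021query} flagged in the warm-up, and the same machinery (with the global potential substituted for the local one) should carry over.

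The second main step is the cost accounting against the strong benchmark $\optavg_\eps$. The idea is a charging argument: on any input $x$, the investment \fullAlg~makes in coordinate $i$ before revealing it is at most $O(1)$ times a geometric series in the scale parameter, and because investment is influence-proportional, the total investment at a given scale $\lambda$ is $\lambda \cdot \TInf[f_\rho]$; summing the geometric series over the $O(\log(\text{final scale}))$ scales gives the $\TInf[f] \cdot (\text{number of scales})$ factor. The number of scales needed is bounded by $\log$ of the largest cost any near-optimal offline strategy would pay, which is why $\log\optavg_\eps$ appears; the extra $\log n$ comes from the testing steps (a union bound over $n$ coordinates to guarantee the influence estimates used in the global stopping rule are accurate), and the $1/\eps^3$ from (one $1/\eps$ for the error budget per phase, one for the number of phases, and one absorbed from the estimation accuracy needed in the consistency checks)---I would track these carefully but they are routine once the progress lemma is in place. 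The testing cost itself must also be charged: each test reads only already-revealed coordinates or cheap ones, so it contributes lower-order terms, but verifying this requires care. Finally, \Cref{prop:thresholded-IPRR-on-correct-cost} presumably handles the case where the algorithm is additionally told (or correctly guesses) the relevant cost scale, removing one source of overhead; I would invoke it as a black box to get the cleaner bound and then pay the extra logarithmic factors to reduce the general case to it.

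The hardest part, I expect, is the \emph{global termination / consistency} analysis: proving that the aggregate progress measure can be estimated accurately enough from the revealed coordinates---without paying too much to reveal them---to know when to stop, while simultaneously ensuring that ``giving up'' on the residual difficult region is both (a) cheap, because the optimal $\eps$-error strategy also essentially ignores it, and (b) only costs $O(\eps)$ in error. Pinning down the right global potential so that these two requirements are simultaneously satisfiable, and marrying its evolution to the martingale bound on the conditional influences, is the technical crux; the rest is a (delicate but standard) geometric-series cost charge plus union-bounded concentration for the tests.
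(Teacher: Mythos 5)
Your high-level plan (influence-proportional investment, a budget guess tested and doubled, a martingale argument controlling how influences evolve) is headed in the right direction, but there are concrete gaps at the points where the proof actually has to close.

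\textbf{The stopping rule and OSSS.} You propose stopping by tracking an \emph{aggregate} potential such as $\TInf[f_\pi]$. The paper's \IPRR{} does something different and simpler: it stops once $\max_i \Inf_i[f_\pi]/\theta_i < \eps/B$ (equivalently $\theta_i > (B/\eps)\Inf_i[f_\pi]$ for every unrevealed $i$), where $B$ is the current budget guess. This is not an aggregate quantity, and crucially it plugs directly into the OSSS inequality (\Cref{lemma:OSSS-consequence}): at the stopping restriction, $\eps/B \ge \max_i \Inf_i[f_\pi]/c_i \ge \bigl(\bias(f_\pi)-\dist(T_\pi,f_\pi)\bigr)/\costavg(T_\pi)$ for the worst-case-optimal tree $T$. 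Your proposal never invokes OSSS, and without it you have no mechanism to convert ``small influence-to-cost ratio'' into ``small error''; a $\TInf$-based stopping rule would not give you this (a restriction can have small total influence but nontrivial bias, or vice versa).

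\textbf{The reduction to $\optworst_\eps$.} You aim to charge cost directly against $\optavg_\eps$. The paper instead proves cost competitiveness against $\optworst_\eps$ (where OSSS applies cleanly, since the offline comparator's per-restriction cost is uniformly bounded) and then bridges to $\optavg_\eps$ via the Markov truncation $\optworst_{2\eps} \le \optavg_\eps/\eps$. This is the source of one of the three $1/\eps$ factors. This step is structurally necessary, not an optimization, and it is absent from your accounting.

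\textbf{Where the $\log n$ comes from.} You attribute the $\log n$ to a union bound over coordinates when estimating influences during testing. But the algorithm is given an explicit representation of $f$ and computes $\Inf_i[f_\pi]$ exactly; no influence estimation is performed. The $\log n$ arises from Doob's inequality applied to the per-coordinate martingale $\Inf_i[f_{\bpi_t}]$, giving $\Ex\bigl[\max_\pi \Inf_i[f_\pi]\bigr] \le \Inf_i[f]\bigl(1+\ln(1/\Inf_i[f])\bigr)$, followed by Jensen to convert $\sum_i \Inf_i[f](1+\ln(1/\Inf_i[f]))$ into $O(\TInf[f]\log n)$. Related, your martingale is on the \emph{total} influence, whereas the paper's is per coordinate, which is what the Doob step requires. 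Finally, the ``testing'' in \fullAlg{} estimates the error rate of $\IPRR(f,\eps,B_i)$ on $m_i = \Theta(\eps^{-1}\log(i/\eps))$ fresh uniform inputs to decide whether the budget guess $B_i=2^i$ is large enough; it is a doubling search over budgets, not a geometric cascade of investment scales, and the $\log(\log\optavg_\eps/\eps)$ factor is exactly $\log m_{i_0}$ for $i_0 \approx \log\optworst_\eps$.

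In short: the proposal correctly identifies that the stopping rule must be replaced and that a martingale plus testing/doubling are needed, but it is missing the OSSS-based stopping condition, the Markov truncation to $\optworst_\eps$, and the correct provenance of the $\log n$ factor. These are not polish; each is load-bearing in the paper's argument.
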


\Cref{thm:iprr-informal} thus extends the guarantee of~\cite{blanc2021query} to the online setting, matching its performance up to logarithmic factors. 
We note that in both \Cref{thm:BLT} and \Cref{thm:iprr-informal}, efficiency is measured with respect to the given representation of the function $f: \zo^n \to \zo$. 

\paragraph{Beyond the General Case.}

Beyond~\Cref{thm:iprr-informal}, which holds for arbitrary Boolean functions, we also design online query strategies for two basic and well-studied families of functions in the query-strategies literature: \emph{symmetric functions}, and functions computed by \emph{low-depth decision trees}.
We start with our guarantee for the former: 

\begin{theorem}[Informal version of~\Cref{thm:symmetric}]
\label{thm:symmetric-informal}
    For any symmetric function $f\isazofunc$, $\WIPRR(f, \eps)$ (\Cref{alg:warmup-iprr}) is an efficient $\eps$-error algorithm with an expected cost of
    \[
        \optavg_0 \cdot O\pbra{\log{\frac{1}{\eps}}}.
    \]
\end{theorem}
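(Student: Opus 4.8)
The plan is to use the symmetry of $f$ to reduce $\WIPRR$ to a round‑robin strategy, and then to compare it step by step against the optimal offline strategy, which turns out to be another round‑robin. Because $f$ is symmetric, every coordinate has the same influence, so $\WIPRR$'s rule of keeping the investment vector $\theta$ proportional to the influences degenerates to raising all coordinates' cumulative investments in lockstep; hence $\WIPRR$ reveals variables in nondecreasing order of cost. Write $c_{(1)}\le\cdots\le c_{(n)}$ for the sorted costs and set $c_{(0)}:=0$. At the instant the $t$ cheapest variables have just been revealed, the total investment equals $\sum_{j\le t}c_{(j)}+(n-t)c_{(t)}$, and $\WIPRR$ halts there iff the restriction of $f$ by the revealed bits is $\eps$‑close to a constant; since $f=g(|\cdot|)$ is symmetric, this restriction is the map $w\mapsto g(a_t(x)+w)$ on the remaining $n-t$ coordinates, where $a_t(x):=x_{(1)}+\cdots+x_{(t)}$, so its $\eps$‑closeness is an event about $a_t(x)$ alone. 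Writing $k^*(x)$ for the first $t$ at which $\WIPRR$ halts and summing the per‑step cost over the run, $\WIPRR$'s expected cost over $\bx\sim\zo^n$ is
\[
\sum_{t=0}^{n-1}\Pr\big[k^*(\bx)>t\big]\,(n-t)\,\big(c_{(t+1)}-c_{(t)}\big).
\]
A standard exchange argument, using the symmetry of $f$ (cf.\ the sequential‑testing literature), shows that the optimal zero‑error offline strategy also inspects in nondecreasing cost order and halts at the first $t$, call it $K_{\off}(x)$, at which $f$ is \emph{exactly} determined — equivalently, at which $g$ is constant on the feasible weight interval $I_t(x):=[a_t(x),\,a_t(x)+n-t]$ — so that $\optavg_0=\sum_{t\ge0}\Pr[K_{\off}(\bx)>t]\,c_{(t+1)}$. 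Since ``$g$ constant on $I_t$'' implies ``$\eps$‑close,'' we have $k^*(x)\le K_{\off}(x)$ pointwise.

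The next step is to reduce the competitive ratio to one per‑$t$ inequality. An Abel summation rewrites $\optavg_0=\sum_{t}\big(c_{(t+1)}-c_{(t)}\big)Q_t$ with $Q_t:=\sum_{s\ge t}\Pr[K_{\off}(\bx)>s]=\Ex\big[(K_{\off}(\bx)-t)^+\big]$; since the increments $c_{(t+1)}-c_{(t)}$ are nonnegative, the competitive ratio is at most $\max_t\tfrac{\Pr[k^*(\bx)>t]\,(n-t)}{Q_t}$ (a vanishing $Q_t$ forces, via $k^*\le K_{\off}$, a vanishing numerator). Using $k^*\le K_{\off}$ once more, $Q_t\ge\Ex\big[(K_{\off}(\bx)-t)\,\1{k^*(\bx)>t}\big]=\Pr[k^*(\bx)>t]\cdot\Ex\big[K_{\off}(\bx)-t\mid k^*(\bx)>t\big]$, so it is enough to prove, for every $t$ (vacuously when the conditioning event is null),
\[
n-t\;\le\;O\!\left(\log\tfrac1\eps\right)\cdot\Ex\big[\,K_{\off}(\bx)-t\ \big|\ k^*(\bx)>t\,\big].
\]
Informally: conditioned on $\WIPRR$ not having stopped after revealing the $t$ cheapest variables, the offline strategy is expected to run for at least a $1/O(\log\tfrac1\eps)$ fraction of the $m:=n-t$ remaining coordinates longer.

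I would prove this by splitting on $m$. If $m\le C\log\tfrac1\eps$ for a suitable absolute constant $C$, the inequality is immediate, since the right‑hand side is at least $1$ (on $\{k^*>t\}$ we have $K_{\off}>t$). If $m>C\log\tfrac1\eps$: on $\{k^*(\bx)>t\}$ the restriction $w\mapsto g(a_t(\bx)+w)$ on $m$ coordinates is not $\eps$‑close to constant; a binomial tail bound converts this into the assertion that $g$ has a jump within distance $\Delta_\eps(m)=O\!\big(\sqrt{m\log\tfrac1\eps}\big)$ of the midpoint $a_t(\bx)+\tfrac m2$ of $I_t(\bx)$ — indeed, were $g$ constant with some value $v$ on the integer interval $[a_t(\bx)+\tfrac m2-\Delta_\eps(m),\,a_t(\bx)+\tfrac m2+\Delta_\eps(m)]$, then $\Pr_{z}\big[g(a_t(\bx)+|z|)\neq v\big]\le\Pr\big[\mathrm{Bin}(m,\tfrac12)\notin[\tfrac m2\pm\Delta_\eps(m)]\big]\le\eps$. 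Choosing $C$ large enough makes $\Delta_\eps(m)\le m/4$, so this jump lies inside $I_t(\bx)$ and is at distance at least $m/2-\Delta_\eps(m)\ge m/4$ from each endpoint of $I_t(\bx)$. Since the offline strategy can remove a jump from its feasible interval only by shrinking that interval past it, and each further query moves each endpoint inward by at most one, it follows that $K_{\off}(\bx)-t\ge m/4$ \emph{surely} on $\{k^*(\bx)>t\}$; hence $n-t=m\le 4\,(K_{\off}(\bx)-t)$, which gives the desired inequality (here even without the logarithmic factor). Combining the two cases yields competitive ratio $O(\log\tfrac1\eps)$. Finally, the $\eps$‑error bound is routine — at each halt $\WIPRR$ outputs the majority value of the current $\eps$‑close restriction, which disagrees with the conditionally uniform unrevealed bits of $\bx$ with probability at most $\eps$ — and the algorithm is efficient since testing $\eps$‑closeness of a restriction of a symmetric function reduces to summing binomial coefficients.

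The step I expect to be the main obstacle is the case $m>C\log\tfrac1\eps$: getting the implication ``not $\eps$‑close $\Rightarrow$ $g$ has a jump within $\Delta_\eps(m)$ of the midpoint'' with the correct quantitative window, and then carefully tracking how the offline feasible interval is forced to shrink past that jump. The converse implication — a jump near the midpoint forcing non‑$\eps$‑closeness — actually fails when $g$ has several jumps clustered near the midpoint, but only the direction stated above is needed. A secondary point requiring care is the exchange argument identifying the optimal zero‑error offline strategy with cost‑sorted inspection; although standard for symmetric / $k$‑out‑of‑$n$ functions, it underlies the lower bound $\optavg_0\ge\sum_{t\ge0}\Pr[K_{\off}(\bx)>t]\,c_{(t+1)}$ and so should be spelled out.
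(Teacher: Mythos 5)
Your proposal is correct and takes essentially the same approach as the paper: observe that \WIPRR{} degenerates to round-robin on a symmetric function, so both it and the optimal offline strategy reveal variables in cost order; reduce via linearity in the cost vector (your Abel summation is the paper's observation that sorted costs are conic combinations of step functions $c_i = \1{i \ge \istar}$) to a per-index comparison; and finally prove that the offline stopping time on the restricted $m$-variable symmetric function of bias $\ge \eps$ has expectation $\Omega(m/\log(1/\eps))$, splitting on whether $m \lesssim \log(1/\eps)$. The only noticeable variation is in the last lemma: the paper argues that with probability $\Omega(1)$ the offline algorithm cannot have stopped after $m/2$ reveals (since that would force $f$ to be constant on a central Hamming-weight window, contradicting $\bias(f) \ge \eps$ by Chernoff), whereas you argue deterministically that non-$\eps$-closeness forces a jump of $g$ within $O\bigl(\sqrt{m\log(1/\eps)}\bigr)$ of the midpoint of the feasible weight interval, so the offline strategy must make at least $m/4$ further reveals surely; both rest on the same concentration-of-measure idea and yield the same bound.
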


We note that many natural Boolean functions including the $\AND$, $\OR$, and $\MAJ$ functions are symmetric. 
Indeed, previous works~\cite{gkenosis2018stochastic,hellerstein2018stochastic,hellerstein2022adaptivity,hellerstein2024quickly} have specifically designed query strategies for symmetric functions. 
We also note that the $O(\log(1/\eps))$ competitive ratio cannot be improved in general: when $\eps = 2^{-(n+1)} < 2^{-n}$, the algorithm $\WIPRR(f, \eps)$ is a zero-error algorithm and computes any symmetric function with an expected cost of $\optavg_0 \cdot O(n)$. 
(The algorithm \WIPRR{} is deterministic, so its error probability on $\bx \sim \zo^n$ must be a multiple of $2^{-n}$.) 
This matches our lower bound for the AND function (\Cref{prop:and-lower-bound}).

We also design query strategies for functions which can be represented as a decision tree with small average depth. 
We consider the setting where the decision tree is given to the query algorithm, as well as the setting where the small average depth decision tree is unknown. 

\begin{theorem}[Informal version of \Cref{thm:follow-the-tree-general} and \Cref{remark:unknown-tree}]\label{thm:follow-the-tree-general-informal}
    There is an $\eps$-error algorithm that, given a decision tree representation of $f$ with average depth $d$, has expected cost 
    \[
        \optavg_0 \cdot \frac{d}{\eps}.
    \]
    Moreover, only assuming that $f$ has a decision tree representation of average depth $d$, there is an $\eps$-error algorithm that, without knowing $d$ or the decision tree in advance, runs in time $\poly(n)\cdot (d/\eps)^{O(d/\eps)}$ and gives an expected cost of
    \[
        \optavg_0 \cdot O\left(\frac{d}{\eps}\right).
    \]
\end{theorem}

In general, the bound given by~\Cref{thm:follow-the-tree-general-informal} is incomparable to the $\optworst_0 \cdot O((\TInf[f] \log n)/ \eps)$ bound guaranteed by~\Cref{thm:warmup-iprr-informal}. 
Note that~\Cref{thm:follow-the-tree-general-informal} is stronger in that it competes with the average-case benchmark $\optavg_0$, yet the $d/\eps$ competitive ratio can be either higher or lower than $(\TInf[f]\log n)/\eps$. 
Indeed, note that while the bound $d \ge \TInf[f]$ holds in general, both $d = \TInf[f]$ (e.g., parity functions) and $d \gg \TInf[f]$ (e.g., $\Tribes$) can be true.

\subsection{Related Work}
\label{subsec:related-work}

We briefly survey related work in both the query-strategies and online-algorithms literature.
In addition, our notion that a variable is revealed once the cumulative investment in it exceeds an unknown cost threshold bears a superficial resemblance to ``threshold queries’’ studied in differential privacy and related areas~\cite{bun2017make,cohen2024lower}, though the connection appears to be largely informal. 

\paragraph{Query Strategies for Priced Information.}
The work of Charikar et al.~\cite{charikar2000query} measured the performance of a query strategy on an instance-by-instance basis. 
Specifically, for each unknown input $x\in\zo^n$, they compare the cost incurred by an algorithm $\calA$ to the minimal cost of a certificate for $f(x)$.\footnote{A \emph{certificate} for $f(x)$ corresponds to a subset of coordinates $S\sse[n]$ whose values, when fixed according to $x$, completely determine the value of the function as $f(x)$.} 
Their goal, then, is to design algorithms that minimize this ratio for all $x\in\zo^n$. 
Their approach thus gives a worst-case instance-by-instance guarantee; in contrast, we compare the \emph{expected cost} of $\calA$ on a uniformly random input $\bx\sim\zo^n$ to the optimal expected cost. 
Charikar et al.~\cite{charikar2000query} give query strategies for functions computable by $\AND$-$\OR$ trees, with subsequent works considering other function classes (e.g.~game trees, monotone functions, and threshold functions) as well as related algorithmic tasks such as search and sorting with prices. A partial and incomplete list includes~\cite{gupta2001sorting,kannan2003selection,cicalese2005new,cicalese2005optimal,cicalese2008function,cicalese2011competitive,cicalese2011threshold,cicalese2011beyond}, among many others. 

The work of Kaplan, Kushilevitz, and Mansour~\cite{kaplan2005learning} considers query strategies within learning models with attribute costs. 
Their problem formulation can be viewed as an average-case version of that of~\cite{charikar2000query}, where an algorithm's performance is measured by comparing the expected cost of a strategy to the expected cost of the cheapest certificate for $f(\bx)$. 
Notably,~\cite{kaplan2005learning} consider arbitrary distributions on $\zo^n$ but restrict their attention to studying functions represented by disjunctions, CDNF formulas, and read-once DNF formulas. 

Our problem formulation is inspired by that of the \emph{Stochastic Boolean Function Evaluation} problem considered by Deshpande, Hellerstein, and Keletenik~\cite{deshpande2014approximation}, with a long line of subsequent work~\cite{bach2018submodular, allen2017evaluation, hellerstein2018stochastic,blanc2021query,grammel2022algorithms,hellerstein2022adaptivity,hellerstein2024quickly,ghuge2024nonadaptive}. 
As in our setup, the work of~\cite{deshpande2014approximation} measures the performance of an algorithm according to the expected cost on a random input $\bx$ drawn from a (known) product distribution on $\zo^n$. 
However, much of the focus has been on designing algorithms tailored to structured function classes such as CDNF formulas, decision trees, and linear threshold functions.

The recent work of Blanc, Lange, and Tan~\cite{blanc2021query} considers the setup of~\cite{deshpande2014approximation}, but differs from previous results in two notable ways: first, they allow the algorithm to err on a small fraction of inputs; and second, they design algorithms that apply to \emph{all} Boolean functions, with performance guarantees in terms of the total influence of the function. 
Our work aligns with theirs in both of these aspects. 

A key distinction of our work is that all previous algorithms for querying priced information---including those listed above---assume that the costs of revealing input variables are known to the algorithm beforehand.

\paragraph{Online Algorithms} 

Our work also connects to the broader literature on online algorithms. 
In particular, the problem we consider shares conceptual similarities with both classical and contemporary models such as Pandora's Box~\cite{weitzman1978optimal} and the Markovian price-of-information~\cite{gupta2019markovian}. 
A growing body of recent work~\cite{singla2018price,chawla2020pandora,qiao2023online,drygala2023online} further develops related ideas in online decision-making under cost constraints, and the literature in this area is by now too extensive to survey comprehensively. 
However, as mentioned earlier, the objectives in these models are typically \emph{reward-based}---aimed at maximizing expected value---rather than computing a specific, known function of an unknown input. 

\subsection{Discussion}
\label{subsec:discussion}


Much remains to be understood even within the online priced query model introduced in this paper. 
A substantial body of work has designed query strategies tailored to specific classes of Boolean functions in the offline setting. 
This includes monotone functions~\cite{cicalese2005optimal}, $\AND$-$\OR$ and game trees~\cite{charikar2000query,cicalese2005optimal}, subclasses of DNFs~\cite{kaplan2005learning,deshpande2014approximation,allen2017evaluation}, threshold functions and their generalizations~\cite{deshpande2014approximation,gkenosis2018stochastic,ghuge2024nonadaptive}, and symmetric functions~\cite{gkenosis2018stochastic,hellerstein2018stochastic,hellerstein2024quickly}. 
In particular, additional structural information about the target function can often be effectively leveraged to design better query strategies in the offline setting. 
It is natural to ask whether similar improvements can be obtained in the online setting. 

We further highlight two promising directions for future investigation:
\begin{itemize}

    \item \textbf{Partial Information Settings}: The model considered in this paper assumes that the algorithm begins with no knowledge of the variable costs. 
    However, a natural extension is to consider \emph{partial information} settings, where the algorithm has access to some side information about the costs---perhaps through a distributional prior, historical data, or external advice. 
    Exploring how such partial knowledge affects the design and performance of online query strategies could likely yield connections to Bayesian optimization, bandit algorithms, and learning with advice. 

    \item \textbf{Dynamic Costs:} In many real-world scenarios the variable costs may evolve dynamically, driven by e.g.~external market forces or time. 
    In such settings, the algorithm must reason not only about \emph{which} queries to make, but also about \emph{when} to make them. 
    We believe that modeling such settings and designing effective query strategies with shifting prices pose interesting challenges, and are likely to reveal new connections to online learning and decision-making under uncertainty.  

\end{itemize}

\subsection{Organization}
\label{subsec:organization}

We give a technical overview of our results in~\Cref{sec:proof-overview}. 
After recalling preliminaries in~\Cref{sec:prelims}, we prove~\Cref{thm:iprr-informal,thm:symmetric-informal} in~\Cref{sec:iprr}. 
We then give our decision tree-based query strategies establishing \Cref{thm:follow-the-tree-general-informal} in~\Cref{sec:follow-the-tree}. 
Finally, we prove separations between the offline and online setting (\Cref{thm:tribes-lower-bound-informal}) in~\Cref{sec:tribes-lb}.

\section{Technical Overview}
\label{sec:proof-overview}

Our starting point is the greedy algorithm of Blanc, Lange, and Tan~\cite{blanc2021query} for the offline setting, where the variable costs are known to the algorithm. Their key idea is to use the \emph{influence} of a variable (\Cref{def:influence}) as a proxy for its ``importance'' in determining the value of function: the algorithm keeps querying the variable with the highest influence-to-cost ratio, until a certain stopping condition is met.

\paragraph{An Online Algorithm: Influence-Proportional Round Robin.} When the costs are unknown, we use a natural strategy: \emph{invest in the variables in proportion to their influences}. 
At each step, we examine the restricted function $f_{\pi}$ induced by the input bits that have been revealed so far. 
We identify the variable $x_i$ that maximizes the ratio
\[
    \frac{\Inf_i[f_\pi]}{\theta_i},
\]
where $\Inf_i[f_\pi]$ is the influence of $x_i$ with respect to $f_\pi$, and $\theta_i$ denotes the current investment in $x_i$. Then, we increment $\theta_i$ and, if $x_i$ gets revealed as a result, we update $f_\pi$ with the knowledge of $x_i$. In other words, we keep investing in the most under-invested variable relative to its influence, maintaining the invariant that the investment in each variable is roughly proportional to its influence. 
We term this query strategy ``Influence-Proportional Round Robin'' (IPRR).

The only remaining design choice is a stopping criterion. We start with a simple one: Let $\eps$ be the target error probability. The algorithm terminates whenever $f_{\pi}$ is $\eps$-close to a constant function, at which point it outputs the rounding of $\ex{\bx\sim\zo^n}{f_{\pi}(\bx)}$ to $\{0, 1\}$. We call the resulting algorithm \WIPRR{}.

\begin{algorithm}
    \caption{The $\WIPRR$ Algorithm (Informal version of \Cref{alg:warmup-iprr})}
    \label{alg:warmup-iprr-informal}

    \vspace{1em}
    \textbf{Input:} Succinct representation of $f:\zo^n\to\zo$, error parameter $\eps \in (0, 0.5]$ \\[0.25em]
    \textbf{Output:} A bit $b \in \zo$

    \ 

    $\WIPRR(f, \eps)$:
    \begin{enumerate}
        \item Initialize $\theta \leftarrow 0^n$ and $\pi \leftarrow \emptyset$. 
        \item While $\bias(f_\pi) > \eps$:
        \begin{enumerate}
            \item Let $i^\ast \in [n]$ be the index such that $i^\ast = \arg\max_i \frac{\Inf_i[f_{\pi}]}{\theta_i}\,$.
            \item Increment $\theta_{i^\ast}$. If $x_{i^\ast}$ is revealed to be $b_{i^\ast} \in \zo$, update $\pi \leftarrow \pi \cup \{x_{i^\ast} \mapsto b_{i^\ast}\}$.
        \end{enumerate}
        \item Output $\displaystyle\mathbf{1}\bigg\{\ex{\bx\sim\zo^n}{f_\pi(\bx)} \ge 1/2\bigg\}$.
    \end{enumerate}   
\end{algorithm} 

We establish the following guarantee on the performance of \WIPRR{}:

\begin{theorem}[Informal version of \Cref{thm:warmup-iprr}]\label{thm:warmup-iprr-informal}
    The algorithm $\WIPRR(f, \eps)$ (cf.~\Cref{alg:warmup-iprr}) is an $\eps$-error algorithm for computing $f\isazofunc$ with an expected cost of
    \[
        \optworst_0 \cdot O\left(\frac{\TInf[f] \cdot \log n}{\eps}\right).
    \]
\end{theorem}

The analysis of \WIPRR{} in \Cref{thm:warmup-iprr-informal} is quite different from that of the greedy algorithm of~\cite{blanc2021query}. Roughly speaking, the proof of~\cite{blanc2021query} uses the total influence of the restricted function as a measure of progress in computing the function value. They show that, whenever the greedy algorithm queries a variable $x_i$, the amount of progress is at least $\eps / \optworst_0$ times the cost of $x_i$. Since the algorithm makes a progress of $\le \TInf[f]$ in total, the total cost paid by the algorithm is upper bounded by $(\optworst_0 / \eps) \cdot \TInf[f] = \optworst_0 \cdot (\TInf[f] / \eps)$.

However, in the online setting, the analogous claim no longer holds for the \WIPRR{} algorithm. Before observing any input bits, \WIPRR{} might have already invested substantially in all the $n$ variables, making the total investment \emph{much} higher than the progress gained from revealing a single bit. This suggests that the ``progress-to-cost'' ratio cannot be lower bounded as in the offline setting.

Instead, our analysis controls the expected investment in each of the $n$ variables separately. We show that, when \WIPRR{} terminates, the investment in each variable $x_i$ is at most $\optworst_0 / \eps$ times the maximum of $\Inf_i[f_\pi]$, where $\pi$ ranges over all restrictions that \WIPRR{} encounters. In general, this maximum influence \emph{can} be much higher than $\Inf_i[f]$. Fortunately, the trajectory of $\Inf_i[f_\pi]$ throughout the execution forms a martingale bounded between $[0, 1]$, which allows us to upper bound the expectation of $\max_{\pi}\Inf_i[f_\pi]$ by $\Inf_i[f]$ up to a logarithmic factor. Summing over all $i \in [n]$ gives the upper bound of $\optworst_0 \cdot O(\TInf[f] \log(n) / \eps)$.

\paragraph{An Improved Bound for Symmetric Functions.} We prove \Cref{thm:symmetric-informal} by comparing how the optimal zero-error offline algorithm and the online algorithm \WIPRR{} work on a symmetric function. Knowing the costs of the variables, the optimal offline algorithm queries the variables in increasing order of costs. Furthermore, since the algorithm must have a zero error, it stops only if function $f$ reduces to a constant given the observed inputs.

For our online algorithm, since every variable is equally influential with respect to a symmetric function (or a restriction thereof), \WIPRR{} maintains the same investment level across all inputs that have not been revealed. Consequently, the variables also get revealed in increasing order of costs. There are two differences: (1) \WIPRR{} stops as soon as the restricted function become $\eps$-close to a constant; (2) the algorithm pays an additional cost for each variable that is not revealed at the end. \Cref{thm:symmetric-informal} follows from analyzing the stopping times of these two different yet similar processes.

\paragraph{Competing Against Average-Case Benchmarks.} To derive our algorithm that competes against $\optavg_\eps$ in \Cref{thm:iprr-informal}, we take a detour and see why the \WIPRR{} algorithm fails to even compete with $\optavg_0$ without incurring an $\widetilde\Omega(\sqrt{n})$ competitive ratio:

\begin{proposition}[Informal version of \Cref{prop:warmup-iprr-hard-instance}]\label{prop:warmup-iprr-hard-instance-informal}
    There is a function $f\isazofunc$ such that, for any accuracy parameter $\eps \in (0, 1/4)$, $\WIPRR(f, \eps)$ gives an expected cost of
    \[
        \optavg_0 \cdot \TInf[f] \cdot \widetilde{\Omega}(\sqrt{n}).
    \]
\end{proposition}

The hard instance in \Cref{prop:warmup-iprr-hard-instance-informal} is constructed as follows. First, we find a function $g(x)$ such that $\TInf[g] = \Theta(\sqrt{n})$, while \WIPRR{} pays an expected cost of $\optavg_0(g) \cdot \widetilde\Omega(\sqrt{n})$ on $g$.\footnote{Note that $g$ alone does not give the hard instance in \Cref{prop:warmup-iprr-hard-instance-informal}.} Then, we ``dilute'' $g(x)$ by constructing an alternative function $f(x, y)$ such that, over the randomness in $y$: (1) with probability $p \coloneqq 1/\sqrt{n}$, $f(x, y)$ is set to $g(x)$; (2) with probability $1 - p$, $f(x, y) = h(y)$ for some simple function $h$. Furthermore, each additional input bit in $y$ has a cost of zero. This dilution ensures that: (1) $\optavg_0(f) = \optavg_0(g) / \sqrt{n}$, since the offline algorithm for $f(x, y)$ needs to evaluate $g(x)$ only with probability $p = 1/\sqrt{n}$; (2) $\TInf[f] \le p \cdot \TInf[g] + O(1) = O(1)$; (3) The cost of \WIPRR{} on $f$ is a $p$-fraction of its cost on $g$, namely,
\[
    p\cdot \left[\optavg_0(g) \cdot \widetilde\Omega(\sqrt{n})\right]
=   \optavg_0(g) \cdot \widetilde\Omega(1)
=   \optavg_0(f) \cdot \TInf[f] \cdot \widetilde\Omega(\sqrt{n}).
\]

The hard instance above exploits the fact that \WIPRR{} uses a \emph{local} stopping criterion: the algorithm stops only if the current restriction is $\eps$-close to a constant function. Suppose that $\eps = \Omega(1)$. In the rare case that $f(x, y)$ is determined by $g(x)$ (which happens with probability $p = 1/\sqrt{n} \ll \eps$), \WIPRR{} would still pay the cost for evaluating $g(x)$. On the other hand, a more ``global'' algorithm might realize that $f(x, y)$ is $\eps$-close to $h(y)$, so it might as well compute the simpler function $h$ instead.

Indeed, our algorithm for competing against $\optavg_\eps$, denoted by \IPRR{}, simply replaces the stopping condition of \WIPRR{}: The algorithm takes an additional parameter $B$ as input, and stops by outputting the value $\ex{\bx\sim\zo^n}{f_\pi(\bx)}$ rounded to $\{0, 1\}$ whenever $f_\pi$---the current restriction of the function---satisfies the condition\footnote{This can be verified by checking whether $\theta_i$ exceeds $(B/\eps)\cdot\Inf_i[f_\pi]$ for every unrevealed variable $x_i$.}
\[
    \frac{\Inf_i[f_\pi]}{c_i} \le \frac{\eps}{B},~\forall i \in [n].
\]

Via an analysis similar to that of \Cref{thm:warmup-iprr-informal}, we show that $\IPRR(B)$ gives an $O(\eps)$ error as long as $B \ge \optavg_\eps / \eps$, and the resulting expected cost is $B \cdot O((\TInf[f] \log n) / \eps)$. In particular, the first part of \Cref{thm:iprr-informal} follows from setting $B = \optavg_\eps / \eps$, so that the expected cost is $\optavg_\eps \cdot O((\TInf[f] \log n) / \eps^2)$. When $\optavg_\eps$ is unknown, a natural strategy is to guess the value $B = 2^1, 2^2, 2^3, \ldots$ through doubling. For each guess of $B$, we check whether $\IPRR(B)$ is $O(\eps)$-error by sampling $\widetilde O(1/\eps)$ inputs uniformly at random from $\zo^n$, and estimate the empirical error of $\IPRR(B)$. If the error is indeed $O(\eps)$, we run $\IPRR(B)$ on the actual unknown input $x$. This ensures that we stop with high probability as soon as the guess of $B$ reaches $\optavg_\eps / \eps$. This testing procedure introduces an additional $\widetilde O(1/\eps)$ factor in the competitive ratio.

\paragraph{Functions with Small-Depth Decision Trees.} We prove \Cref{thm:follow-the-tree-general-informal} using algorithms that are significantly different from the \IPRR{} strategy. Rather than simultaneously investing in multiple variables in proportion to their influences, we focus on finding an ``efficient'' decision tree representation of the function, so that ``following the tree'' (i.e., querying variables as suggested by the decision tree) \emph{always} leads to a low competitive ratio, regardless of the variable costs.

Formally, we show that, if the given decision tree representation $T$ is \emph{everywhere $\tau$-influential} (i.e., every node in the tree queries a variable with influence $\ge \tau$), simply following tree $T$ gives a $(1/\tau)$-competitive algorithm. More generally, if $T$ has average depth $d$ and is not guaranteed to be everywhere-influential, we apply the pruning lemma of~\cite{BLQT22} to transform $T$ into an everywhere $(\eps / d)$-influential tree $T'$ while introducing an error $\le \eps$. Then, following the pruned tree $T'$ gives a $(d/\eps)$-competitive algorithm. Finally, when $T$ is not given, the learning algorithm of~\cite{BLQT22} allows us to compute an everywhere $\Omega(\eps/d)$-influential decision tree that approximates $f$ up to an error of $\eps$. This algorithm only requires query access to $f$, and runs in time $(d/\eps)^{O(d/\eps)}$. The knowledge of $d$ can be further removed by a standard doubling trick.

\paragraph{Competitive Ratio Lower Bounds.} 

We first prove the following lower bound against zero-error algorithms computing the $\AND$ function: 

\begin{proposition}[Informal version of \Cref{prop:and-lower-bound}] \label{prop:and-lower-bound-informal}
    Let $f\isazofunc$ be the $\AND$ function on $n$ variables. 
    Then every zero-error online algorithm computing $f$ has expected cost $\optavg_0 \cdot \Omega(n)$. 
\end{proposition}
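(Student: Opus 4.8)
The plan is to lower-bound the cost of any zero-error online algorithm for $\AND$ by showing it must effectively ``confirm'' that all $n$ bits are $1$ whenever the input is the all-ones string, and that confirming a bit requires paying its full cost. First I would establish the benchmark side: for $\AND$, the cheapest certificate on input $x = 1^n$ is the full set $[n]$ (every bit must be revealed to certify $f(x) = 1$), while on any $x$ containing a zero a single $0$-bit certifies $f(x) = 0$. A natural cost vector to consider is the uniform one, $c = (1, 1, \dots, 1)$ (or any fixed constant vector); then the optimal offline zero-error algorithm, knowing the costs, still must in the worst case pay $n$ on $x = 1^n$, but on a random $\bx \sim \zo^n$ it can query bits one at a time and stop at the first $0$, so $\optavg_0 = \sum_{k\ge 1} 2^{-(k-1)}\cdot(\text{something } O(1))$ — in fact $\optavg_0 = \Theta(1)$, since with probability $1/2$ it stops after one query, with probability $1/4$ after two, etc., and with probability $2^{-n}$ it pays $n$. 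So $\optavg_0 = O(1)$ for this cost vector.

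Next, the core argument: I would show that any zero-error online algorithm $\calA$, when run on the input $x = 1^n$ with cost vector $c = 1^n$, must incur cost $\ge n$. This is because, to output $1$ correctly on $1^n$, the algorithm must have $\theta_i \ge c_i = 1$ for every $i \in [n]$ at termination. Indeed, suppose some coordinate $i$ has $\theta_i < 1$ when $\calA$ halts on input $1^n$; then $x_i$ was never revealed, so $\calA$'s entire transcript (the sequence of investments and the revealed bits) is identical on input $1^n$ and on the input $x'$ that agrees with $1^n$ except $x'_i = 0$ — note $\theta_i < c_i$ under \emph{both} cost vectors since $c_i = 1$ in both, so $x_i$ stays hidden in both runs. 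But $f(1^n) = 1 \ne 0 = f(x')$, contradicting zero error. Hence $\theta_i \ge 1$ for all $i$, so the total cost $\sum_i \theta_i \ge n$. Since $x = 1^n$ occurs with probability $2^{-n}$ under $\bx \sim \zo^n$, this single input contributes only $2^{-n}\cdot n$ to the expected cost, which is \emph{not} enough on its own — so the uniform-cost vector is too weak, and I need a more adversarial cost vector.

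The fix is to make the costs very lopsided so that a random input is expensive in expectation even though confirming any particular zero is cheap; but $\AND$ is symmetric, so the relevant asymmetry must be in the costs, chosen adversarially against $\calA$. Here is the right setup: let the cost of bit $i$ be $c_i = M$ for all $i$, where $M$ is a large parameter, but this still gives $\optavg_0 = \Theta(M)$ and the expected cost of $\calA$ on $1^n$ is $2^{-n} M n$ — ratio still $2^{-n} n$, useless. The genuine obstacle, and the step I expect to be the crux, is that $\optavg_0$ for $\AND$ is inherently $\Theta(c_{\min})$-ish and the hard input $1^n$ is exponentially rare, so a naive ``pay for the all-ones string'' argument cannot give $\Omega(n)$. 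The resolution must be that the \emph{online} algorithm, not knowing the costs, cannot afford to be lazy on \emph{any} input: I would argue that on a random $\bx$, with constant probability the algorithm is forced to invest a constant amount in $\Omega(n)$ distinct coordinates before it can safely stop, because a zero-error algorithm that has invested $o(1)$ in coordinate $i$ cannot rule out $x_i = 0$, hence cannot output $1$, hence — on the event $\bx = 1^n$ — must keep going; and the adversary sets the costs so that this ``keep going'' is expensive. Concretely, I would let the adversary choose, \emph{after seeing $\calA$}, a single random coordinate $j$ and set $c_j$ enormous and all other $c_i$ tiny; then on inputs with $x_j = 1$ (probability $1/2$), $\calA$ must pay roughly $c_j$ to confirm $x_j$ before it can output $1$ — but it doesn't know which coordinate is expensive, so it wastes investment across many coordinates. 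Balancing this against $\optavg_0$ (which, knowing $c$, queries $j$ \emph{last}) should yield the $\Omega(n)$ gap. I would fill in the clean version of this adversary argument — likely via a Yao-type / averaging argument over the choice of the expensive coordinate — and verify that the indistinguishability step (investment below threshold $\Rightarrow$ transcripts coincide) goes through uniformly over the cost family, which is the one delicate point to check carefully.
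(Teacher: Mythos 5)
Your instinct to pick an adversarial cost vector and run an indistinguishability argument (``below the cost threshold the transcript is the same, so a zero-error algorithm can't stop'') is exactly right in spirit, but the specific adversary you land on---one coordinate with enormous cost, the other $n-1$ with tiny cost $\epsilon$---does not work, and this is not merely a detail you could ``fill in'': it is the crux and it fails. With that cost family, as soon as the online algorithm invests $\epsilon$ in any coordinate, it either reveals that bit (if cheap) or learns that the coordinate is the expensive one. So an online strategy that probes coordinates at level $\epsilon$ one at a time reveals cheap bits at a rate comparable to the offline algorithm, stops with probability $\approx 1/2$ at each probe, and pays only $O(\epsilon)$ in expectation---the same order as $\optavg_0$. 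The competitive ratio is $O(1)$, not $\Omega(n)$. The expensive coordinate only ever matters on $\bx = 1^n$, which, as you yourself noticed in your first attempt, is exponentially rare and cannot drive an $\Omega(n)$ gap.

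The construction that does work is a uniformly random \emph{permutation} of $\{1,2,\dots,n\}$ as the cost vector. Then the offline algorithm, knowing the ordering, queries in increasing order of cost and pays $\sum_i 2^{-(i-1)} i = O(1)$ in expectation. The online algorithm, by contrast, cannot reveal \emph{any} bit cheaply without knowing which coordinate has cost $1$. Formally (for a deterministic algorithm; average over randomness otherwise), simulate the algorithm on the ``all costs equal $n$'' instance so that nothing is revealed, and stop the simulation when either the algorithm outputs or total investment reaches $n/2$. This produces a deterministic investment profile $(a_1,\dots,a_n)$ with $\sum_i a_i < n/2$. Under a random permutation $c$, the union bound gives $\Pr[\exists i : c_i \le a_i] \le \sum_i a_i/n < 1/2$, so with probability $\ge 1/2$ the true run agrees with the simulation and nothing is revealed when total investment hits $n/2$; on that event a zero-error algorithm cannot have halted, so it has already paid $\ge n/2$, giving expected cost $\Omega(n)$. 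The essential feature your adversary lacks, and the permutation has, is a \emph{gradient of cost levels}: the online algorithm must spend $\Omega(k)$ across coordinates before it can even distinguish which coordinate has cost $k$, so it is forced to sink $\Omega(n)$ before learning anything, whereas with one spike and a flat floor the ``floor'' is discovered almost for free.
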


We prove \Cref{prop:and-lower-bound-informal} by choosing the costs of the $n$ variables in the $\AND$ function as a uniformly random permutation of $[n] = \{1, 2, \ldots, n\}$. The optimal offline algorithm queries the variables in ascending order of costs and incurs an $O(1)$ cost in expectation. The online algorithm, however, does not know which variables are the cheapest. Intuitively, the best strategy is to invest in the $n$ variables in a round robin fashion, which incurs an $\Omega(n)$ cost even before seeing a single input bit.

For \Cref{thm:tribes-lower-bound-informal}, recall that an $n$-variable $\Tribes$ function is the $\OR$ of $n / w = \Theta(n/\log n)$ disjoint tribes, each of which is the $\AND$ of $w = \Theta(\log n)$ variables. We choose the costs of the variables in each tribe as a random permutation of $[w]$, so that the $\Tribes$ instance consists of $n/w$ independent copies of the $\AND$ instance in \Cref{prop:and-lower-bound-informal}. To lower bound the cost of the online algorithm, we show that any low-error algorithm for $\Tribes$ must observe at least one variable from an $\Omega(1)$-fraction of the tribes. Then, by planting a $w$-variable $\AND$ instance as one of the $n/w$ tribes, we obtain an algorithm for the $\AND$ function that reveals at least one of the variables with probability $\Omega(1)$. By a strengthening of \Cref{prop:and-lower-bound-informal}, such an algorithm for $\AND$ must have an $\Omega(w)$ competitive ratio, which then translates into an $\Omega(\log n)$ competitive ratio for the $\Tribes$ instance.
\section{Preliminaries}
\label{sec:prelims}

We use boldfaced letters (e.g.~$\bx \sim \zo^n$) to denote random variables.
Unless explicitly stated otherwise, all probabilities and expectations will be with respect to the uniform distribution. 
All logarithms will be with respect to base-$2$, unless explicitly stated otherwise. 
We write $[n] := \{1,\dots,n\}$ and $\{e_i\}_{i=1}^n$ for the collection of standard basis vectors in $\R^n$,~i.e. $e_i = (0, \dots, 0, 1, 0, \dots, 0).$

\subsection{Boolean Functions}
\label{subsec:influences-OSSS}

Our notation and terminology follow~\cite{odonnell-book}. Given two Boolean functions $f,g:\zo^n\to\zo$, we define 
\begin{equation} 
\label{eq:hamming-dist}
	\dist(f,g) := \Prx_{\bx\sim\zo^n}[f(\bx) \neq g(\bx)]\,.
\end{equation}
Given $f:\zo^n \to \zo$, we will write
\[
    \bias(f) := \min\{\dist(f, 0), \dist(f, 1)\}\,.
\]

We recall the notion a variable's \emph{influence} on a Boolean function (see Chapter~2 of~\cite{odonnell-book} for further background and information) defined earlier in~\Cref{sec:intro}:

\infDef*

Given a query algorithm $\calA$, it will be convenient to write 
\[
    \error_{f}(\calA) := \dist(f,\calA)
\]
where we identify $\calA:\zo^n\to\zo$ with the decision tree induced by its computation. 
We will sometimes say that ``$\calA$ is an $\eps$-error query algorithm for $f$'' if $\error_f(\calA) \leq \eps$. 
For every index $i \in [n]$, we write 
\[
    \delta_i(\calA) := \Prx_{\bx\sim\zo^n}\sbra{\calA~\text{queries}~\bx_i}\,. 
\]
Note that the expected number of queries to compute $f$ via query algorithm $\calA$ can be written in terms of the $\delta_i(\calA)$'s as follows:
\[
 \sumi \delta_i(\calA).
\]
Our algorithms in~\Cref{sec:iprr} will rely on the well-known ``OSSS inequality'' from the analysis of Boolean functions~\cite{OSSS:05} (see also Chapter~8 of~\cite{odonnell-book}). 
The following variant of the OSSS inequality is due to Jain and Zhang~\cite{JainZhang2011}:

\begin{theorem}[OSSS inequality, \cite{JainZhang2011} version]
\label{thm:osss-JZ-version}
	For all functions $f:\zo^n\to\zo$ and query algorithms $\calA$, we have 
    {
	\[
		\bias(f) - \error_f(\calA) \leq \sumi \delta_i(\calA)\cdot \Inf_i[f]. 
	\]
    }
\end{theorem}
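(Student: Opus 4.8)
The plan is to establish this as a variant of the OSSS inequality, adapting its proof to an algorithm $\calA$ that need not compute $f$ exactly and to the error term. First I would reduce to the case of an \emph{optimally labelled} tree: for a fixed decision-tree structure, the error is minimised by labelling each leaf $\ell$ (inducing a restriction $\rho_\ell$) with the majority value of $f_{\rho_\ell}$, and this choice affects no $\delta_i(\calA)$; since any other labelling only increases $\error_f(\calA)$, it suffices to prove the bound for the optimally labelled $\calA$, for which $\error_f(\calA) = \Ex_{\ell}\sbra{\bias(f_{\rho_\ell})}$, the expectation being over the random leaf $\ell$ reached on a uniform input.

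Next I would pass to a martingale/concavity-defect formulation. Run $\calA$ on a uniform $\bx$, let $\rho_t$ be the restriction after $t$ queries, let $\tau$ be the stopping time, and let $Z_t$ be the probability that $f_{\rho_t}$ outputs $1$ on a fresh uniform assignment to the unqueried coordinates; then $(Z_t)_{t\le\tau}$ is a bounded martingale with $Z_0 = \Ex[f]$. Using $\min(z,1-z) = \tfrac12 - |z-\tfrac12|$ and $\bias(g) = \tfrac12 - |\Ex[g]-\tfrac12|$, the reduction above gives
\[
    \bias(f) - \error_f(\calA) \;=\; \Ex\sbra{\,|Z_\tau - \tfrac12|\,} - |Z_0 - \tfrac12| \;\ge\; 0,
\]
the nonnegativity being Jensen for the convex map $z\mapsto|z-\tfrac12|$. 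Since $|Z_t-\tfrac12|$ is then a submartingale, telescoping its Doob decomposition yields
\[
    \bias(f) - \error_f(\calA) \;=\; \sum_{v}\Prx\sbra{\calA\text{ reaches }v}\cdot\Gamma_v,
\]
where the sum is over internal nodes $v$ of $\calA$ and $\Gamma_v := \tfrac12\pbra{|Z_v^{0}-\tfrac12| + |Z_v^{1}-\tfrac12|} - |Z_v-\tfrac12| \ge 0$ is the convexity defect at the split at $v$, with $Z_v^{0},Z_v^{1}$ the values of $Z$ at $v$'s two children. An elementary computation gives $0 \le \Gamma_v \le \tfrac12\,|Z_v^{0}-Z_v^{1}| \le \tfrac12\,\Inf_{i_v}\sbra{f_{\rho_v}}$, where $i_v$ is the variable queried at $v$.

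The remaining (and hardest) step is to bound $\sum_v\Prx\sbra{\calA\text{ reaches }v}\,\Gamma_v$ by $\sum_i\delta_i(\calA)\,\Inf_i[f]$. The naive estimate is too weak: combining $\Gamma_v \le \tfrac12\Inf_{i_v}[f_{\rho_v}]$ with $\sum_{v:\,i_v=i}\Prx\sbra{\calA\text{ reaches }v}\,\Inf_i[f_{\rho_v}] = \Prx_{\bx}\sbra{\calA\text{ queries }x_i\text{ and }f(\bx)\neq f(\bx^{\oplus i})} \le \Inf_i[f]$ (valid because the events ``$\calA$ reaches $v$'' are disjoint and unaffected by flipping $x_i$) only gives $\tfrac12\TInf[f]$, discarding the weight $\delta_i(\calA)$. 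To recover it I would argue in the style of OSSS: $\Gamma_v$ is nonzero only at nodes $v$ where querying $x_{i_v}$ would flip the majority value of $f_{\rho_v}$, and one re-indexes the sum by the \emph{last} such node on each random root-to-leaf path; the extra information that the restricted function ``stays committed'' after that node is what lets each contribution be charged to $\delta_i(\calA)\cdot\Inf_i[f]$, with the \emph{full} influence appearing because, conditioned on $\calA$'s path having reached the node querying $x_i$, the still-unrevealed coordinates are a fresh uniform assignment (and the revealed ones are absorbed by the disjointness above). Carrying out this passage from an ``every critical node'' bound to a ``last critical node'' bound is the main obstacle; it is cleanest to run through the two-independent-inputs coupling behind OSSS --- execute $\calA$ on one input and interpolate the second into it, coordinate by coordinate, along $\calA$'s query path --- and this is the step I expect to demand the most care.
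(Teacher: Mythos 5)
The paper does not prove this theorem; it is imported verbatim from Jain--Zhang~\cite{JainZhang2011}, so there is no internal proof to compare against. I'll therefore review your argument on its own terms.

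Your steps through the martingale decomposition are correct: reducing to the optimally-labelled tree, identifying $\bias(f) - \error_f(\calA) = \Ex\sbra{|Z_\tau - \tfrac12|} - |Z_0 - \tfrac12| = \sum_v \Prx[\calA\text{ reaches }v]\cdot\Gamma_v$, and bounding $\Gamma_v \le \tfrac12|Z_v^0 - Z_v^1| \le \tfrac12\Inf_{i_v}[f_{\rho_v}]$ are all fine. You are also right that the "naive estimate" one gets by summing this, namely $\tfrac12\sum_i \Prx_{\bx}\sbra{\calA\text{ queries }x_i\text{ and }f(\bx)\neq f(\bx^{\oplus i})}$, does not dominate $\sum_i\delta_i(\calA)\Inf_i[f]$. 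But the gap is worse than "discarding the weight $\delta_i$": the intermediate quantity you reach can actually be \emph{larger} than the right-hand side of the theorem. For $f = \AND_n$ with the sequential tree (query $x_1,x_2,\ldots$ until a zero), one has $\Prx[\calA\text{ queries }x_i\text{ and }x_i\text{ pivotal}] = 2^{-(n-1)}$ for every $i$, so your intermediate bound is $\tfrac{n}{2}\cdot 2^{-(n-1)}$, whereas $\sum_i\delta_i\Inf_i = 2^{-(n-1)}\sum_i 2^{-(i-1)} < 2^{-(n-2)}$. So for $n \ge 5$ the chain $\bias - \error \le \tfrac12\sum_v\Prx[\text{reach }v]\Inf_{i_v}[f_{\rho_v}] \le \sum_i\delta_i\Inf_i$ fails at the second inequality; the bound $\Gamma_v \le \tfrac12\Inf_{i_v}[f_{\rho_v}]$ has already given up too much. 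Consequently the theorem cannot be recovered from this decomposition by any post-processing of the quantity $\sum_v\Prx[\text{reach }v]\,\tfrac12\Inf_{i_v}[f_{\rho_v}]$, and the remaining step is not merely "the hardest" part but one where the argument as framed cannot succeed.

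The fix you gesture at --- re-indexing by the last critical node, or falling back to the two-independent-inputs coupling --- is not carried out, and the informal description doesn't explain the crucial mechanism: in the OSSS coupling the unrestricted influence $\Inf_i[f]$ (rather than $\Inf_i[f_{\rho_v}]$) and the factor $\delta_i(\calA)$ emerge together from the fact that the second input $\by$ supplies fresh uniform values on the already-queried coordinates, not just the unqueried ones; this is precisely what is lost when one passes to the restricted influence at each node. You explicitly acknowledge that this step "demands the most care" and leave it open, so the proposal is an incomplete proof with an identified but unresolved gap at exactly the point where the content of the OSSS/Jain--Zhang inequality lies.
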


Note that \Cref{thm:osss-JZ-version} is a refinement of the classical edge-isoperimetric or Poincar\'{e} inequality (see Chapter~2 of \cite{odonnell-book}) over the Boolean hypercube that takes the query complexity (or alternatively, decision-tree structure) of $f$ into account. 
This inequality underlies the analysis of both the offline strategy of~\cite{blanc2021query} as well as our online algorithms in~\Cref{sec:iprr}. 

\subsection{Problem Setup}
\label{subsec:problem-setup}

We begin by recalling the setup of Blanc, Lange, and Tan~\cite{blanc2021query}, which is itself a variant of the problem formulation considered by Deshpande, Hellerstein, and Kletenik~\cite{deshpande2014approximation}.
In this setting, the algorithm is given a function $f:\zo^n\to\zo$ and a cost vector $c \in \R_{\geq 0}^n$. 
Its objective is to design a query algorithm to compute $f$ on an unknown (but fixed) input $x \in \zo^n$, where it incurs a cost of $c_i$ to reveal the $i^\text{th}$ bit of $x$. 
We will sometimes refer to this setup, where the cost vector $c$ is known to the algorithms, as the ``offline'' setting.

In our setting, the key difference is that $c_i$ is \emph{unknown} to the algorithm. To distinguish this setup from the previous one, we will refer to it as the ``online'' setting.
\begin{definition}[Online Priced Query Model]
\label{def:model}
    The algorithm is given a Boolean function $f: \zo^n \to \zo$, while the input $x \in \zo^n$ and the cost vector $c \in \R_{\ge 0}^n$ are unknown. The algorithm maintains a \emph{cumulative investment vector} $\theta \in \mathbb{R}_{\geq 0}^n$, where $\theta_i$ represents the algorithm’s total investment towards revealing the variable $x_i$. At each step, the algorithm selects a coordinate $i \in [n]$ to invest in, incrementing $\theta_i$ by $\beta$ (which can be viewed as the algorithm’s minimal budget or resolution). A bit $x_i$ is revealed once the investment $\theta_i$ reaches or exceeds the corresponding cost $c_i$. 
    The \emph{total cost} incurred by the algorithm is defined as the value of $\|\theta\|_1$ when it halts.
\end{definition}

\begin{remark}
\label{remark:f-access}
    Throughout, we assume that our algorithms are given an explicit representation of the function $f\isazofunc$. 
    They can be modified in a natural fashion to work with black-box or query access to $f$. 
    For example, the \WIPRR{} and \IPRR{} algorithms in~\Cref{sec:iprr} require knowledge of coordinate influences, which can be estimated via random sampling; the failure probability can then be controlled by a union bound.
\end{remark}

\begin{remark}
   Our upper bounds will exhibit a dependence on a ``unit investment'' or ``resolution''  parameter $\beta$, which represents the algorithm's unit investment step size.
   However, note that $\beta$ is a free parameter that can be chosen to be arbitrarily small. 
\end{remark}

The following notation will be helpful: 

\begin{definition}
    Let $\calA$ be an $\eps$-error (offline or online) algorithm for computing $f:\zo^n\to\zo$, and let $c \in \R_{\geq 0}^n$ be an associated cost vector. 
    We will write $\cost^f_c(\calA, x)$ for the total cost incurred by $\calA$ on input $x \in \zo^n$. 
    We define
    \[
        \costavg^f_c(\calA) := \Ex_{\bx\sim\zo^n}\sbra{\cost_c(\calA, \bx)}
        \qquad\text{and}\qquad 
        \costworst^f_c(\calA) := \max_{x\in\zo^n} \cost_c(\calA, x)
    \]
    to be the \emph{expected cost} and \emph{worst-case cost} of $\calA$ respectively. 
    When the function $f$ is clear from context, we will omit dependence on it (writing, for e.g., $\costavg_c(\calA)$ instead of $\costavg_c^f(S)$) for notational simplicity.
\end{definition}

We will benchmark the performance of our algorithms against the cost of optimal query algorithms, which may be either offline or online:

\begin{definition}
    \label{def:benchmark}
    Given a Boolean function $f:\zo^n\to\zo$, a cost vector $c\in\R^n_{\geq 0}$, and an error parameter $\eps \in [0,0.5]$, we define 
    \[
        \optavg_\eps(f,c) := \inf \costavg_c^f(\calA)
        \qquad\text{and}\qquad 
        \optworst_\eps(f,c) := \inf\costworst_c^f(\calA)\,,
    \]
    where both infima are taken over all $\eps$-error query algorithms $\calA$ (either offline or online) for $f$. 
    We will frequently write $\optavg_\eps = \optavg_\eps(f,c)$ and $\optworst_\eps = \optworst_\eps(f,c)$ for simplicity whenever $f$ and $c$ are clear from context.
\end{definition}

It is readily verified that 
\begin{equation} 
\label{eq:opt-relationships}
    \optavg_\eps \leq \optavg_0 \leq \optworst_0  
    \qquad\text{and}\qquad 
    \optavg_\eps \leq \optworst_\eps \leq \optworst_0\,. 
\end{equation}
Thus, benchmarking against $\optavg_\eps$ is the strongest guarantee we can hope for.
\usetikzlibrary{decorations.pathreplacing}

\section{The Influence-Proportional Round Robin Algorithm}
\label{sec:iprr}

The main result of this section is an $O(\eps)$-error online algorithm that is competitive with the optimal $\eps$-error \emph{offline} algorithm, up to an additional factor of $\TInf[f]$ and some logarithmic terms: 

\begin{theorem} \label{thm:iprr-main}
    Let $f\isazofunc$ be a Boolean function and $c \in \R_{\geq 0}$ be an unknown cost vector. 
    For $\eps \in (0, 0.5]$ and $\beta > 0$, there exists an online algorithm with unit investment $\beta$, \fullAlg~(\Cref{alg:full-alg}) which:
    \begin{itemize}
    	\item Computes $f$ to $O(\eps)$-error; and
    	\item Assuming $\optavg_\eps = \Omega(1)$, satisfies
    		\begin{equation} \label{eq:iprr-main-goal}
                \costavg_c(\fullAlg) \leq\beta n + \optavg_\eps \cdot O\pbra{\frac{1}{\eps^3}\log\frac{\log\optavg_\eps}{\eps}}\cdot\sum_{i=1}^{n}\Inf_i[f]\pbra{1 + \ln\frac{1}{\Inf_i[f]}}\,.
   			\end{equation}
    \end{itemize}
\end{theorem}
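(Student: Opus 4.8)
The plan is to analyze \fullAlg{} in two layers. First I isolate the parameterized subroutine $\IPRR(f,\eps,B)$: influence-proportional round robin, run with the \emph{global} stopping rule ``halt as soon as $\theta_i \ge (B/\eps)\Inf_i[f_\pi]$ for every still-unrevealed coordinate $i$, and then output $\1{\ex{\bx\sim\zo^n}{f_\pi(\bx)}\ge 1/2}$'', where $f_\pi$ is the restriction of $f$ by the bits revealed so far. For this subroutine I prove an error bound of $\eps + \eps\optavg_\eps/B$ and an expected-cost bound of $\beta n + (B/\eps)\sumi \Inf_i[f]\pbra{1+\ln\frac{1}{\Inf_i[f]}}$. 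Second, I show how the outer doubling-and-testing loop of \fullAlg{} picks a good $B$ without knowing $\optavg_\eps$, at the price of an extra $\widetilde O\pbra{\eps^{-1}\log\frac{\log\optavg_\eps}{\eps}}$ factor in the cost.

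\emph{Error of $\IPRR(f,\eps,B)$.} When this subroutine halts at restriction $\pi$, every unrevealed $i$ has $c_i > \theta_i \ge (B/\eps)\Inf_i[f_\pi]$ and every revealed $i$ has $\Inf_i[f_\pi]=0$, so $\Inf_i[f_\pi] \le (\eps/B) c_i$ for all $i$ (this is also why the rule is checkable without knowing $c$). For a fixed cost vector the subroutine is a decision tree over $\bx$, so it partitions $\zo^n$ into subcubes --- one per halting restriction $\pi$ --- and, conditioned on landing in the cube of $\pi$, the unrevealed coordinates are uniform and the output errs with probability exactly $\bias(f_\pi)$. Let $\calA^\ast$ be an optimal offline $\eps$-error algorithm, so $\costavg_c(\calA^\ast)=\optavg_\eps$, and let $\calA^\ast_\pi$ be $\calA^\ast$ with the bits of $\pi$ hard-wired. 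The OSSS inequality (\Cref{thm:osss-JZ-version}) applied to $f_\pi$ and $\calA^\ast_\pi$ gives
\[
    \bias(f_\pi) \le \error_{f_\pi}(\calA^\ast_\pi) + \sumi \delta_i(\calA^\ast_\pi)\,\Inf_i[f_\pi] \le \error_{f_\pi}(\calA^\ast_\pi) + \frac{\eps}{B}\sumi \delta_i(\calA^\ast_\pi)\, c_i .
\]
Averaging over $\pi$ (equivalently over $\bx$): $\sum_\pi \Pr[\pi]\,\error_{f_\pi}(\calA^\ast_\pi) = \error_f(\calA^\ast) \le \eps$, while $\delta_i(\calA^\ast_\pi)=0$ for $i\in\mathrm{dom}(\pi)$ gives $\sum_\pi \Pr[\pi]\,\delta_i(\calA^\ast_\pi)\le\delta_i(\calA^\ast)$ coordinatewise, hence $\sum_\pi \Pr[\pi]\sumi\delta_i(\calA^\ast_\pi)c_i \le \costavg_c(\calA^\ast)=\optavg_\eps$. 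Therefore $\error_f(\IPRR(f,\eps,B)) = \Ex_\pi[\bias(f_\pi)] \le \eps + \eps\optavg_\eps/B$, which is $O(\eps)$ once $B \ge \optavg_\eps/\eps$.

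\emph{Cost of $\IPRR(f,\eps,B)$, and the main obstacle.} Each time the subroutine increments $\theta_{\istar}$ at a restriction $\pi$, the stopping rule fails, so some unrevealed $j$ has $\Inf_j[f_\pi]/\theta_j > \eps/B$; since $\istar$ maximizes $\Inf_i[f_\pi]/\theta_i$, also $\theta_{\istar} < (B/\eps)\Inf_{\istar}[f_\pi]$, so after the increment $\theta_{\istar} \le (B/\eps)\Inf_{\istar}[f_\pi] + \beta$. As $\theta_i$ is nondecreasing, its final value obeys $\theta_i \le (B/\eps)M_i + \beta$ with $M_i := \max_\pi \Inf_i[f_\pi]$ over the restrictions seen in the run, so $\|\theta\|_1 \le \beta n + (B/\eps)\sumi M_i$. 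To control $\Ex_{\bx}[M_i]$, note that along the run the sequence $\Inf_i[f_{\pi_t}]$ is, with respect to the filtration generated by the revealed bits, a nonnegative supermartingale bounded in $[0,1]$ with initial value $\Inf_i[f]$: it is a martingale at every step that reveals a bit $x_j$ with $j\ne i$ (by averaging $\Inf_i[g]=\tfrac{1}{2}\bigl(\Inf_i[g_{x_j\mapsto0}]+\Inf_i[g_{x_j\mapsto1}]\bigr)$ over the uniform revealed value), and it drops to $0$ at the step, if any, that reveals $x_i$. Ville's maximal inequality gives $\Pr[\max_t\Inf_i[f_{\pi_t}]\ge\lambda]\le\Inf_i[f]/\lambda$, and integrating over $\lambda\in(0,1]$ (using $\Inf_i[f_{\pi_t}]\le 1$) yields $\Ex_{\bx}[M_i]\le\Inf_i[f]\pbra{1+\ln\frac{1}{\Inf_i[f]}}$; summing over $i$ gives the stated cost bound. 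This step --- replacing $\Inf_i[f]$ by a bound on $\Ex[\max_\pi\Inf_i[f_\pi]]$ and extracting the sharp logarithmic factor via the supermartingale inequality rather than a crude union bound over restrictions --- is the genuine departure from the offline analysis of~\cite{blanc2021query} and is, I expect, the crux of the argument.

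\emph{The outer loop.} \fullAlg{} fixes $m = \widetilde O(1/\eps)$ test inputs $\by^{(1)},\dots,\by^{(m)}\sim\zo^n$ and maintains a single shared investment vector $\theta$. Sweeping $B = 2^1,2^2,\dots$, it simulates the $\IPRR(f,\eps,B)$ logic on each $\by^{(k)}$, resolving each query ``is $x_i$ revealed at investment level $\theta_i$?'' by actually investing on the \emph{true} input up to that level (this only extends $\theta$ and reuses past investments, so the same $\theta$ serves every larger $B$), and compares the outputs to $f(\by^{(k)})$, both computable since $f$ is given. It accepts the first $B$ with empirical error $\le 3\eps$ and then runs $\IPRR(f,\eps,B)$ once on the true input, outputting its answer. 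By the error bound above, at $B = 2\optavg_\eps/\eps$ the true error is $\le 2\eps$, so acceptance occurs at some $B^\ast \le 2\optavg_\eps/\eps$; a Chernoff bound and a union bound over the $O(\log(\optavg_\eps/\eps))$ candidates --- for which $m = \widetilde O\pbra{\eps^{-1}\log\frac{\log\optavg_\eps}{\eps}}$ keeps the total failure probability $\le\eps$ --- guarantee that, off an $O(\eps)$-probability event, $B^\ast\ge\optavg_\eps/\eps$ and the accepted run has error $O(\eps)$ (the error analysis above applies verbatim to its halting restriction). For the cost, every increment ever made lives on $\theta$, and $\theta_i \le (B^\ast/\eps)\overline M_i + \beta$ where $\overline M_i$ is the maximum of $\Inf_i[f_\pi]$ over the $m+1$ trajectories (the $m$ tests and the final run); since $\Ex[\overline M_i]\le(m+1)\,\Ex[M_i]\le(m+1)\,\Inf_i[f]\pbra{1+\ln\frac{1}{\Inf_i[f]}}$ and $B^\ast = O(\optavg_\eps/\eps)$, summing over $i$ (and absorbing the rare failure event) yields
\[
    \costavg_c(\fullAlg) \le \beta n + \optavg_\eps\cdot O\pbra{\frac{1}{\eps^3}\log\frac{\log\optavg_\eps}{\eps}}\sumi\Inf_i[f]\pbra{1+\ln\frac{1}{\Inf_i[f]}},
\]
which is \eqref{eq:iprr-main-goal}. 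A secondary point of care here is precisely this bookkeeping: sharing one investment vector across all test trajectories \emph{and} all doubling levels is what keeps the overhead down to $m$ copies of the single-run bound, and hence to the claimed $\widetilde O(\eps^{-1}\log\frac{\log\optavg_\eps}{\eps})$ factor rather than something polylogarithmically worse.
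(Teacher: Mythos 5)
Your proof is correct and follows the same high-level architecture as the paper (the $\IPRR$ subroutine with a global stopping rule, the Doob/Ville martingale bound on $\Ex[\max_\pi\Inf_i[f_\pi]]$, and the doubling-and-testing outer loop with a shared investment vector), but your error analysis of $\IPRR(f,\eps,B)$ takes a genuinely different and arguably sharper route. The paper proves (\Cref{prop:thresholded-IPRR-on-correct-cost}) that $\IPRR$ has error $\le 2\eps$ whenever $B\ge\optworst_\eps$, using the max-form consequence of OSSS (\Cref{lemma:OSSS-consequence}) applied leaf-by-leaf to $f_\pi$ and the restriction $T_\pi$ of an algorithm witnessing $\optworst_\eps$, and then converts $\optworst_{2\eps}\le\optavg_\eps/\eps$ via Markov's inequality --- this Markov step is precisely where one $1/\eps$ factor enters. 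You instead apply the sum-form OSSS inequality (\Cref{thm:osss-JZ-version}) to $f_\pi$ and $\calA^\ast_\pi$ for the \emph{average-case} optimum $\calA^\ast$ achieving $\optavg_\eps$, and average over the leaves of the $\IPRR$ tree: the error term telescopes to $\error_f(\calA^\ast)\le\eps$ and the cost term telescopes to $\costavg_c(\calA^\ast)=\optavg_\eps$, giving $\error_f(\IPRR)\le\eps+\eps\optavg_\eps/B$ directly. This bypasses $\optworst_\eps$ entirely, and in fact shows $\IPRR$ is $2\eps$-error already when $B\ge\optavg_\eps$ --- strictly stronger than \Cref{prop:thresholded-IPRR-on-correct-cost}. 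You then unnecessarily target $B^\ast\le 2\optavg_\eps/\eps$ (the bound $\eps+\eps\optavg_\eps/B\le 2\eps$ holds already at $B=\optavg_\eps$), which is why you land on the stated $1/\eps^3$ ratio; had you taken $B^\ast=O(\optavg_\eps)$ your argument would yield $\optavg_\eps\cdot O\bigl(\eps^{-2}\log\frac{\log\optavg_\eps}{\eps}\bigr)\sumi\Inf_i[f](1+\ln\frac{1}{\Inf_i[f]})$, an $\eps$-factor better than the theorem statement. Two small points to tighten: (i) your ``(absorbing the rare failure event)'' glosses over the cost contribution when the doubling overshoots $i_0$ --- the paper controls this carefully by showing $\Pr[\bi^\ast\ge i]\le 4^{-(i-i_0)}$ for $i>i_0$, so that the sum $\sum_i 2^i\cdot\Pr[\bi^\ast\ge i]$ stays $O(2^{i_0})$; and (ii) your fixed test set of size $m=\widetilde O(1/\eps)$ implicitly depends on the unknown $\optavg_\eps$ through the $\log\log$ factor --- the paper avoids this circularity by letting $m_i=\Theta(\eps^{-1}\log(i/\eps))$ grow with the iteration counter and using per-round failure probability $\Theta(\eps i^{-2})$.
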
 

\begin{remark}
    Since the function $x \mapsto x(1+\ln x^{-1})$ is concave, it follows from \Cref{eq:iprr-main-goal} and Jensen's inequality that 
    \[
        \costavg_c(\fullAlg) \leq \beta n + \optavg_\eps\cdot O\pbra{\frac{\TInf[f]\log n}{\eps^3}\cdot\log\frac{\log\optavg_\eps}{\eps}}\,.
    \]
\end{remark}

\subsection{Warmup}
\label{subsec:warmup-iprr}

Before proving~\Cref{thm:iprr-main}, we first establish a weaker result which illustrates the key ideas behind the proof of~\Cref{thm:iprr-main}. 
In particular, some parts of the proof of~\Cref{thm:iprr-main} will rely on calculations from the proof of~\Cref{thm:warmup-iprr} below.

\begin{theorem}[Formal version of \Cref{thm:warmup-iprr-informal}]\label{thm:warmup-iprr}
    Let $f:\zo^n\to\zo$ be a Boolean function and $c \in \R^n_{\geq 0}$ be a cost vector.
    For $\eps \in (0,0.5]$ and $\beta > 0$, there exists an $\eps$-error online algorithm  \WIPRR~(\Cref{alg:warmup-iprr}) with unit investment $\beta$ such that
    \begin{equation} \label{eq:iprr-warmup-goal}
        \costavg_c(\WIPRR) \leq \beta n + \frac{\optworst_0}{\eps}\cdot \sumi \Inf_i[f]\pbra{1 + \ln\frac{1}{\Inf_i[f]}}\,.
    \end{equation}
\end{theorem}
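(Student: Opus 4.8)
The plan is to analyze the \WIPRR{} algorithm by tracking, for each coordinate $i \in [n]$, how much investment $\theta_i$ accumulates by the time the algorithm halts, and then summing these contributions. There are two distinct sources of cost: the investment paid into variables that actually get revealed, and the ``wasted'' investment into variables that never reach their cost threshold $c_i$. The round-robin nature of the algorithm — always investing in the argmax of $\Inf_i[f_\pi]/\theta_i$ — means the algorithm maintains the approximate invariant that $\theta_i \approx \lambda \cdot \Inf_i[f_\pi]$ for a common scaling factor $\lambda$, so controlling the final value of $\lambda$ (and handling the $\beta$-granularity rounding, which contributes the additive $\beta n$ term) is the crux.

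\textbf{Step 1: Bounding the investment in terms of the largest influence seen.} Fix a run of the algorithm on input $x$, producing a sequence of restrictions $\emptyset = \pi^{(0)}, \pi^{(1)}, \dots, \pi^{(\text{final})}$. I claim that for each $i$, the final investment satisfies $\theta_i \le \beta + \frac{\optworst_0}{\eps} \cdot \max_t \Inf_i[f_{\pi^{(t)}}]$, where the max is over all restrictions encountered. The reason: whenever the algorithm chooses to invest in $x_{i^\ast}$, it is because $\Inf_{i^\ast}[f_\pi]/\theta_{i^\ast} \ge \Inf_j[f_\pi]/\theta_j$ for all $j$; combined with the OSSS inequality (\Cref{thm:osss-JZ-version}) applied to the optimal offline algorithm witnessing $\optworst_0$ — which certifies $\bias(f_\pi) \le \sum_j \delta_j \Inf_j[f_\pi]$ with $\sum_j \delta_j c_j \le \optworst_0$ — one deduces that the ``best ratio'' $\max_j \Inf_j[f_\pi]/\theta_j$ cannot have dropped too far below $\eps/\optworst_0$ while $\bias(f_\pi) > \eps$ (the while-loop condition). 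This is the step where I expect the main work: making precise the claim that the round-robin discipline keeps all $\theta_j$ within a common multiplicative envelope of $\Inf_j[f_\pi]$, and that the OSSS bound forces $\theta_{i^\ast} \le (\optworst_0/\eps)\,\Inf_{i^\ast}[f_\pi]$ at the moment of each investment, so that over the whole run $\theta_i \le (\optworst_0/\eps)\max_t \Inf_i[f_{\pi^{(t)}}] + \beta$. The argument mirrors the offline analysis of \cite{blanc2021query} but must be localized to per-coordinate bounds rather than a global progress potential, precisely because, as the technical overview notes, the online algorithm can over-invest before seeing any bit.

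\textbf{Step 2: A martingale bound on the maximal influence.} For a uniformly random input $\bx \sim \zo^n$, the sequence $\Inf_i[f_{\bpi^{(0)}}], \Inf_i[f_{\bpi^{(1)}}], \dots$ is a martingale: revealing a uniformly random bit of $\bx$ does not change the expected influence of any remaining coordinate, since $\Inf_i[f_\pi] = \frac12\Inf_i[f_{\pi, x_j=0}] + \frac12\Inf_i[f_{\pi, x_j=1}]$. It is bounded in $[0,1]$ and starts at $\Inf_i[f]$. By a standard maximal-inequality-type estimate for nonnegative bounded martingales (e.g. the optional stopping / Doob argument that gives $\Ex[\max_t M_t] \le M_0(1 + \ln(1/M_0))$ for a $[0,1]$-valued martingale started at $M_0$), I would conclude
\[
    \Ex_{\bx}\!\left[\max_t \Inf_i[f_{\bpi^{(t)}}]\right] \le \Inf_i[f]\left(1 + \ln\frac{1}{\Inf_i[f]}\right).
\]
(If $\Inf_i[f] = 0$ the coordinate is irrelevant and contributes nothing; the convention $0\ln(1/0)=0$ handles this.)

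\textbf{Step 3: Assembling the bound and verifying correctness.} Summing the per-coordinate bound from Step 1 over $i$, taking expectations over $\bx$, and plugging in Step 2 yields $\costavg_c(\WIPRR) = \Ex[\|\theta\|_1] \le \beta n + \frac{\optworst_0}{\eps}\sum_i \Inf_i[f](1 + \ln(1/\Inf_i[f]))$, which is exactly \Cref{eq:iprr-warmup-goal}. Separately, the $\eps$-error guarantee is immediate from the stopping rule: the algorithm halts only when $\bias(f_\pi) \le \eps$, and outputs the majority value of $f_\pi$, so on input $x$ it errs only if $f(x)$ disagrees with that majority value; averaging over $\bx$, the fraction of such inputs is at most $\bias(f_\pi) \le \eps$ (more carefully, one conditions on the transcript $\pi$ and uses that $x$ restricted outside $\pi$ is uniform, so $\Pr[\text{error} \mid \pi] = \min(\Ex f_\pi, 1 - \Ex f_\pi) = \bias(f_\pi) \le \eps$). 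One should also check termination: since $\Inf_i[f_\pi]$ can only be nonzero for unrevealed coordinates and each investment step increments some $\theta_i$ by $\beta$ with $c_i$ finite, every relevant coordinate is eventually revealed, at which point $f_\pi$ is constant and $\bias(f_\pi) = 0 \le \eps$, so the loop halts.

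The step I expect to be the genuine obstacle is Step 1 — transferring the global amortized-progress argument of \cite{blanc2021query} into a clean per-coordinate investment bound of the form $\theta_i \lesssim (\optworst_0/\eps)\Inf_i[f_\pi]$ — since this requires carefully relating the round-robin argmax rule, the OSSS certificate of the optimal offline strategy, and the while-loop condition $\bias(f_\pi) > \eps$ all at once, while being careful about the $\beta$-discretization. Steps 2 and 3 are comparatively routine (a textbook martingale maximal inequality and a direct check of the stopping rule).
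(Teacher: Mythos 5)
Your plan is the paper's proof: the per-coordinate bound $\theta_i^\ast \le \beta + (\optworst_0/\eps)\max_{\pi}\Inf_i[f_\pi]$ is obtained by applying \Cref{lemma:OSSS-consequence} to the restriction $\calA^\circ_{\pi^{(i)}}$ of the optimal zero-error offline algorithm at the moment $\theta_i$ is incremented for the last time, and your Step~2 is verbatim the paper's Doob-plus-optimal-split-at-$r^\ast = \Inf_i[f]$ argument. The ``obstacle'' you flag in Step~1 is in fact one simple observation rather than any ``common multiplicative envelope'' claim: any $j$ with $\Inf_j[f_\pi] > 0$ is still unrevealed, hence $\theta_j < c_j$, so $\max_j \Inf_j[f_\pi]/\theta_j \ge \max_j \Inf_j[f_\pi]/c_j \ge \eps/\optworst_0$ whenever $\bias(f_\pi) > \eps$.
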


\Cref{thm:warmup-iprr} gives an $\eps$-error algorithm whose expected cost is upper bounded by the (offline) \emph{zero-error} algorithm with minimal worst-case cost---a guarantee that is the weakest among those we consider (cf.~\Cref{eq:opt-relationships}). 
(In contrast,~\Cref{thm:iprr-main} is competitive against the optimal $\eps$-error algorithm with minimal average-case cost, which is the strongest guarantee we consider.)

\begin{algorithm}
    \caption{The $\WIPRR$ Algorithm}
    \label{alg:warmup-iprr}

    \vspace{1em}
    \textbf{Input:} Succinct representation of $f:\zo^n\to\zo$, error parameter $\eps \in (0, 0.5]$ \\[0.25em]
    \textbf{Output:} A bit $b \in \zo$

    \ 

    $\WIPRR(f, \eps)$:
    \begin{enumerate}
        \item Initialize $\theta \leftarrow 0^n$ and $\pi \leftarrow \emptyset$. 
        \item While $\bias(f_\pi) > \eps$:
        \begin{enumerate}
            \item Let $i^\ast \in [n]$ be the index such that 
            \[
                i^\ast = \arg\max_i \frac{\Inf_i[f_{\pi}]}{\theta_i}\,.
            \]
            \item Spend cost $\beta$ towards $x_{i^\ast}$ and update $\theta \leftarrow \theta + \beta e_{i^\ast}$.
            \item If $x_{i^\ast}$ is revealed to be $b_{i^\ast} \in \zo$, then update $\pi \leftarrow \pi \cup \{x_{i^\ast} \mapsto b_{i^\ast}\}$.
        \end{enumerate}
        \item Output $\displaystyle\mathbf{1}\bigg\{\ex{\bx\sim\zo^n}{f_\pi(\bx)} \ge 1/2\bigg\}$.
    \end{enumerate}   
\end{algorithm}

Our proof of~\Cref{thm:warmup-iprr} will rely on the following consequence of the well-known OSSS inequality (\Cref{thm:osss-JZ-version}) obtained by Blanc, Lange, and Tan~\cite{blanc2021query}: 

\begin{lemma}[Lemma~3.1 of~\cite{blanc2021query}] \label{lemma:OSSS-consequence}
    For all Boolean functions $f:\zo^n\to\zo$, cost vectors $c \in \R_{\geq 0}^n$, and query algorithms $\calA$, we have 
    \[
        \max_{i\in[n]} \frac{\Inf_i[f]}{c_i} \geq \frac{\bias(f) - \error_f(\calA)}{\costavg^f_c(\calA)}\,.
    \]
\end{lemma}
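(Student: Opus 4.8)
The plan is to derive this directly from the OSSS inequality (\Cref{thm:osss-JZ-version}) together with a one-line rewriting of the expected cost. Recall that OSSS states
\[
    \bias(f) - \error_f(\calA) \leq \sumi \delta_i(\calA)\cdot \Inf_i[f],
\]
where $\delta_i(\calA) = \Prx_{\bx\sim\zo^n}[\calA~\text{queries}~\bx_i]$. Since the right-hand side is a nonnegative linear combination of the influences, the first step is to replace each $\Inf_i[f]$ by an upper bound in terms of $c_i$. Writing $M := \max_{j\in[n]} \Inf_j[f]/c_j$, we have $\Inf_i[f] \le M\,c_i$ for every coordinate with $c_i > 0$; if some coordinate has $c_i = 0$ and $\Inf_i[f] > 0$ then $M = +\infty$ and the claim holds trivially, while a coordinate with $c_i = 0 = \Inf_i[f]$ contributes nothing to the OSSS sum, so we lose nothing by assuming every $c_i$ is positive.

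Plugging this bound into OSSS gives $\bias(f) - \error_f(\calA) \le M \cdot \sumi \delta_i(\calA)\,c_i$, so the second step is to identify the sum $\sumi \delta_i(\calA)\,c_i$ with the expected cost $\costavg_c^f(\calA)$. This is immediate from the definition of the priced model: on input $x$, the algorithm pays $c_i$ for each coordinate $i$ it queries, so $\cost_c(\calA,x) = \sum_{i\,:\,\calA \text{ queries } x_i} c_i$; averaging over $\bx\sim\zo^n$ and using linearity of expectation yields $\costavg_c^f(\calA) = \sumi c_i\,\Prx_{\bx}[\calA~\text{queries}~\bx_i] = \sumi c_i\,\delta_i(\calA)$. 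Combining the two displays gives $\bias(f) - \error_f(\calA) \le M\cdot\costavg_c^f(\calA)$, and dividing through by $\costavg_c^f(\calA)$ produces exactly the claimed inequality $M \ge (\bias(f) - \error_f(\calA))/\costavg_c^f(\calA)$.

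The only points requiring care are the bookkeeping for zero-cost coordinates handled above, and the degenerate case $\costavg_c^f(\calA) = 0$, in which the algorithm makes no positive-cost query and hence reads only zero-influence (irrelevant) coordinates; since $f$ does not depend on such coordinates, $\calA$ is independent of the relevant part of $\bx$, so $\error_f(\calA) \ge \bias(f)$ and the numerator is nonpositive, making the inequality vacuous. There is no substantive obstacle: the entire content of the lemma is packaged inside the OSSS inequality, and this argument is merely the bridge from the influence-theoretic statement of OSSS to the cost-aware form needed later.
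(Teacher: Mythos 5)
Your proof is correct and is essentially the standard derivation: apply the OSSS inequality, bound each $\Inf_i[f]$ by $M\,c_i$ with $M = \max_j \Inf_j[f]/c_j$, and identify $\sum_i \delta_i(\calA)\,c_i$ with $\costavg_c^f(\calA)$ by linearity of expectation. The paper does not reprove this lemma (it is cited directly from Blanc--Lange--Tan), but the argument you give is exactly the one that establishes it there, and your handling of the zero-cost and zero-expected-cost edge cases is a reasonable tidying-up of the same idea.
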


We will also require Doob's martingale inequality; recall that a discrete-time stochastic process $(\bX_1, \dots, \bX_T)$ is a \emph{martingale} if for all $t \in [T]$, we have 
\[
    \Ex\sbra{|\bX_t|} < \infty 
    \qquad\text{and}\qquad  
    \Ex\sbra{\bX_{t+1} \mid \bX_1, \dots, \bX_t} = \bX_t\,.
\]

\begin{lemma}[Doob's inequality, Section~14.6 of~\cite{williams1991probability}] \label{lemma:doob}
    Given a martingale $(\bX_1, \dots, \bX_T)$, for all $C > 0$ we have 
    \[
        \Pr\sbra{\max_{t \in [T]} \bX_t \geq C} \leq \frac{\Ex\sbra{\max\cbra{\bX_T, 0}}}{C}\,.
    \]
\end{lemma}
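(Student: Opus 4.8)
The plan is the classical first-passage (stopping-time) decomposition argument for Doob's maximal inequality. Write $\mathcal{F}_t := \sigma(\bX_1, \dots, \bX_t)$ for the natural filtration, so that the hypothesis reads $\Ex\sbra{\bX_{t+1} \mid \mathcal{F}_t} = \bX_t$ for all $t$, and hence, iterating the tower property, $\Ex\sbra{\bX_T \mid \mathcal{F}_t} = \bX_t$ for every $t \in [T]$; integrability of each $\bX_t$ is part of the martingale hypothesis. We may assume $C > 0$ and that $A := \cbra{\max_{t \in [T]} \bX_t \geq C}$ has positive probability, else there is nothing to prove. First I would decompose $A$ according to the \emph{first} index at which the process reaches level $C$: for $t \in [T]$ set
\[
    A_t := \cbra{\bX_1 < C,\ \dots,\ \bX_{t-1} < C,\ \bX_t \geq C}.
\]
The events $A_1, \dots, A_T$ are pairwise disjoint, their union is exactly $A$, and crucially $A_t \in \mathcal{F}_t$ since $A_t$ depends only on $\bX_1, \dots, \bX_t$.

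The heart of the argument is a short chain of inequalities. Since $\bX_t \geq C$ on $A_t$, we have $C \cdot \Pr\sbra{A_t} \leq \Ex\sbra{\bX_t \cdot \1{A_t}}$, and summing over $t$ and using disjointness gives $C \cdot \Pr\sbra{A} \leq \sum_{t=1}^{T}\Ex\sbra{\bX_t \cdot \1{A_t}}$. Next, because $A_t \in \mathcal{F}_t$ and $\Ex\sbra{\bX_T \mid \mathcal{F}_t} = \bX_t$, the tower rule yields $\Ex\sbra{\bX_t \cdot \1{A_t}} = \Ex\sbra{\1{A_t} \cdot \Ex\sbra{\bX_T \mid \mathcal{F}_t}} = \Ex\sbra{\bX_T \cdot \1{A_t}}$ for each $t$; summing over $t$ and again using disjointness, $\sum_{t=1}^{T}\Ex\sbra{\bX_t \cdot \1{A_t}} = \Ex\sbra{\bX_T \cdot \1{A}}$. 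Finally, pointwise $\bX_T \cdot \1{A} \leq \max\cbra{\bX_T, 0}$---on $A$ the left side equals $\bX_T \leq \max\cbra{\bX_T, 0}$, and on the complement of $A$ it equals $0 \leq \max\cbra{\bX_T, 0}$---so $\Ex\sbra{\bX_T \cdot \1{A}} \leq \Ex\sbra{\max\cbra{\bX_T, 0}}$. Chaining these bounds gives $C \cdot \Pr\sbra{A} \leq \Ex\sbra{\max\cbra{\bX_T, 0}}$, and dividing by $C > 0$ proves the lemma.

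This is a textbook fact, so there is no genuine obstacle; the only points needing care are that each $A_t$ is honestly $\mathcal{F}_t$-measurable (which is exactly what licenses pulling $\1{A_t}$ through the conditional expectation), and that the multi-step identity $\Ex\sbra{\bX_T \mid \mathcal{F}_t} = \bX_t$ follows from the stated one-step martingale property by a one-line induction on the tower rule. Everything else is elementary.
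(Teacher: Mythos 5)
Your proof is correct and is precisely the standard first-passage-time decomposition for Doob's maximal inequality. The paper does not prove this lemma itself—it cites Section 14.6 of Williams' \emph{Probability with Martingales}—and your argument is essentially the textbook proof given there, so there is nothing to reconcile.
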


We are now ready to prove~\Cref{thm:warmup-iprr}:

\begin{proofof}{\Cref{thm:warmup-iprr}}
    Consider the execution of the algorithm \WIPRR~on a fixed input $x \in \zo^n$ with unknown cost vector $c$. Note that $x$ induces a sequence of restrictions $\pi$ corresponding to the variables revealed by \WIPRR; call this collection of restrictions $\Pi(x)$.
    For each index $i \in [n]$, let $\pi^{(i)}$ denote the restriction $\pi$  \emph{right before} 
    $\theta_i$ (which is the investment in variable $x_i$) gets incremented for the last time. 
    In the rest of the proof, vector $\theta$ denotes the investments in the $n$ variables \emph{at this moment} (before $\theta_i$ gets incremented for the last time). In contrast, the \emph{final} investment in variable $x_i$ will be denoted by $\theta_i^\ast$, which is equal to $\theta_i + \beta$.

    Let $\calA^{\circ}$ denote the zero-error offline algorithm for $f$ with minimal worst-case cost, i.e.,\footnote{Technically, $\optworst_0$ is defined as an infimum and might not be obtained by any algorithm, though the rest of the proof would still go through by considering a sequence of zero-error algorithms with worst-case costs approaching $\optworst_0$.}
    \[
        \error_f(\calA^{\circ}) = 0 \qquad\text{and}\qquad \costworst^f_c(\calA^{\circ}) = \optworst_0(f,c)\,.
    \]
    It follows from Step~2(a) of~\Cref{alg:warmup-iprr} that
	\[
		\frac{\Inf_i[f_{\pi^{(i)}}]}{\theta_i^\ast - \beta} = \frac{\Inf_i[f_{\pi^{(i)}}]}{\theta_i} = \max_j \frac{\Inf_j[f_{\pi^{(i)}}]}{\theta_j} \geq \max \frac{\Inf_j[f_{\pi^{(i)}}]}{c_j}\,.
	\]
    Let $\calA^{\circ}_{\pi^{(i)}}$ denote the algorithm obtained from $\calA^{\circ}$ by enforcing the restriction $\pi^{(i)}$. In other words, $\calA^{\circ}_{\pi^{(i)}}$ follows $\calA^{\circ}$ and, whenever a variable $x_j$ is queried by $\calA^{\circ}$ while $x_j \mapsto b_j$ is among restriction $\pi^{(i)}$, $\calA^{\circ}_{\pi^{(i)}}$ proceeds by forwarding $x_j = b_j$ to $\calA^{\circ}$. By construction, we have
    \[
        \error_{f_{\pi^{(i)}}}(\calA^{\circ}_{\pi^{(i)}}) = 0
    \qquad\text{and}\qquad 
        \costworst_c^{f_{\pi^{(i)}}}(\calA^{\circ}_{\pi^{(i)}})
    \le \costworst_c^f(\calA^{\circ})
    =   \optworst_0(f, c).
    \]

    Then, applying~\Cref{lemma:OSSS-consequence} to $f_{\pi^{(i)}}$ and $\calA^{\circ}_{\pi^{(i)}}$ gives
	\begin{equation} \label{eq:warmup-osss-1}
		\frac{\Inf_i[f_{\pi^{(i)}}]}{\theta_i^\ast - \beta} 
    \geq \max \frac{\Inf_j[f_{\pi^{(i)}}]}{c_j}
    \geq \frac{\bias(f_{\pi^{(i)}})}{\costavg_c^{f_{\pi^{(i)}}}(\calA^{\circ}_{\pi^{(i)}})} \geq \frac{\eps}{\optworst_0}\,,
	\end{equation}
    where the last inequality relies on:
    \begin{itemize}
        \item The observation 
		\[
			\costavg_c^{f_{\pi^{(i)}}}(\calA^{\circ}_{\pi^{(i)}})
			\leq 
			\costworst_c^{f_{\pi^{(i)}}}(\calA^{\circ}_{\pi^{(i)}}) 
			\le 
			\optworst_0(f,c)\,;
		\]
        as well as  
        \item The fact that $\bias(f_{\pi^{(i)}}) > \eps$ by design of~\Cref{alg:warmup-iprr}. 
    \end{itemize}
    We can rewrite~\Cref{eq:warmup-osss-1} as follows:
    \begin{equation}\label{eq:warmup-osss-1-rewritten}
        \theta_i^\ast \leq \beta + \frac{\optworst_0}{\eps} \cdot \Inf_i[f_{\pi^{(i)}}] \leq \beta + \frac{\optworst_0}{\eps} \cdot \max_{\pi \in \Pi(x)} \Inf_i[f_{\pi}]\,.
    \end{equation}
    Recall that $\Pi(x)$ is the collection of all restrictions encountered by \WIPRR~on input $x \in \zo^n$. The second step above follows from the observation that $\pi^{(i)} \in \Pi(x)$.
 
    Note that because $\Inf_i[f_{\pi^{(i)}}] \in [0,1]$, \Cref{eq:warmup-osss-1-rewritten} immediately implies that 
    \[
        \costworst_c(\WIPRR) \leq \beta n + \frac{\optworst_0}{\eps}n\,.    
    \]
    We will obtain our improved bound by considering the expected cost of \WIPRR. Indeed, it follows from~\Cref{eq:warmup-osss-1-rewritten} and the definition of $\costavg_c(\cdot)$ that  
    \begin{align*}
        \costavg_c(\WIPRR) 
        &\leq \beta n + \frac{\optworst_0}{\eps} \sumi \Ex_{\bx\sim\zo^n}\sbra{\max_{\pi \in \Pi(\bx)} \Inf_i[f_{\pi}]}.
    \end{align*}
    So in order to complete the proof, it suffices to show that 
    \begin{equation} \label{eq:warmup-final-goal}
        \Ex_{}\sbra{\max_{\pi \in \Pi(\bx)} \Inf_i[f_{\pi}]} 
        \leq \Inf_i[f]\cdot\pbra{1 + \ln\frac{1}{\Inf_i[f]}}. 
    \end{equation} 
    The rest of the argument will establish~\Cref{eq:warmup-final-goal}. 
    
    For each $t \in \{0, 1, 2, \ldots, n\}$, let $\bpi_t$ denote the restriction encountered by \WIPRR{} when it runs on a uniformly random input $\bx\sim\zo^n$ and has revealed the value of exactly $t$ variables. If the algorithm reveals $t' < t$ variables before halting, we define $\bpi_t = \bpi_{t'}$ instead.
	Then, define the random variable $\bX^{(i)}_t$ as follows:
	\begin{itemize}
		\item If $\bx_i$ is not among the first $t$ revealed variables, then $\bX^{(i)}_t = \Inf_{i}[f_{\bpi_t}]$. 
		\item If $\bx_i$ is the $t^\ast$-th revealed variable for some $t^\ast \le t$, then $\bX^{(i)}_t = \bX^{(i)}_{t^\ast - 1}$. 
		In other words, the value of $\bX^{(i)}_{t}$ is frozen to its last value before $\bx_i$ is revealed.
	\end{itemize}	
    It is readily verified that $(\bX^{(i)}_0, \bX^{(i)}_1, \dots , \bX^{(i)}_n)$ forms a martingale with the following properties:
    \begin{itemize}
        \item $\bX^{(i)}_0 = \Inf_i[f]$ with probability $1$.
        \item $\bX^{(i)}_t \in [0,1]$ for all $t \in \{0, \dots, n\}$. 
    \end{itemize}
    Furthermore, we note that $\Pi(x) = \{\bpi_0, \bpi_1, \ldots, \bpi_n\}$, so it holds that $\max_{\pi \in \Pi(x)}\Inf_i[f_{\pi}] = \max_{0 \le t \le n}\bX^{(i)}_t$.
    Let $r^\ast$ be a parameter that we will set shortly. 
    We then have 
    \begin{align}
        \text{L.H.S. of}~\eqref{eq:warmup-final-goal} 
        = \Ex\sbra{\max_{0 \leq t \leq n} \bX^{(i)}_t } 
        &= \int_{r = 0}^\infty \Prx_{}\sbra{\max_{0 \leq t \leq n} \bX^{(i)}_t  \geq r }\,\rmd r \label{eq:sumiao}\\[0.5em] 
        &= \int_{r = 0}^1 \Prx_{}\sbra{\max_{0 \leq t \leq n} \bX^{(i)}_t  \geq r }\,\rmd r \tag{Since $\bX^{(i)}_t \in [0,1]$ for all $t$}\\[0.5em] 
        &= \int_{r=0}^{r^\ast} \Prx_{}\sbra{\max_{0 \leq t \leq n} \bX^{(i)}_t  \geq r }\,\rmd r + \int_{r = r^\ast}^1 \Prx_{}\sbra{\max_{0 \leq t \leq n} \bX^{(i)}_t  \geq r }\,\rmd r \nonumber \\[0.5em]
        &\leq r^\ast + \int_{r = r^\ast}^1  \Prx_{}\sbra{\max_{0 \leq t \leq n} \bX^{(i)}_t  \geq r }\,\rmd r \nonumber \\[0.5em] 
        &\leq r^\ast  + \int_{r = r^\ast}^1 \frac{\Ex_{}\sbra{\bX_n^{(i)}}}{r} \, \rmd r \tag{Doob's inequality,~\Cref{lemma:doob}} \\[0.5em]
        &= r^\ast  + \int_{r = r^\ast}^1 \frac{\Inf_i[f]}{r} \, \rmd r \tag{Since $\bX_t^{(i)}$ is a martingale} \\[0.5em]
        &= r^\ast + \Inf_i[f]\cdot\ln\pbra{\frac{1}{r^\ast}}\,. \label{eq:flour}
    \end{align}
    The final quantity above is minimized for $r^\ast = \Inf_i[f]$, and so 
    \[
        \Ex_{}\sbra{\max_{\pi \in \Pi(\bx)} \Inf_i[f_{\pi}]} \leq \Inf_i[f] \cdot \pbra{1 + \ln\frac{1}{\Inf_i[f]}},
    \]
    establishing~\Cref{eq:warmup-final-goal} and in turn completing the proof.
\end{proofof}

\subsection{A Hard Instance for \WIPRR{}}
\label{subsec:iprr-tightness}

In this section, we will consider an instance of the online query problem which shows that the analysis of~\Cref{alg:warmup-iprr} from the previous section is tight. 
In particular, this ``hard instance'' suggests a natural modification to~\Cref{alg:warmup-iprr} which will allow us to establish~\Cref{thm:iprr-main}. 
Throughout this section, we will suppress dependence on the resolution or unit-cost parameter $\beta$ and will assume that $\beta$ is $n^{-c}$ for some sufficiently large constant $c$. 
This is without loss of generality since for any choice of $\beta$, the cost vector $c$ can be scaled appropriately. 

\begin{proposition}[Formal version of \Cref{prop:warmup-iprr-hard-instance-informal}]
\label{prop:warmup-iprr-hard-instance}
    For every integer $k \ge 1$ and $n = 2^{2k} + 2k$, there is an $n$-variable function $h$ and a cost vector $c \in \R_{\geq 0}^n$ such that, for any accuracy parameter $\eps \in (0, 1/4)$, $\WIPRR(h, \eps)$ gives an expected cost of
    \[
        \costavg^h_c(\WIPRR) = \optavg_0(h, c) \cdot \Omega\left(\TInf[f]\cdot \frac{\sqrt{n}}{\log n}\right).
    \]
\end{proposition}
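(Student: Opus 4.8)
The plan is to realize the hard instance sketched in \Cref{sec:proof-overview}: a ``core'' function on which $\WIPRR$ over-invests by a $\widetilde\Omega(\sqrt n)$ factor, wrapped in a ``dilution'' gadget that drives the total influence down to $\Theta(1)$ while leaving the cost ratio essentially unchanged. Writing $r \coloneqq 2^k$ so that $n = r^2 + 2k$, I would let $h$ act on three groups of variables: $k$ \emph{dilution} bits $y$, $k$ \emph{address} bits $z$, and an array $w = (w_{i,j})_{i,j\in\{0,\dots,r-1\}}$ of $r^2$ \emph{data} bits, and set
\[
    h(y,z,w) \;=\; \left(\bigwedge_{t} y_t\right)\wedge\left(\bigoplus_{j} w_{\mathrm{int}(z),\,j}\right),
\]
a variant of the address function in which the address $z$ selects a block of data bits to be XORed, gated by an AND of the $y$'s. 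The cost vector $c$ assigns cost $0$ to each $y$-bit, cost $1$ to each $z$-bit, and cost $1/r$ to each data bit.

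First I would record the elementary estimates. Flipping a data bit changes $h$ only when $\bigwedge_t y_t = 1$ and the address points to its block, so $\Inf_{w_{i,j}}[h] = 1/r^2$; hence the $r^2$ data bits contribute exactly $1$ to $\TInf[h]$, the $z$- and $y$-bits contribute only $O(k/r)$, and $\TInf[h] = \Theta(1)$. For the offline benchmark, the zero-error strategy ``read the $y$-bits (free); if some $y_t = 0$, output $0$; else read all $z$-bits (cost $k$) and all $r$ bits of the selected block (cost $r\cdot\tfrac1r = 1$), then output the XOR'' has expected cost $\tfrac1r(k+1)$, so $\optavg_0(h,c) = O(k/r)$. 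The lower bound on $\costavg_c(\WIPRR(h,\eps))$ then reduces to $\WIPRR$'s behavior on the bare core $g(z,w) \coloneqq \bigoplus_j w_{\mathrm{int}(z),\,j}$: since the $y$-bits are free, $\WIPRR(h,\eps)$ learns them at no cost, halts at once if $\bigwedge_t y_t = 0$, and otherwise --- an event of probability $1/r = \Theta(1/\sqrt n)$ --- is left to run $\WIPRR(g,\eps)$ on $(z,w)$ from scratch.

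The crux is the core claim: $\WIPRR(g,\eps)$ pays $\Omega(r)$ on every input and for every $\eps \in (0,1/4)$. Here I would use that $\WIPRR$ keeps the ratios $\theta_i/\Inf_i[f_\pi]$ equalized up to the resolution $\beta$, so it reveals variables in increasing order of $c_i/\Inf_i[f_\pi]$; since $\Inf_{w_{i,j}}[g] = 1/r$ and $\Inf_{z_t}[g] = \tfrac12$, and neither influence changes until some $z$-bit is revealed, every data bit (ratio $c/\Inf = 1$) is revealed before any address bit (ratio $c/\Inf = 2$), which already costs $r^2\cdot\tfrac1r = r$. One then checks that the stopping rule cannot trigger earlier: as long as some data bit is undetermined in each block, the XOR of every block --- and hence $f_\pi$ --- is exactly balanced, while after all data bits are revealed $f_\pi$ is a non-constant function of $z$ of bias near $\tfrac12$; more robustly, $\bias(f_\pi) \le \eps$ would force at least a $(1-2\eps)$-fraction of the $r$ blocks to be \emph{fully} revealed, i.e.\ $\ge (1-2\eps)r^2$ data bits and thus cost $\ge (1-2\eps)r$. (The $\beta$-discretization adds only $\beta n = o(1)$ overall and is absorbed by taking $\beta$ small.) Combining, $\costavg_c(\WIPRR(h,\eps)) \ge \tfrac1r\cdot\Omega(r) = \Omega(1)$, so
\[
    \frac{\costavg_c(\WIPRR(h,\eps))}{\optavg_0(h,c)} \;\ge\; \frac{\Omega(1)}{O(k/r)} \;=\; \Omega\!\left(\frac{r}{k}\right) \;=\; \Omega\!\left(\TInf[h]\cdot\frac{\sqrt n}{\log n}\right),
\]
using $\TInf[h] = \Theta(1)$, $\sqrt n = \Theta(r)$, and $\log n = \Theta(k)$; reading the ``$\TInf[f]$'' in the statement as $\TInf[h]$, this is the claim. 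The main obstacle is exactly the core claim: one must track $\WIPRR$'s investment trajectory precisely enough to guarantee that a constant fraction of the $r^2$ data bits is revealed before the address is pinned down and before termination, uniformly over $\eps \in (0,1/4)$ and over inputs, and handle the $\beta$-discretization cleanly; the influence bookkeeping and the dilution/decoupling step are routine by comparison.
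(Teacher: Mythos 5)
Your overall plan follows the paper's blueprint (a ``core'' address function, wrapped in a dilution gadget that collapses the total influence to $\Theta(1)$ while leaving the cost ratio intact), and your analysis of the core instance is sound in spirit. However, your dilution gadget has a fatal flaw that the paper's construction is specifically engineered to avoid.

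You set $h(y,z,w) = \bigl(\bigwedge_t y_t\bigr)\wedge\bigl(\bigoplus_j w_{\mathrm{int}(z),j}\bigr)$. This gives $\Ex_{(\by,\bz,\bw)}[h] = 2^{-k}\cdot\tfrac12 = 2^{-k-1}$, so $\bias(h) = 2^{-k-1}$. The $\WIPRR$ loop condition is $\bias(f_\pi) > \eps$, checked starting from $\pi = \emptyset$; hence for any $\eps \ge 2^{-k-1}$ (in particular, for all but an exponentially small subrange of $\eps \in (0,1/4)$ once $k \ge 2$), the algorithm halts immediately, outputs the constant $0$, and pays nothing. Your sentence ``$\WIPRR(h,\eps)$ learns [the $y$-bits] at no cost, halts at once if $\bigwedge_t y_t = 0$, and otherwise is left to run $\WIPRR(g,\eps)$'' implicitly assumes the free $y$-bits are revealed \emph{before} the first bias check, but even granting that, the intermediate restrictions $\{y_1{=}1,\ldots,y_t{=}1\}$ with $t < k-1$ still have $\Ex[h_\pi] = 2^{-(k-t)-1} < 1/4$, so the loop can still terminate part-way through the $y$-phase for $\eps$ near $1/4$. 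The proposition must hold for \emph{every} $\eps \in (0,1/4)$ and every $k \ge 1$, and this construction does not.

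This is precisely the issue the paper's alternating decision list $g(z)$ is built to solve: replacing your $\AND$ gate with a rank-$1$ decision tree over $\ell = \log\sqrt n$ list variables whose leaf labels alternate between $0$ and $1$ ensures $\Ex[h_\pi] \in [1/4,3/4]$ for every prefix restriction $\pi = \{z_1{\mapsto}0,\ldots,z_{\ell'}{\mapsto}0\}$ (Equation~(4.5) in the paper), so $\bias(h_\pi) > \eps$ holds throughout the list-reading phase for any $\eps < 1/4$. The alternation is not cosmetic; it is what forces $\WIPRR$ to keep going down the list and, with probability $2^{-\ell}$, to reach and pay for the address-function core. Your core-instance argument (data bits revealed before address bits, cost $\Omega(r)$) and your influence/cost bookkeeping are otherwise consistent with the paper's; swapping the $\AND$ for the alternating decision list would repair the proof.
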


We will require a slight modification of the well-known ``address function'' from Boolean function complexity: 

\begin{definition} \label{def:address-func}
    For $n = 2^{2k}$ for some $k \in \N$, consider the function $f: \zo^{\log \sqrt{n}} \times \zo^{n} \to \zo$ defined as 
    \[
        f(x, y) := \bigoplus_{i=1}^{\sqrt{n}} y_{x, i} 
    \]
    where we identify $x \in \zo^{\log \sqrt{n}}$ with the binary representation of an index $j \in [\sqrt{n}]$ and view $y \in \zo^n$ as a $\sqrt{n}\times\sqrt{n}$ Boolean matrix. (Here, $x\oplus y = (x + y) \bmod 2$ is the XOR operation.)
\end{definition}

\begin{figure}
    \centering
    \begin{tikzpicture}
    
    	\fill[color=yellow!40] (5,-0.5) rectangle ++(3.5,0.5);
    	\node (yx) at (4.75,-0.25) {$y_x$};
    	\draw[dashed] (1,-0.65) -- (1,-0.25);
    	\draw[-latex,dashed] (1,-0.25) -- (4.5,-0.25); 
    
    	\foreach \x in {0,...,3} {
    		\draw (\x/2,-1.25) rectangle ++(0.5,0.5);
    	}
    	
    	\node (x) at (1,-1.75) {$x \in \zo^{\log \sqrt{n}}$};
    	
    	\foreach \x in {0,1,...,6} {
    		\foreach \y in {0,1,...,6} {
    			\draw (5+\x/2,-\y/2) rectangle ++(0.5,0.5);
    		}
    	}
    	
    	\node (y) at (6.75,-3.5) {$y \in \zo^n$};
    	\draw [decorate,decoration={brace,amplitude=5pt,mirror}]
  (8.5,0.75) -- (5,0.75) node[midway,yshift=1em]{$\sqrt{n}$};\draw [decorate,decoration={brace,amplitude=5pt}]
  (8.75,0.5) -- (8.75,-3) node[midway,xshift=1.2em]{$\sqrt{n}$};
    	
	\end{tikzpicture}
    \caption{The function $f: \zo^{\log \sqrt{n}} \times \zo^{n} \to \zo$ from \Cref{def:address-func}.}
    \label{fig:address-func}
\end{figure}

It will be convenient to refer to the variables of $x\in\zo^{\log\sqrt{n}}$ as \emph{control variables}, and to the variables of $y \in \zo^n$ as \emph{action variables}. 
(See~\Cref{fig:address-func} for an illustration of~\Cref{def:address-func}.)
It is readily verified that
\[
	\Inf_i[f] = 
	\begin{cases}
		0.5 & i~\text{indexes a control variable} \\ 
		\frac{1}{\sqrt{n}} & i~\text{indexes an action variable}
	\end{cases}\,
\]
and consequently $\TInf[f] = \Theta(\sqrt{n}).$

Consider a cost vector $c \in \R^{\log\sqrt{n} + n}$ where
\[
    c_i = \Theta_{\beta}\pbra{\Inf_i[f]}\,.
\]
It is readily seen that $\optavg_0(f, c) = \Theta_{\beta}(\log n)$ by considering the algorithm which queries the control variables at cost $\Theta_{\beta}(\log n)$ to obtain $x \in \zo^{\log\sqrt{n}}$ and then computes $\bigoplus_i y_{x,i}$ at cost $\Theta_{\beta}(1)$. 
On the other hand, since $\WIPRR$ (\Cref{alg:warmup-iprr}) invests in proportion to the coordinate influences (which are all equal for the action variables), we have that  
\[
	\costavg_c(\WIPRR) = \Theta_{\beta}(\sqrt{n})
\]
since each of the $n$ action variables has cost $\Theta_\beta(n^{-1/2})$. 
In particular, this example establishes that 
\begin{equation} \label{eq:brookline}
	\costavg_c(\WIPRR) = \Theta_{\beta}(\sqrt{n}) \geq  \Omega\left(\optavg_0 \cdot \frac{\TInf[f]}{\log n}\right)\,. 
\end{equation}

Next, we show how to bootstrap this example to give a query instance of a Boolean function $h:\zo^n \to \zo$ where 
\begin{equation} \label{eq:wiprr-lb-goal}
	\costavg_c(\WIPRR) = \optavg_0(h, c) \cdot \Omega\left(\TInf[h]\cdot \frac{\sqrt{n}}{\log n}\right)\,.
\end{equation}
We first describe the instance which establishes~\Cref{eq:wiprr-lb-goal}. 
Let $\ell \geq 1$
be a parameter to be set shortly, and let $g:\zo^\ell\to\zo$ be a \emph{decision list}~\cite{riv87} (or equivalently, rank-$1$ decision tree)
such that: (i) the decision list computing $g(z)$ queries variables $z_1, z_2, \ldots, z_{\ell}$ in order; (ii) the labels of the decision list's rules alternate (i.e., the first rule is ``If $z_1 = 1$, output $1$; else if $z_2 = 1$, output $0$; \dots''); (iii) the final rule is ``If $z_\ell = 1$, output $\ell \bmod 2$; else, output $1$.'' 

Then, we define the function 
\[
	h:\zo^\ell \times \zo^{\log\sqrt{n}} \times \zo^{n} \to \zo
\]
by modifying $g$ as described in~\Cref{fig:DL}: $h(z, x, y)$ agrees with $g(z)$ as long as $z \ne 0^{\ell}$; when $z = 0^{\ell}$, $h(z, x, y)$ is defined as $f(x, y)$ from \Cref{def:address-func} instead.
We will refer to the variables in $\zo^\ell$ as the \emph{list variables}. 
 
\begin{figure}
    \centering
    \begin{tikzpicture}
        \node[draw,circle] (z1) at (0,0) {$z_1$};	
        \node[draw,circle] (z2) at (2,0) {$z_2$};	
        \node[draw,circle,color=white] (zp) at (4,0) {\textcolor{white}{$z_2$}};	
        \node[draw,circle,color=white] (zp2) at (5,0) {\textcolor{white}{$z_2$}};	
        \node[draw,circle] (z3) at (7,0) {$z_\ell$};	
        \node[draw,circle] (z4) at (9,0) {$f$};	
        \node (dots) at (4.5,0) {$\dots$};

        \draw[-latex] (z1) -- (z2) node [midway, above, fill=white,] {\small $0$};
        \draw[-latex] (z2) -- (zp) node [midway, above, fill=white] {\small $0$};
        \draw[-latex] (zp2) -- (z3) node [midway, above, fill=white] {\small $0$};
        \draw[-latex] (z3) -- (z4) node [midway, above, fill=white] {\small $0$};

        \node[] (b1) at (0,-1.7) {$1$};
        \node[] (b2) at (2,-1.7) {$0$};
        \node[] (b3) at (7,-1.7) {$0$};

        \draw[-latex] (z1) -- (b1) node [midway, left, fill=white] {\small $1$};
        \draw[-latex] (z2) -- (b2) node [midway, left, fill=white] {\small $1$};
        \draw[-latex] (z3) -- (b3) node [midway, left, fill=white] {\small $1$};
        
    \end{tikzpicture}

    \
    
    \caption{The function $h: \zo^{\ell}\times \zo^{\log \sqrt{n}} \times \zo^{n} \to \zo$ where $f$ is as in~\Cref{def:address-func}.}
    \label{fig:DL}
\end{figure}

We thus have
\[
    \Inf_i[h] = 
    \begin{cases}
        \Theta(2^{-i}) & i~\text{is a list variable},\\
        2^{-(\ell+1)} &i~\text{is a control variable}, \\
        \frac{2^{-\ell}}{\sqrt{n}} &i~\text{is an action variable}.
    \end{cases}
\]
In particular, taking $\ell := \log\sqrt{n}$, we get 
\begin{equation}
\label{eq:h-tinf}
    \TInf[h] = \pbra{\sum_{i=1}^{\ell} \Theta(2^{-i})} + 2^{-\ell}\cdot\TInf[f] = \Theta(1)\,. 
\end{equation}

Furthermore, the alternating labels in $g(z)$ ensures that
\begin{equation} 
\label{eq:decision-list-stays-balanced}
    \Ex_{(\bz, \bx, \by)\sim\zo^{\ell + \log\sqrt{n} + n}}\sbra{h_\pi(\bz, \bx, \by)} \in \sbra{\frac{1}{4}, \frac{3}{4}}
\end{equation}
holds for any restriction $\pi$ of form $\{z_1 \mapsto 0, z_2 \mapsto 0, \ldots, z_{\ell'} \mapsto 0\}$ where $\ell' \in \{0, 1, 2, \ldots, \ell\}$.

Next, consider an instance of the query problem where the costs of the list variables are $0$, the control variables are $\Theta_{\beta}(1)$, and the cost of each action variable is $\Theta_\beta(n^{-1/2})$. 
It follows that
\[
    \optavg_0(h,c) = 2^{-\ell}\cdot\optavg_0(f,c)\,,
\]
and similarly
\[
    \costavg^h_c(\WIPRR) = 2^{-\ell}\cdot\costavg^f_c(\WIPRR)\,.
\]
In particular, by \Cref{eq:decision-list-stays-balanced} and the assumption that $\eps < 1/4$ in \Cref{prop:warmup-iprr-hard-instance}, $\WIPRR(h, \eps)$ must query all of the $\ell$ list variables when they are all zeros. It then follows from~\Cref{eq:brookline,eq:h-tinf} that 
\[
    \costavg^h_c(\WIPRR) \geq \optavg_0(h)\cdot \Omega\pbra{\TInf[h]\cdot \frac{\sqrt{n}}{\log n}}\,,
\]
establishing~\Cref{eq:wiprr-lb-goal}. 

\subsection{Proof of \Cref{thm:iprr-main}}
\label{subsec:iprr-ub}

It follows from \Cref{prop:warmup-iprr-hard-instance} that we cannot hope for \WIPRR~to be competitive against $\optavg_0$ without losing a $\poly(n)$-factor in the competitive ratio.  
In this section, we describe a modification of \WIPRR{} which gives an algorithm that is competitive against $\optavg_\eps$, establishing~\Cref{thm:iprr-main}. 
We will in fact obtain~\Cref{thm:iprr-main} as an immediate consequence of the following (seemingly) weaker guarantee: 

\begin{proposition} 
\label{prop:markov}
    Let $f\isazofunc$ be a Boolean function, $c \in \R^n_{\geq 0}$ be a cost vector, and $\eps \in (0,0.5)$ be an error parameter. 
    There is an online algorithm  \fullAlg~(\Cref{alg:full-alg}) such that:
    \begin{itemize}
        \item \fullAlg{} computes $f$ to error $O(\eps)$, and 
        \item Assuming $\optworst_\eps = \Omega(1)$, the expected cost is at most
        \[
            \beta n + \optworst_\eps \cdot O\left(\frac{1}{\eps^2}\cdot\log\pbra{\frac{\log \optworst_\eps}{\eps}} \right)\cdot\sum_{i=1}^{n}\Inf_i[f]\pbra{1 + \ln\frac{1}{\Inf_i[f]}}. 
        \]
    \end{itemize}
\end{proposition}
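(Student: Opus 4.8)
The plan is to prove \Cref{prop:markov} by isolating a parametrized subroutine $\IPRR(f,\eps,B)$ and wrapping it in a doubling search over $B$. The subroutine behaves exactly like $\WIPRR$---at each step investing $\beta$ in the unrevealed variable $x_i$ maximizing $\Inf_i[f_\pi]/\theta_i$---but replaces the local rule ``$\bias(f_\pi)\le\eps$'' by the \emph{global} rule ``halt once $\theta_i\ge(B/\eps)\,\Inf_i[f_\pi]$ for every unrevealed $x_i$'', outputting $\mathbf{1}\{\Ex_{\bx\sim\zo^n}[f_\pi(\bx)]\ge 1/2\}$. Since an unrevealed $x_i$ has $\theta_i<c_i$ while a revealed one has $\Inf_i[f_\pi]=0$, this is equivalent to $\Inf_i[f_\pi]/c_i\le\eps/B$ for all $i$. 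The two facts I would establish about it are: (a) a cost bound $\costavg_c(\IPRR(B))\le\beta n+(B/\eps)\sum_i\Inf_i[f]\pbra{1+\ln(1/\Inf_i[f])}$, and (b) an error bound $\error_f(\IPRR(B))\le\eps\pbra{1+\optworst_\eps/B}$, which is $O(\eps)$ as soon as $B=\Omega(\optworst_\eps)$.

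For (a), I would reuse the argument of \Cref{thm:warmup-iprr} essentially verbatim---and in fact more cleanly. Fix $x$, and for each $i$ let $\pi^{(i)}$ be the restriction just before $\theta_i$ is incremented for the last time. Since $x_i$ was still an eligible investment target then, the global stopping rule \emph{directly} gives $\theta_i^\ast-\beta<(B/\eps)\,\Inf_i[f_{\pi^{(i)}}]\le(B/\eps)\max_{\pi\in\Pi(x)}\Inf_i[f_\pi]$; unlike in \Cref{thm:warmup-iprr}, no appeal to OSSS is needed at this step. Summing over $i$, taking the expectation over $\bx\sim\zo^n$, and invoking the same $[0,1]$-bounded martingale $(\bX^{(i)}_t)_t$ with Doob's inequality (\Cref{lemma:doob}) to get $\Ex\sbra{\max_{\pi\in\Pi(\bx)}\Inf_i[f_\pi]}\le\Inf_i[f]\pbra{1+\ln(1/\Inf_i[f])}$ yields (a).

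Claim (b) is the genuinely new ingredient. As $\IPRR(B)$ is an adaptive query algorithm, its leaves partition $\zo^n$, and on the leaf reached via restriction $\pi$ it incurs error exactly $\bias(f_\pi)$, so $\error_f(\IPRR(B))=\Ex_{\bpi}\sbra{\bias(f_{\bpi})}$ with $\bpi$ the random leaf hit by $\bx\sim\zo^n$. Let $\calA^\ast$ be an optimal $\eps$-error offline algorithm with $\costworst_c(\calA^\ast)=\optworst_\eps$, and let $\calA^\ast_\pi$ be $\calA^\ast$ with $\pi$ hard-wired; hard-wiring only removes query costs, so $\costavg^{f_\pi}_c(\calA^\ast_\pi)\le\costworst^{f_\pi}_c(\calA^\ast_\pi)\le\optworst_\eps$. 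Applying \Cref{lemma:OSSS-consequence} to $f_\pi$ and $\calA^\ast_\pi$ and using $\max_i\Inf_i[f_\pi]/c_i\le\eps/B$ at every leaf gives $\bias(f_\pi)\le\error_{f_\pi}(\calA^\ast_\pi)+\eps\optworst_\eps/B$ (the degenerate case $\costavg^{f_\pi}_c(\calA^\ast_\pi)=0$ is immediate, as then $\error_{f_\pi}(\calA^\ast_\pi)\ge\bias(f_\pi)$). The point is that averaging these over the leaves collapses the $\calA^\ast$-term: because the leaf subcubes partition $\zo^n$, $\sum_{\text{leaf }\pi}2^{-|\pi|}\,\error_{f_\pi}(\calA^\ast_\pi)=\Prx_{\by\sim\zo^n}\sbra{\calA^\ast(\by)\ne f(\by)}=\error_f(\calA^\ast)\le\eps$, whence $\error_f(\IPRR(B))\le\eps+\eps\optworst_\eps/B$.

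Finally, since $\optworst_\eps$ is unknown, $\fullAlg$ runs $\IPRR(B)$ for $B=1,2,4,\dots$; at step $j$ (guess $B=2^j$) it estimates $\error_f(\IPRR(B))$ by simulating $\IPRR(B)$ on $m_j=O\pbra{\eps^{-1}\log(1/\delta_j)}$ fresh uniform inputs---the one bookkeeping subtlety is that simulating $\IPRR(B)$ requires knowing which coordinates are cheap, which is learned by investing in the \emph{actual} input, with the $m_j$ simulations sharing this cost discovery (this is what inflates the per-step cost by a factor $m_j$)---and commits to the first $B$ whose empirical error is $\le 3\eps$. A multiplicative Chernoff bound shows that choosing $\delta_j=\Theta\pbra{\eps/(j+2)^2}$ makes $\sum_j\delta_j=O(\eps)$, so with probability $1-O(\eps)$ we never commit to a $B$ with true error $>4\eps$, while every $B=\Omega(\optworst_\eps)$ passes with high probability; hence $\fullAlg$ has error $O(\eps)$ and commits to some $B^\ast=O(\optworst_\eps)$ within $O(\log\optworst_\eps)$ steps, so $m_{\log B^\ast}=O\pbra{\eps^{-1}\log\pbra{\tfrac{\log\optworst_\eps}{\eps}}}$. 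Its expected cost is $\sum_j m_j\pbra{\beta n+(2^j/\eps)\sum_i\Inf_i[f](1+\ln(1/\Inf_i[f]))}$, which---as $m_j$ grows only doubly logarithmically while the $2^j$'s dominate---is, up to a constant, its final term, giving $\beta n+\optworst_\eps\cdot O\pbra{\tfrac{1}{\eps^2}\log\pbra{\tfrac{\log\optworst_\eps}{\eps}}}\sum_i\Inf_i[f](1+\ln(1/\Inf_i[f]))$ after absorbing lower-order $\beta n$ terms (which vanish as $\beta\to0$), matching the claim. I expect the main obstacle to be claim (b)---recognizing that the per-leaf OSSS estimates aggregate over the decision-tree leaves of $\IPRR(B)$ into a single $O(\eps)$ guarantee against the worst-case benchmark $\optworst_\eps$---together with the care needed to simulate $\IPRR(B)$ faithfully during testing without knowing $c$ in advance.
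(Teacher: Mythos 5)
Your proposal is essentially the paper's own argument: isolate the parametrized subroutine $\IPRR(f,\eps,B)$ with the global stopping rule, prove (a) a cost bound for a single call via the same martingale/Doob argument as in \Cref{thm:warmup-iprr} (and you correctly note that the stopping rule makes OSSS unnecessary at that step, which matches the paper's own handling of \IPRR{}), prove (b) an $O(\eps)$ error bound via the OSSS-consequence \Cref{lemma:OSSS-consequence} applied at each leaf and averaged (the paper phrases this as \Cref{prop:thresholded-IPRR-on-correct-cost} with $B \ge \optworst_\eps$ directly, whereas your version keeps $B$ general as $\eps(1+\optworst_\eps/B)$, a cosmetic difference), and then wrap in doubling over $B=2^j$ with Chernoff-based empirical testing using $m_j = \Theta(\eps^{-1}\log(j/\eps))$ samples.

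The only places where you are looser than the paper are bookkeeping: (i) for the expected-cost bound you should make explicit that $\Pr[\bi^\ast \ge i]$ decays geometrically for $i$ past $i_0 \approx \log\optworst_\eps$ (as the paper shows, $\le 4^{-(i-i_0)}$), since the per-step cost grows as $2^i$ and you need the tail to beat it---``commits within $O(\log\optworst_\eps)$ steps'' holds only w.h.p.; and (ii) you drop the paper's careful simulation bookkeeping (\Cref{remark:simulation}), which charges only the coordinate-wise \emph{maximum} of investments over all calls rather than their sum; this is what lets the paper keep the additive term at exactly $\beta n$. Your approach instead sums naively and waves the inflated $\beta n \cdot \mathrm{poly}(\log\optworst_\eps,1/\eps)$ term away as $\beta\to 0$, which is legitimate since $\beta$ is a free resolution parameter but gives a slightly weaker stated bound.
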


\Cref{thm:iprr-main} follows via an application of Markov's inequality; this observation was also made by Blanc, Lange, and Tan~\cite{blanc2021query}.  

\begin{proof}[Proof of~\Cref{thm:iprr-main} from \Cref{prop:markov}]
    It suffices to show that 
    \begin{equation} \label{eq:almost-free}
        \optworst_{2\eps}(f,c) \leq \frac{\optavg_\eps(f,c)}{\eps}. 
    \end{equation}
    Indeed, \Cref{thm:iprr-main} follows immediately from~\Cref{prop:markov} and~\Cref{eq:almost-free}. 
    
    Let $\calA$ be the $\eps$-error strategy with expected cost $\optavg_\eps$. 
    Consider the strategy $\calA'$ which executes $\calA$, except that it outputs $1$ if it makes a query leading to cost greater than \smash{$\frac{\optavg_\eps}{\eps}$}. 
    By Markov's inequality, note that $\calA$ differs from $\calA'$ with probability at most $\eps$ and consequently is a $2\eps$-error strategy for $f$.   
    Furthermore, note that it has worst-case cost at most $\optavg_\eps/\eps$, and so  \Cref{eq:almost-free} holds.
\end{proof}

The remainder of the section will establish~\Cref{prop:markov}. Our modification of \WIPRR{} is given by the algorithm \IPRR{} (\Cref{alg:full-alg}). 
The algorithm \IPRR{} differs from \WIPRR{} in two ways:
\begin{itemize}
    \item First, the algorithm takes in an additional input in the form of the threshold $B > 0$. 
    We can view $B$ as a proxy for the ``total budget'' of the algorithm. 
    (Looking ahead, our full algorithm \fullAlg{} (\Cref{alg:full-alg}) will use a doubling strategy to guess good choices of $B$ until $B = \optworst_\eps$.) 
    \item Second, the algorithm uses a different exit condition (Step~2(b) above).  
    Recall that the algorithm \WIPRR{} terminated when the bias of the restricted function was sufficiently small. 
\end{itemize}

\begin{algorithm}[t]
    \caption{The $\IPRR$ Algorithm}
    \label{alg:iprr}

    \vspace{1em}
    \textbf{Input:} Succinct representation of $f:\zo^n\to\zo$, threshold $B > 0$, $\eps \in (0, 0.5]$ \\[0.25em]
    \textbf{Output:} A bit $b \in \zo$

    \ 

    $\IPRR(f, \eps, B)$:
    \begin{enumerate}
        \item Initialize $\theta \leftarrow 0^n$ and $\pi \leftarrow \emptyset$. 
        \item Repeat:
        \begin{enumerate}
            \item Let $i^\ast \in [n]$ be the index such that 
            \[
                i^\ast = \arg\max_i \frac{\Inf_i[f_{\pi}]}{\theta_i}\,.
            \]
            \item If $\frac{\Inf_{i^\ast}[f_\pi]}{\theta_{i^\ast}} < \frac{\eps}{B}$, then halt and output $\1{\ex{\bx\sim\zo^n}{f_\pi(\bx)} \ge 1/2}$. \item Else:
            \begin{itemize}
                \item Spend cost $\beta$ towards $x_{i^\ast}$ and update $\theta \leftarrow \theta + \beta e_{i^\ast}$.
                \item If $x_{i^\ast}$ is revealed to be $b_{i^\ast} \in \zo$, then update $\pi \leftarrow \pi \cup \{x_{i^\ast} \mapsto b_{i^\ast}\}$.
            \end{itemize}
        \end{enumerate}
    \end{enumerate}   
\end{algorithm} 

The following proposition guarantees that $\IPRR{}$ has small error for the right choice of threshold $B$:

\begin{proposition}
    \label{prop:thresholded-IPRR-on-correct-cost}
    Let $f\isazofunc$ and $c$ be an unknown cost vector.  
    When $B \geq \optworst_\eps$, the algorithm $\IPRR = \IPRR(f, \eps, B)$ is a $2\eps$-error algorithm for $f$. 
\end{proposition}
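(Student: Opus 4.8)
The plan is to bound the error probability of $\IPRR(f,\eps,B)$ by $2\eps$ in three steps: (i) rewrite $\error_f(\IPRR)$ as the expected \emph{bias} of the restriction that $\IPRR$ terminates with; (ii) for a fixed terminal restriction $\pi$, use the halting rule of \Cref{alg:iprr} together with the OSSS consequence (\Cref{lemma:OSSS-consequence}) to show $\bias(f_\pi)\le \eps+\error_{f_\pi}(\calA^\circ_\pi)$ for a suitable restricted optimal offline strategy $\calA^\circ_\pi$; and (iii) average the error term over the random restriction, using that this averages back to $\error_f(\calA^\circ)\le\eps$.

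First I would fix the (unknown) cost vector $c$ and observe that, for fixed $c$, $\IPRR$ acts like a deterministic, adaptive decision procedure on $x$: at each step, the choice of $i^\ast$, the investment, whether the invested variable gets revealed (this depends only on $\theta$ and $c$, never on the \emph{value} of $x_{i^\ast}$), and the halting test all depend only on $f,\eps,B,c$ and on the values of the coordinates already revealed. Hence the set of revealed coordinates together with their values---i.e.\ the terminal restriction $\pi=\pi(x)$---is a function of $x$ that reads only the revealed coordinates, so conditioned on $\pi(\bx)=\pi$ the input $\bx\sim\zo^n$ is uniform on the subcube consistent with $\pi$. (Coordinates that were invested in but not revealed carry no information either, since ``not revealed'' is forced by $\theta$ and $c$ alone.) Since $\IPRR$ outputs the constant $\1{\ex{\bx'\sim\zo^n}{f_\pi(\bx')}\ge 1/2}$ on this subcube, its conditional error given $\pi(\bx)=\pi$ is exactly $\bias(f_\pi)$, and therefore
\[
    \error_f(\IPRR) = \Ex_{\bx\sim\zo^n}\sbra{\bias\pbra{f_{\pi(\bx)}}}.
\]

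Next, for a fixed terminal restriction $\pi$: by the halting test (Step~2(b)) we have $\Inf_i[f_\pi]/\theta_i < \eps/B$ for the argmax index, hence for every $i\in[n]$. A revealed coordinate has $\Inf_i[f_\pi]=0$, while an unrevealed coordinate has $\theta_i<c_i$; in all cases $\Inf_i[f_\pi]/c_i \le \Inf_i[f_\pi]/\theta_i < \eps/B$, so $\max_i \Inf_i[f_\pi]/c_i < \eps/B$. Let $\calA^\circ$ be an offline $\eps$-error strategy for $f$ with $\costworst_c^f(\calA^\circ)=\optworst_\eps$ (or within an arbitrarily small additive slack, exactly as in the proof of \Cref{thm:warmup-iprr}), and let $\calA^\circ_\pi$ be its restriction to $\pi$; then $\costavg_c^{f_\pi}(\calA^\circ_\pi)\le\costworst_c^{f_\pi}(\calA^\circ_\pi)\le\costworst_c^f(\calA^\circ)=\optworst_\eps\le B$. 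Applying \Cref{lemma:OSSS-consequence} to $f_\pi$ and $\calA^\circ_\pi$ (the case $\costavg_c^{f_\pi}(\calA^\circ_\pi)=0$ forces $\calA^\circ_\pi$ to be a constant, whence $\bias(f_\pi)\le\error_{f_\pi}(\calA^\circ_\pi)$; and when $\bias(f_\pi)\le\error_{f_\pi}(\calA^\circ_\pi)$ the conclusion is immediate anyway) gives
\[
    \frac{\eps}{B} \;>\; \max_{i}\frac{\Inf_i[f_\pi]}{c_i} \;\ge\; \frac{\bias(f_\pi)-\error_{f_\pi}(\calA^\circ_\pi)}{\costavg_c^{f_\pi}(\calA^\circ_\pi)} \;\ge\; \frac{\bias(f_\pi)-\error_{f_\pi}(\calA^\circ_\pi)}{B},
\]
so in all cases $\bias(f_\pi) \le \eps + \error_{f_\pi}(\calA^\circ_\pi)$.

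Finally I would take expectations over the random terminal restriction $\bpi = \pi(\bx)$. Since $\calA^\circ_\pi$ agrees with $\calA^\circ$ and $f_\pi$ with $f$ on the subcube of $\pi$, the law of total probability over $\bpi$ (again using that $\bx$ is uniform on the subcube of $\pi$ conditioned on $\bpi=\pi$) gives $\Ex_{\bpi}\sbra{\error_{f_{\bpi}}(\calA^\circ_{\bpi})} = \Prx_{\bx\sim\zo^n}\sbra{\calA^\circ(\bx)\neq f(\bx)} = \error_f(\calA^\circ) \le \eps$, whence
\[
    \error_f(\IPRR) = \Ex_{\bpi}\sbra{\bias(f_{\bpi})} \;\le\; \eps + \Ex_{\bpi}\sbra{\error_{f_{\bpi}}(\calA^\circ_{\bpi})} \;\le\; 2\eps.
\]
The step I expect to be the main obstacle is the first: carefully arguing that $\IPRR$---despite the investment/revelation mechanism rather than a plain query---still behaves like a decision tree for a fixed $c$, so that conditioning on the terminal restriction leaves the input uniform on the corresponding subcube. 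This is precisely what licenses passing from the per-restriction bound $\bias(f_\pi)\le\eps+\error_{f_\pi}(\calA^\circ_\pi)$ to the global error bound; note one cannot shortcut it by claiming ``$f_\pi$ is $\eps$-close to a constant,'' since restricting an $\eps$-error strategy need not yield an $\eps$-error strategy for the restricted function pointwise---only on average over $\pi$.
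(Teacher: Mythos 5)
Your proof is correct and follows essentially the same route as the paper's: restrict a worst-case-optimal $\eps$-error offline strategy $T$ to the terminal restriction $\pi$, invoke the OSSS consequence (\Cref{lemma:OSSS-consequence}) on $f_\pi$ and $T_\pi$ together with the halting rule to get $\bias(f_\pi)\le\eps+\dist(T_\pi,f_\pi)$, and take expectations over $\bpi$. The additional care you take in steps (i) and (iii)---that IPRR acts as a decision tree for fixed $c$, and that $\Ex_{\bpi}\sbra{\error_{f_{\bpi}}(T_{\bpi})}=\error_f(T)$---is left implicit in the paper but is exactly the right thing to make precise.
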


\begin{proof}
    Let $T$ be the $\eps$-error algorithm witnessing $\optworst_\eps$, i.e. 
    \[
        \error_f(T) \le \eps
    \quad\text{and}\quad
        \costworst_c(T) = \optworst_\eps(f, c).
    \]
    As in the proof of \Cref{thm:warmup-iprr}, for any restriction $\pi$, we let $T_{\pi}$ denote the algorithm obtained from $T$ by enforcing the restriction $\pi$: $T_{\pi}$ follows $T$ and, whenever $T$ queries a variable $x_i$ such that ``$x_i \mapsto b_i$'' is in the restriction $\pi$, $T_{\pi}$ simply forwards $b_i$ to $T$ as the value of $x_i$ and proceeds. By construction, we have
    \[
        \costavg^{f_\pi}_c(T_\pi)
    \le \costworst_c^{f_\pi}(T_\pi)
    \le \costworst_c^f(T).
    \]
    
    Applying \Cref{lemma:OSSS-consequence} to $f_\pi$ and $T_{\pi}$, we have the invariant that for any restriction $\pi$ during the execution of~\Cref{alg:iprr}, 
    \begin{equation} 
    \label{eq:life-alive}
        \max_{i} \frac{\Inf_i[f_\pi]}{c_i} \geq \frac{\bias(f_\pi) - \dist(T_\pi, f_\pi)}{\costavg^{f_\pi}_c(T_\pi)}.  
    \end{equation}
    Note that~\Cref{alg:iprr} only halts in Step~2(b) when 
    \[
        \frac{\eps}{B} \geq \max_{i} \frac{\Inf_i[f_\pi]}{c_i}\,.
    \]
    Since $B \geq \optworst_\eps(f, c) = \costworst^f_c(T)$, it follows that when the algorithm halts, the restriction $\pi$ satisfies 
    \[
        \frac{\eps}{\costworst^f_c(T)}
    \ge \frac{\eps}{B}
    \geq \frac{\bias(f_\pi) - \dist(T_\pi, f_\pi)}{\costavg^{f_\pi}_c(T_\pi)} \geq \frac{\bias(f_\pi) - \dist(T_\pi, f_\pi)}{\costworst^f_c(T)}\,,
    \]
    which can be rearranged to $\bias(f_\pi) \leq \eps + \dist(T_\pi, f_\pi)$. 
    As in the proof of~\Cref{thm:warmup-iprr}, we let $\bpi$ be the restriction induced by the algorithm~\IPRR{} on a uniformly random input $\bx\sim\zo^n$.
    Taking expectations, we get that 
    \[
        \error_f(\IPRR) = \Ex_{\bx\sim\zo^n}\sbra{\bias(f_{\bpi})} \leq \eps + \Ex_{\bx\sim\zo^n}\sbra{\dist(T_{\bpi}, f_{\bpi})} \leq 2\eps\,,
    \]
    where the final inequality relies on the fact that $T$ is an $\eps$-error algorithm for $f$.
\end{proof}

\Cref{prop:thresholded-IPRR-on-correct-cost} suggests a natural strategy: double our guess for $B$ until we hit $\optworst_\eps$. 
Note, however, that we do not know if our current guess is correct. 
We remedy this by \emph{testing} if our budget $B$ gives an empirical error of $O(\eps)$; this in done in Step~2(b) of \Cref{alg:full-alg}.

\begin{algorithm}[t]
    \caption{The \fullAlg~Algorithm}
    \label{alg:full-alg}

    \vspace{1em}
    \textbf{Input:} Succinct representation of $f:\zo^n\to\zo$, $\eps \in (0, 0.5]$ \\[0.25em]
    \textbf{Output:} A bit $b \in \zo$

    \ 

    $\fullAlg(f,\eps)$:
    \begin{enumerate}
        \item Initialize $i \leftarrow 0$. 
        \item Repeat:
        \begin{enumerate}
            \item Increment $i \gets i + 1$. Set $B_i := 2^i$ and $m_i := \Theta\pbra{\frac{1}{\eps}\log\frac{i}{\eps}}$. 
            \item Draw $\bx^{(1)}, \dots, \bx^{(m_i)} \sim \zo^n$ and let $\bb^{(\ell)} \leftarrow \IPRR(f, \eps, B_i)$ by simulating $\IPRR$ \newline on  $\bx^{(\ell)}$ as the unknown input.
            \item Compute 
            \[
                \bT_i := \frac{1}{m_i}\sum_{\ell=1}^{m_i} \mathbf{1}\cbra{\bb^{(\ell)} \neq f(\bx^{(\ell)})}\,,
            \]
            and if $\bT_i \leq 3\eps$, run $\IPRR(f,\eps,B_i)$ on the unknown input $x$. 
            
        \end{enumerate}
    \end{enumerate}   
\end{algorithm} 

\newcommand{\thetaout}{\theta^{\textrm{outer}}}
\newcommand{\thetain}{\theta^{\textrm{inner}}}
\begin{remark}\label{remark:simulation}
    We provide more details on how \fullAlg{} (\Cref{alg:full-alg}) simulates the \IPRR{} algorithm in Step~2(b). For clarity, let $\thetaout$ (resp., $\thetain$) denote the cumulative investment vector maintained by \fullAlg{} (resp., a simulation of \IPRR{}). Also, let $\bx^{(\ell)}$ denote the input for the simulation, and let $\bx$ denote the actual unknown input on which \fullAlg{} evaluates $f$. Recall that, in the simulations in Step~2(b), the input $\bx^{(\ell)}$ is \emph{known} to \fullAlg{} (and unknown to \IPRR{}).
    
    Whenever the simulated call to \IPRR{} increments $\thetain_i$ for some $i \in [n]$, \fullAlg{} handles it differently in the following three cases:
    \begin{itemize}
        \item \textbf{Case 1:} \fullAlg{} has already observed the value of $x_i$ (by reaching $\thetaout_i \ge c_i$). In this case, we know the value of $c_i$. Then, if the incremented value of $\thetain_i$ also reaches $c_i$, we reveal the value of $x^{(\ell)}_i$ to \IPRR{}; otherwise, we do nothing.
        \item \textbf{Case 2:} \fullAlg{} has not observed $x_i$ and $\thetain_i \le \thetaout_i$. We would know that $\thetain_i \le \thetaout_i < c_i$, so $x^{(\ell)}_i$ should not be revealed to \IPRR{} at this moment. We do nothing.
        \item \textbf{Case 3:} \fullAlg{} has not observed $x_i$ and $\thetain_i > \thetaout_i$ after the increment. In this case, we actually increment $\thetaout_i$ to match $\thetain_i$. If $x_i$ gets revealed to us as a result, we would know that $\thetain_i = \thetaout_i \ge c_i$, so we reveal the value of $x^{(\ell)}_i$ to \IPRR{}.
    \end{itemize}

    The simulation above ensures that, from the perspective of \IPRR{}, it indeed runs on an instance with unknown cost vector $c$ and input $\bx^{(\ell)}$. Note that this simulation does not require the full knowledge of $c$. Also, we always match the investment made by \IPRR{} whenever $\thetain_i$ exceeds $\thetaout_i$. As a result, the cumulative investment of \fullAlg{} in each variable $x_i$ is given by the \emph{maximum} (rather than the sum) over all cumulative investments into $x^{(\ell)}_i$ made by different calls to \IPRR{} throughout the execution of \fullAlg{}.
\end{remark}

A standard Chernoff bound gives us the following guarantee on the testing in Step~2(b):
\begin{lemma} 
\label{lemma:testing}
    Let $\IPRR = \IPRR(f,\eps,B_i)$. We have the following:
    \begin{itemize}
        \item If $\dist(\IPRR, f) \geq 4\eps$, then $\Pr[\bT_i > 3\eps]$ with probability at least $1 - \Theta(\eps i^{-2})$. 
        \item If $\dist(\IPRR, f) \leq 2\eps$, then $\Pr[\bT_i \leq 3\eps]$ with probability at least $1 - \Theta(\eps i^{-2})$. 
    \end{itemize}     
\end{lemma}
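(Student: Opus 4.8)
The plan is to recognize $\bT_i$ as the empirical mean of $m_i$ i.i.d.\ Bernoulli trials and apply \emph{multiplicative} Chernoff bounds. Using the multiplicative (rather than additive) form is the whole point here: since $m_i = \Theta\pbra{\tfrac1\eps\log\tfrac i\eps}$ grows only like $1/\eps$ (not $1/\eps^2$), an additive Hoeffding bound would give a failure probability of roughly $(i/\eps)^{-\Theta(\eps)}$, which tends to $1$ as $\eps\to 0$ and is useless; the multiplicative bounds work because the relevant absolute deviations are a \emph{constant} multiplicative fraction of quantities that are $\Omega(\eps m_i)$.

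First I would fix the phase index $i$. By construction of Step~2(b) of \Cref{alg:full-alg}, the runs $\bb^{(\ell)} \leftarrow \IPRR(f,\eps,B_i)$ are executed on the independent uniform samples $\bx^{(\ell)} \sim \zo^n$, and, as described in \Cref{remark:simulation}, the simulation faithfully presents $\IPRR$ with the instance consisting of input $\bx^{(\ell)}$ and the (fixed, unknown) cost vector $c$. Since $\IPRR$ is deterministic given $(f,\eps,B_i,c)$ and its input, each $\bb^{(\ell)}$ is a deterministic function of $\bx^{(\ell)}$, so the indicators $\mathbf{Z}_\ell := \1{\bb^{(\ell)} \neq f(\bx^{(\ell)})}$ are i.i.d.\ with mean $p := \dist(\IPRR,f)$; thus $\bT_i = \tfrac1{m_i}\sum_{\ell=1}^{m_i}\mathbf{Z}_\ell$ has mean $p$, and we write $\mu := \Ex\sbra{\sum_{\ell=1}^{m_i}\mathbf{Z}_\ell} = p\,m_i$. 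It then suffices to bound each of the two relevant tail probabilities by $\exp(-\Omega(\eps m_i))$: plugging in $m_i = \Theta\pbra{\tfrac1\eps\log\tfrac i\eps}$ turns this into $(i/\eps)^{-\Omega(1)}$, and taking the hidden constant in $m_i$ large enough forces the exponent to be at least $2$, so the bound is at most $(i/\eps)^{-2} = i^{-2}\eps^2 \le i^{-2}\eps = \Theta(\eps i^{-2})$ (using $\eps \le \tfrac12$).

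For the first bullet I would assume $p \ge 4\eps$, so that $3\eps \le \tfrac34 p$ and $\{\bT_i \le 3\eps\} \subseteq \cbra{\sum_\ell\mathbf{Z}_\ell \le \pbra{1-\tfrac14}\mu}$; the multiplicative Chernoff lower-tail bound together with $\mu \ge 4\eps m_i$ then gives $\Pr\sbra{\bT_i \le 3\eps} \le \exp(-\Omega(\mu)) = \exp(-\Omega(\eps m_i))$, as needed. For the second bullet I would assume $p \le 2\eps$ and bound $\Pr\sbra{\bT_i > 3\eps} = \Pr\sbra{\sum_\ell\mathbf{Z}_\ell > 3\eps m_i}$ by a case split on $\mu$. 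If $\mu$ is small, say $3\eps m_i \ge 2e\mu$ (equivalently $p \le \tfrac{3\eps}{2e}$), I would use a ``large relative deviation'' bound --- either the form $\Pr\sbra{\sum_\ell\mathbf{Z}_\ell \ge t} \le 2^{-t}$ valid whenever $t \ge 2e\mu$, or equivalently a direct union bound over which $\lceil 3\eps m_i\rceil$ of the samples are erroneous, bounding the probability by $\binom{m_i}{\lceil 3\eps m_i\rceil} p^{\lceil 3\eps m_i\rceil}$ --- to obtain $\Pr \le 2^{-3\eps m_i}$. Otherwise $\mu > \tfrac{3\eps}{2e}m_i = \Omega(\eps m_i)$, and since $p \le 2\eps$ gives $3\eps \ge \tfrac32 p$ we have $\{\bT_i > 3\eps\} \subseteq \cbra{\sum_\ell\mathbf{Z}_\ell > \tfrac32\mu}$, so the multiplicative Chernoff upper-tail bound yields $\Pr \le \exp(-\Omega(\mu)) = \exp(-\Omega(\eps m_i))$. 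Either way $\Pr\sbra{\bT_i > 3\eps} \le \exp(-\Omega(\eps m_i))$, and the claim follows from the reduction above.

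The main obstacle is exactly the one flagged at the outset: because $m_i$ scales like $1/\eps$ rather than $1/\eps^2$, additive concentration is too weak, and for the upper tail one must separately handle the regime where $p$ (hence $\mu = p\,m_i$) is so small that the standard constant-$\delta$ multiplicative bound degenerates --- which is precisely what the case split in the second bullet takes care of. Everything else is a routine Chernoff computation, and I would not belabor the constants.
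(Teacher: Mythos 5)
Your proposal is correct, and it follows the same high-level strategy as the paper: recognize $\bT_i$ as the empirical mean of $m_i$ i.i.d.\ Bernoulli$(p)$ indicators with $p = \dist(\IPRR, f)$, and apply a multiplicative Chernoff bound, noting that an additive bound would be useless at the $m_i = \Theta(\eps^{-1}\log(i/\eps))$ sample size.

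Where you go further than the paper: the paper quotes the two-sided multiplicative Chernoff bound restricted to deviation parameter $t \in (0,1)$, carries out Item~1 with $t = 0.01$ using $p \ge 4\eps$ (so the exponent is $\Omega(t^2 p m_i) = \Omega(\eps m_i)$), and then simply writes ``The proof of Item~2 is identical.'' As literally stated, this is incomplete. For Item~2 one has $p \le 2\eps$, and the containment $\{\bT_i > 3\eps\} \subseteq \{|\sum_\ell \bX_\ell - p m_i| \ge t p m_i\}$ forces $t \le 1/2$; the resulting bound $e^{-\Omega(t^2 p m_i)} = e^{-\Omega(p m_i)}$ is far weaker than $e^{-\Omega(\eps m_i)}$ when $p \ll \eps$, and in the extreme $p \to 0$ it degenerates entirely. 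The issue is precisely the one you flag: once the deviation ratio $\delta = 3\eps/p - 1$ exceeds $1$, the constant-$\delta$ form of the bound no longer gives the right exponent, and one needs the large-deviation regime $e^{-\Omega(\mu\delta)}$ (equivalently $\min(\delta,\delta^2)$ in the exponent, as in the Vershynin statement the paper cites). Your case split on whether $\mu = pm_i$ is above or below $\Theta(\eps m_i)$---using a Poisson-type tail $\Pr[\sum \bX_\ell \ge t] \le 2^{-t}$ for $t \ge 2e\mu$ in the small-$\mu$ regime, and the ordinary multiplicative bound otherwise---correctly closes this gap, as does the observation that one may simply use the $\min(\delta,\delta^2)$ form throughout. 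Your accounting at the end (choosing the hidden constant in $m_i$ so that the exponent is at least $2$, then $(i/\eps)^{-2} \le \eps i^{-2}$) is the same as the paper's final step. In short: same approach, but your writeup is the more careful one, and it repairs an imprecision in the paper's treatment of Item~2.
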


\begin{proof}
    This is an easy consequence of a standard (multiplicative) Chernoff bound; see, for e.g., Exercise~2.3.5 of~\cite{vershynin2018high}. 
    In particular, for i.i.d.~Bernoulli$(p)$ random variables $\bX_1, \dots, \bX_m$ for $t\in(0,1)$, we have 
    \[
         \Prx\sbra{\abs{\sum_{\ell=1}^m \bX_\ell - pm} \geq tpm} \leq 2e^{-\Omega(t^2pm)}\,.
    \]


    To see Item~1, let $p := \dist(\IPRR, f) \geq 4\eps$ by assumption. 
    Let $\bX_\ell := \mathbf{1}\{\bb^{(\ell)} \neq f(\bx^{(\ell)})\}$ and note that $\bX_\ell\sim$ Bernoulli$(p)$. 
    We then have 
    \begin{align*}
        \Prx\sbra{\bT_i \leq 3\eps} &= \Prx\sbra{\sum_{\ell=1}^{m_i}  \bX_\ell \leq 3\eps m_i} \\
        &\leq  \Prx\sbra{\abs{ \sum_{\ell=1}^{m_i} \bX_\ell - p m_i} \geq 0.01 pm_i} \\ 
        &\leq 2e^{-\Omega(\eps m_i)} \leq \Theta\pbra{\frac{\eps}{i^2}}\,,\tag{Since $m_i = \Theta\pbra{\eps^{-1}\log(\eps^{-1}i)}$}
    \end{align*}
    where we relied on the fact that $p \geq 4\eps$ to choose $t = 0.01$ while applying the Chernoff bound. 
    The proof of Item~2 is identical. 
\end{proof}

Finally, we turn to the proof of~\Cref{prop:markov}, which in turn completes the proof of~\Cref{thm:iprr-main}:

\begin{proof}[Proof of~\Cref{prop:markov}]
    Let random variable $\bi^\ast$ denote the value of the counter $i$ when the algorithm calls \IPRR{} in Step~2(c). Let $i_0 \ge 1$ be the smallest positive integer such that $2^{i_0} \ge \optworst_\eps$. Note that, assuming $\optworst_\eps = \Omega(1)$, we have $2^{i_0} = O(\optworst_\eps)$.

    We first show that \fullAlg{} computes $f$ to error $O(\eps)$. Let $E$ be the event that the two conditions in \Cref{lemma:testing} simultaneously hold for all $i \in [i_0]$. By \Cref{lemma:testing} and the union bound, we have
    \[
        \Prx[\neg E] \le \sum_{i=1}^{i_0}\Theta(\eps i^{-2}) \le O(\eps)\,.
    \]
    Moreover, we note that event $E$ implies the following two conditions:
    \begin{itemize}
        \item $\bi^{\ast} \le i_0$, i.e., \fullAlg{} must run \IPRR{} in Step~2(c) using one of the first $i_0$ guesses $B_1, B_2, \ldots, B_{i_0}$. This is because, by definition of $i_0$, we have $B_{i_0} = 2^{i_0} \ge \optworst_\eps$. By \Cref{prop:thresholded-IPRR-on-correct-cost}, $\IPRR(f, \eps, B_{i_0})$ computes $f$ to error $2\eps$. Event $E$ then ensures that, if \fullAlg{} does not halt in the first $i_0 - 1$ iterations, the condition $\bT_{i_0} \le 3\eps$ would be true, in which case \fullAlg{} runs \IPRR{} in Step~2(c) with parameter $B_{i_0}$.
        \item $\error_f(\IPRR(f, \eps, B_{\bi^\ast})) \le 4\eps$, i.e., when \fullAlg{} runs \IPRR{} in Step~2(c), the parameter $B_{\bi^\ast}$ ensures that \IPRR{} is a $4\eps$-error algorithm. This is because, for \fullAlg{} to run $\IPRR$ in Step~2(c), the condition $\bT_{\bi^\ast} \le 3\eps$ must hold. Event $E$ then ensures that $\IPRR(f, \eps, B_{\bi^\ast})$ computes $f$ to error $4\eps$.
    \end{itemize}
    The two observations above show that, conditioned on the event $E$, the error probability is bounded by $4\eps$. Thus, we have that 
    \begin{align*}
        \error_f(\fullAlg) 
        &\leq \Pr[\neg E] + 4\eps\cdot\Pr[E] = O(\eps)\,.
    \end{align*}

    Next, we turn to controlling the expected cost of \fullAlg. It suffices to show that, for every $j \in [n]$, the expected investment of \fullAlg{} in $x_j$ is upper bounded by
    \begin{equation}\label{eq:fullAlg-single-investment}
        \beta + \Inf_j[f] \cdot \pbra{1 + \ln\frac{1}{\Inf_j[f]}}\cdot O\pbra{\frac{\optworst_\eps}{\eps^2}\log\pbra{\frac{\log\optworst_\eps}{\eps}}}.
    \end{equation}
    The proposition would directly follow from summing the above over $j \in [n]$.

    Towards proving the upper bound in \Cref{eq:fullAlg-single-investment}, we fix $j \in [n]$ and examine the expected investment in $x_j$ when $\IPRR(f, \eps, B_i)$ runs on a uniformly random input $\bx \sim \zo^n$. The analysis will be similar to one from the proof of~\Cref{thm:warmup-iprr}, so we will be succinct. 
    On any fixed input $x$, let $\Pi(x)$, $\pi^{(j)}$ and $\theta^\ast_j$ be as in the proof of~\Cref{thm:warmup-iprr}. 
    Given that the algorithm has not yet halted as a result of Step~2(b) before incrementing $\theta_j$ from $\theta^\ast_j - \beta$ to $\theta^\ast_j$, we have 
    \[
        \frac{\Inf_j[f_{\pi^{(j)}}]}{\theta^\ast_j - \beta} \geq \frac{\eps}{B}\,,
        \qquad\text{and so}\qquad 
        \theta^\ast_j \leq \beta + \frac{B}{\eps}\cdot\Inf_j[f_{\pi^{(j)}}] \leq \beta + \frac{B}{\eps}\cdot\max_{\pi \in \Pi(x)}\Inf_j[f_\pi]\,.
    \]

    For each $i \in [\bi^\ast]$, let
    \[
        \bY_i \coloneqq \max_{\pi \in \Pi(\bx^{(1)}) \cup \cdots \cup \Pi(\bx^{(m_i)}) \cup \Pi(\bx)}\Inf_j[f_\pi]
    \]
    denote the maximum influence $\Inf_j[f_\pi]$ when $\pi$ ranges over all restrictions encountered by $\IPRR(f, \eps, B_i)$ when it runs on $\bx^{(1)}, \bx^{(2)}, \ldots, \bx^{(m_i)}$ (and possibly the unknown input $\bx$) using the $i$-th guess $B_i$. Then, the maximum investment in $x_j$ made by any call to $\IPRR(f, \eps, B_i)$ is upper bounded by
    \[
        \beta + \frac{B_i}{\eps}\cdot\bY_i.
    \]

    Recall from \Cref{remark:simulation} that the cumulative investment made by \fullAlg{} in each $x_j$ is exactly the maximum investment in $x_j$ among all calls to \IPRR{} during its execution. Therefore, the expected investment in $x_j$ from \fullAlg{} is upper bounded by
    \begin{equation}\label{eq:fullAlg-single-investment-1}
        \ex{}{\beta + \max_{i \in [\bi^\ast]}\frac{B_i}{\eps}\cdot\bY_i}
    \le \beta + \ex{}{\sum_{i=1}^{\bi^\ast}\frac{B_i}{\eps}\cdot\bY_i}
    =   \beta + \sum_{i=1}^{\infty}\frac{B_i}{\eps} \cdot \ex{}{\bY_i \mid \bi^\ast \ge i} \cdot \pr{}{\bi^\ast \ge i}.
    \end{equation}

    To control the last summation in the above, we will prove the following two claims:
    \begin{itemize}
        \item For every $i \ge 1$, $\ex{}{\bY_i \mid \bi^\ast \ge i} \le (m_i + 1)\cdot\Inf_j[f] \cdot \pbra{1 + \ln\frac{1}{\Inf_j[f]}}$. 
        \item For every $i > i_0$, $\pr{}{\bi^\ast \ge i} \le \frac{1}{4^{i - i_0}}$.
    \end{itemize}
    Assuming the two claims above, the upper bound in \Cref{eq:fullAlg-single-investment} follows from a straightforward calculation: Plugging the upper bound on $\ex{}{\bY_i \mid \bi^\ast \ge i}$ as well as $B_i = 2^i$ and $m_i = O(\eps^{-1}\log(i/\eps))$ into the last summation in \Cref{eq:fullAlg-single-investment-1} gives
    \begin{align*}
        &~\sum_{i=1}^{\infty}\frac{B_i}{\eps}\cdot (m_i + 1)\cdot\Inf_j[f] \cdot \pbra{1 + \ln\frac{1}{\Inf_j[f]}} \cdot \pr{}{\bi^\ast \ge i}\\
    \le &~O\pbra{\frac{1}{\eps^2}}\cdot\Inf_j[f] \cdot \pbra{1 + \ln\frac{1}{\Inf_j[f]}}\cdot\sum_{i=1}^{\infty}2^i\log\frac{i}{\eps}\cdot\pr{}{\bi^\ast \ge i}.
    \end{align*}
    The summation $\sum_{i=1}^{\infty}2^i\log\frac{i}{\eps}\cdot\pr{}{\bi^\ast \ge i}$ can be further rewritten as
    \begin{align*}
        &~\sum_{i=1}^{i_0}2^i\log\frac{i}{\eps}\cdot\pr{}{\bi^\ast \ge i} + \sum_{i=i_0 + 1}^{\infty}2^i\log\frac{i}{\eps}\cdot\pr{}{\bi^\ast \ge i}\\
    \le &~\sum_{i=1}^{i_0}2^i\log\frac{i}{\eps} + \sum_{i=i_0 + 1}^{\infty}2^i\log\frac{i}{\eps}\cdot\frac{1}{4^{i - i_0}}\\
    =   &~O\pbra{2^{i_0}\log\frac{i_0}{\eps}}
    =  O\pbra{\optworst_\eps\log\pbra{\frac{\log\optworst_\eps}{\eps}}}.
    \end{align*}
    The first step above upper bounds $\pr{}{\bi^\ast \ge 1}$ by $1$ when $i \le i_0$, and by $1/4^{i - i_0}$ when $i > i_0$. The second step observes that the two summations are dominated by the terms at $i = i_0$ and $i = i_0 + 1$, respectively. The last step applies $2^{i_0} = O(\optworst_\eps)$, which follows from the definition of $i_0$.

    It remains to verify the two claims regarding $\ex{}{\bY_i \mid \bi^\ast \ge i}$ and $\pr{}{\bi^\ast \ge i}$. For the first claim, note that, conditioning on $\bi^\ast \ge i$, the $m_i$ inputs $\bx^{(1)}, \ldots, \bx^{(m_i)}$ as well as the unknown input $\bx$ are still uniformly distributed over $\zo^n$. Proceeding \emph{mutatis mutandis} as in the martingale argument from the proof of~\Cref{thm:warmup-iprr}, we can show that, when $\IPRR(f, \eps, B_i)$ runs on each of these $m_i + 1$ inputs, the expected maximum value of $\Inf_j[f_\pi]$ is at most $\Inf_j[f] \cdot\pbra{1 + \ln\frac{1}{\Inf_j[f]}}$. Recall that $\bY_i$ is defined as the maximum over these $m_i + 1$ maxima. By relaxing the maximum to a sum, we obtain the claim that $\ex{}{\bY_i \mid \bi^\ast \ge i} \le (m_i + 1)\Inf_j[f] \cdot\pbra{1 + \ln\frac{1}{\Inf_j[f]}}$.

    For the second claim, note that for every $i \ge i_0$, it holds that $B_i \ge 2^{i_0} \ge \optworst_\eps$. Then, \Cref{prop:thresholded-IPRR-on-correct-cost} and \Cref{lemma:testing} together imply that \fullAlg{} would halt in Step~2(c) except with probability $O(\eps i^{-2})$. By carefully choosing the hidden constant factor in $O(\cdot)$, this probability is at most $1/4$. Then, in order for $\bi^\ast \ge i \ge i_0 + 1$ to happen, \fullAlg{} must fail to halt on each of the $i - i_0$ guesses $B_{i_0}, B_{i_0 + 1}, \ldots, B_{i-1}$, which happens with probability at most $1/4^{i - i_0}$.

    This verifies the two claims, thus establishing that \Cref{eq:fullAlg-single-investment} upper bounds the expected investment in $x_j$ and proving the proposition.

\end{proof}

\subsection{A Sharper Analysis for Symmetric Functions}
\label{subsec:symmetric}

Recall that a Boolean function $f\isazofunc$ is \emph{symmetric} if the following holds:
\[
    f(x) = f(y) \qquad\text{whenever}\qquad \sumi x_i = \sumi y_i\,.
\]
Note that many natural functions, including $\AND$, $\OR$, and $\MAJ$ are symmetric. 
Furthermore, prior work on priced query strategies has designed algorithms tailored to symmetric functions~\cite{gkenosis2018stochastic,hellerstein2018stochastic,hellerstein2022adaptivity,hellerstein2024quickly}. 

We show that the algorithm \WIPRR{} achieves the following performance guarantee: 

\begin{theorem}[Formal version of \Cref{thm:symmetric-informal}]\label{thm:symmetric}
    For any symmetric function $f\isazofunc$, the algorithm $\WIPRR(f,\eps)$ with unit investment $\beta$ is an $\eps$-error algorithm with expected cost 
    \[
        \beta n + \optavg_0\cdot O\pbra{\log\frac{1}{\eps}}. 
    \]
\end{theorem}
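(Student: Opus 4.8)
The plan is to exploit that, on a symmetric function, \emph{both} $\WIPRR$ and the optimal zero-error offline algorithm end up revealing variables in non-decreasing order of cost. Every restriction of a symmetric function is symmetric in its remaining variables and so has all coordinate influences equal; hence the $\arg\max$ in Step~2(a) of $\WIPRR$ always picks a least-invested unrevealed variable, so $\WIPRR$ keeps a common investment level across all unrevealed variables and reveals them in non-decreasing cost order. A standard exchange argument shows that the optimal zero-error offline algorithm $\calA^\star$ witnessing $\optavg_0$ also queries cheapest-first and halts exactly once the restricted function becomes constant. The claimed $\eps$-error bound is already supplied by \Cref{thm:warmup-iprr}, so only the cost bound needs proof. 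Fix an input, relabel so that $c_1 \le \cdots \le c_n$, and let $T_{\on}(x)$, $T_{\off}(x)$ be the numbers of variables revealed on $x$ by $\WIPRR$ and $\calA^\star$. Since ``$f_\pi$ constant'' implies ``$\bias(f_\pi) \le \eps$,'' we get $T_{\on}(x) \le T_{\off}(x)$; and up to an additive $O(\beta)$ per coordinate, $\cost_c(\WIPRR, x) \le O(\beta n) + \sum_{i=1}^{n}\min(c_i,\, c_{T_{\on}(x)})$ while $\cost_c(\calA^\star, x) \ge \sum_{t=1}^{T_{\off}(x)} c_t$.

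Next I would compare these two costs level by level. Writing $N(\lambda) := |\{i : c_i \le \lambda\}|$, an elementary rewriting gives $\sum_i \min(c_i, c_{T_{\on}(x)}) = \int_0^\infty (n - N(\lambda))\cdot\1{N(\lambda) < T_{\on}(x)}\,\rmd\lambda$ and $\sum_{t\le T_{\off}(x)} c_t = \int_0^\infty (T_{\off}(x) - N(\lambda))^+\,\rmd\lambda$, with $(\cdot)^+ := \max(\cdot, 0)$. After taking expectations over $\bx\sim\zo^n$, it therefore suffices to show that for every integer $N \in \{0,\dots,n\}$,
\[
    (n - N)\cdot\Prx_{\bx}\sbra{T_{\on}(\bx) > N} \;\le\; O\!\pbra{\log\tfrac1\eps}\cdot\Ex_{\bx}\sbra{(T_{\off}(\bx) - N)^+},
\]
since plugging in $N = N(\lambda)$ and integrating gives $\costavg_c(\WIPRR) \le O(\beta n) + O(\log\tfrac1\eps)\cdot\optavg_0$, which is the theorem after rescaling $\beta$. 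For the displayed inequality, condition on $\boldsymbol{\rho} :=$ the values of the $N$ cheapest coordinates of $\bx$; this is a uniformly random restriction of those coordinates, $f_{\boldsymbol{\rho}}$ is symmetric in the remaining $m := n - N$ variables, the event $\{T_{\on}(\bx) > N\}$ is $\boldsymbol{\rho}$-measurable and contained in $\{\bias(f_{\boldsymbol{\rho}}) > \eps\}$, and conditioned on $\boldsymbol{\rho} = \rho$ with $\bias(f_\rho) > \eps$ one has $(T_{\off}(\bx) - N)^+ = D(f_\rho)(\bx_{\mathrm{rest}})$, where $\bx_{\mathrm{rest}} \sim \zo^m$ and $D(g)(y)$ is the number of reveals a zero-error algorithm must make on input $y$ before the symmetric function $g$ becomes constant.

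The heart of the argument is the deterministic bound: \emph{if $g : \zo^m \to \zo$ is symmetric with $\bias(g) > \eps$, then $D(g)(y) \ge \Omega(m/\log\tfrac1\eps)$ for every $y \in \zo^m$.} Granting it, the displayed inequality follows by lower-bounding the $\boldsymbol{\rho}$-conditional expectation of $(T_{\off}(\bx)-N)^+$ by $\Omega(m/\log\tfrac1\eps)$ on the event $\{\bias(f_{\boldsymbol{\rho}}) > \eps\} \supseteq \{T_{\on}(\bx) > N\}$. To prove the bound, note that after revealing $t$ variables of which $j$ equal $1$, the restriction of $g$ to the remaining $m-t$ variables is constant iff the integer interval $[j,\, j+m-t-1]$ avoids every \emph{boundary point} of $g$ (a weight $k$ at which $g$'s value changes); let $B \neq \emptyset$ be the set of boundary points and pick $b^\ast \in B$ maximizing $\phi(b) := \min(b+1,\, m-b)$. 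Dislodging $b^\ast$ from the interval forces at least $b^\ast + 1$ revealed ones or at least $m - b^\ast$ revealed zeros, so $D(g)(y) \ge \phi(b^\ast)$ for every $y$. Finally, either $\phi(b^\ast) \ge m/2 - O(\sqrt m)$, or else, writing $d := m/2 - \phi(b^\ast) = \omega(\sqrt m)$, the maximality of $\phi(b^\ast)$ forces $g$ to be constant---and equal to its majority value---on $\Omega(d)$ consecutive Hamming weights around $m/2$, so a Hoeffding bound for $\mathrm{Bin}(m, \tfrac12)$ yields $\bias(g) \le 2e^{-\Omega(d^2/m)}$, and $\bias(g) > \eps$ then gives $d = O(\sqrt{m\log\tfrac1\eps})$; in all cases $\phi(b^\ast) \ge m/2 - O(\sqrt{m\log\tfrac1\eps})$. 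A short case split---$m \ge C\log\tfrac1\eps$ (so $\phi(b^\ast) = \Omega(m)$) versus $m < C\log\tfrac1\eps$ (so $D(g)(y) \ge 1 \ge \Omega(m/\log\tfrac1\eps)$, using $\log\tfrac1\eps \ge 1$)---then finishes the bound.

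The main obstacle, I expect, is the deterministic lower bound above, and in particular the binomial-tail estimate relating the flatness of $g$ near Hamming weight $m/2$ to $\bias(g)$; the remaining points needing care are the exchange argument for the optimality of the cheapest-first zero-error offline strategy and the $O(\beta)$-resolution bookkeeping, and the rest is routine.
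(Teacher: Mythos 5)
Your proof is correct. The overall route---observing that on a symmetric function both $\WIPRR$ and the optimal zero-error offline algorithm reveal variables in nondecreasing order of cost, expressing both costs via stopping times, reducing by linearity over conic combinations of the cost vector (equivalently your integral representation via $N(\lambda)$) to a per-threshold inequality, and then conditioning on the values of the cheapest $N$ coordinates---is essentially the paper's. Where you genuinely diverge is in the final lemma that lower bounds the offline algorithm's reveal count on the restricted symmetric function $g$ over $m := n - N$ variables when $\bias(g) > \eps$. The paper proves only the in-expectation statement $\Ex[\btau^{(0)}] = \Omega(m/\log(1/\eps))$, arguing probabilistically that after $m/2$ reveals the numbers of observed zeros and ones are balanced with constant probability, which forces the restriction to remain non-constant (via a Chernoff estimate tied to $\bias(g) \ge \eps$). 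You instead prove the strictly stronger \emph{pointwise} bound that every input forces $\Omega(m/\log(1/\eps))$ reveals: you select the boundary weight $b^\ast$ maximizing $\phi(b) := \min(b+1,\, m-b)$, note that any terminating history must pay at least $\phi(b^\ast)$ reveals to evict $b^\ast$ from the window of feasible Hamming weights, and then lower bound $\phi(b^\ast)$ by relating it to $\bias(g)$ via a Hoeffding tail for $\mathrm{Binomial}(m, 1/2)$. Both arguments lean on a binomial-tail estimate at roughly the same point and are of comparable difficulty, but your deterministic version is a cleaner structural statement (and would be the right form if one wanted to push this beyond the uniform input distribution).
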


\paragraph{Necessity of the log factor.} To see why the $\log(1/\eps)$ factor is necessary, consider the case where $f$ is the $\AND$ function over $n = \Theta(\log(1/\eps))$ variables, each with a unit cost. The optimal algorithm simply queries the variables in order, and stops as soon as any of the input bits is revealed to be a zero. The expected cost is then $\optavg_0 = O(1)$. On the other hand, the \WIPRR{} algorithm needs to pay the cost of all the $n$ variables before seeing any input bit, leading to a gap of $\Omega(n) = \Omega(\log(1/\eps))$. Note that we cannot take $n$ to be larger than $\log(1/\eps)$; otherwise $f$ would be $\eps$-close to the constant zero function and \WIPRR{} simply outputs $0$ without incurring any cost. In \Cref{sec:tribes-lb}, we show that \emph{any} online algorithm needs to pay this additional $\log(1/\eps)$ factor.

\newcommand{\btau}{\boldsymbol{\tau}}

\subsubsection{Analysis of the Optimal Algorithm and Round-Robin}
For brevity, we rename the input bits in increasing order of their costs, i.e., $c_1 \le c_2 \le \cdots \le c_n$. Below are the behavior and the associated costs of the optimal (offline) algorithm and the \WIPRR{} algorithm:
\begin{itemize}
	\item The offline algorithm queries $x_1, x_2, \ldots, x_n$ in order, and stops as soon as for some $t$, $f|_{x_1,\ldots,x_t}$ becomes a constant function. Let $\btau^{(0)}$ denote the stopping time of the algorithm, i.e., the smallest such index $t$ on a uniformly random input $\bx\sim\zo^n$. 
    The expected cost is given by
	\begin{equation}\label{eq:opt-symmetric}
		\optavg_0(f, c) = \sum_{i=1}^{n}c_i \cdot \pr{}{\btau^{(0)} \ge i}\,.
	\end{equation}
	\item When running \WIPRR{}, the input bits $x_1, x_2, \ldots, x_n$ are revealed in order, and the algorithm stops as soon as for some $t$, $f|_{x_1,\ldots,x_t}$ becomes $\eps$-close to a constant function. 
    Let $\btau^{(\eps)}$ denote this stopping time for $\bx\sim\zo^n$. 
    The expected cost of \WIPRR{} is then given by
	\begin{equation}\label{eq:sol-symmetric}
		\costavg^f_c(\WIPRR) \le \beta n + \sum_{i=1}^{n}c_i \cdot \left[\pr{}{\btau^{(\eps)} \ge i} + (n - i)\cdot\pr{}{\btau^{(\eps)} = i}\right]\,.
	\end{equation}
	Here, the term $(n - i)\cdot\pr{}{\btau^{(\eps)} = i}$ is due to that if $\bx_1$ through $\bx_{\tau^{(\eps)}}$ get revealed by the end of the algorithm, we also pay a cost of $c_{\tau^{(\eps)}}$ on each of the input bits $\bx_{\tau^{(\eps)} + 1}$ through $x_n$. 
\end{itemize}

See~\Cref{fig:symm} for an illustration of this.

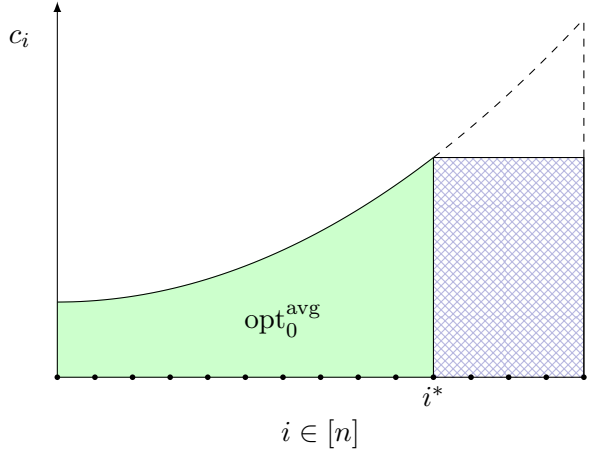
\begin{figure}
    \centering
    \begin{tikzpicture}

        \def\lambdaPlot{\x,{1 + (\x^2/13)}}
        \filldraw[draw,pattern=crosshatch,pattern color=blue!40!gray!40] (5,0) -- (5,2.923) -- (7,2.923) -- (7,0);
        \filldraw[draw,color=black,green!20] (0,0) -- plot[domain=0:5,smooth] (\lambdaPlot) -- (5,2.923) -- (5,0) -- (0,0);
        
        \draw[-latex] (0,0) -- (0,5);
        \draw (0,0) -- (7,0);

        \draw (7,0) -- (7,2.923);
        \draw[dashed] (7,2.923) -- (7,4.769);
        \node (iast) at (5,-0.25) {$i^\ast$};

        \draw (5,0) -- (5,2.923);

        \node (opt) at (3,0.75) {$\optavg_0$};

        \foreach \x in {0,1,...,14} {
            \node at (\x/2,0) [circle,fill,inner sep=.75pt]{};
        }

        \node (ex) at (3.5,-0.75) {$i\in[n]$};
        \node (why) at (-0.5,4.5) {$c_i$};
        \draw[] plot[scale=1,domain=0:5,smooth] (\lambdaPlot);
        \draw[dashed] plot[scale=1,domain=5:7,smooth] (\lambdaPlot);
    \end{tikzpicture}
    \caption{An illustration of the behaviors of the optimal offline algorithm and $\WIPRR$ on a symmetric function. Here, we assume that the costs are ordered as $c_1 \leq c_2 \leq \cdots \leq c_n$, and $i^\ast$ denotes the index s.t.~$\bias(f) \leq\eps$ after fixing $x_1, \dots, x_{i^\ast}$. The cross-hatched region denotes the additional cost incurred by $\WIPRR$ in contrast to the optimal offline algorithm.}
    \label{fig:symm}
\end{figure}

\subsubsection{Proof of \Cref{thm:symmetric}}
In light of \Cref{eq:opt-symmetric,eq:sol-symmetric}, to prove \Cref{thm:symmetric}, it suffices to show that for any symmetric function $f$ and any set of costs $0 \le c_1 \le c_2 \le \cdots \le c_n$, it holds that
\begin{equation}\label{eq:symmetric-general-costs}
	\sum_{i=1}^{n}c_i \cdot \left[\pr{}{\btau^{(\eps)} \ge i} + (n - i)\cdot\pr{}{\btau^{(\eps)} = i}\right]
\le O\pbra{\log\frac{1}{\eps}} \cdot \sum_{i=1}^{n}c_i \cdot \pr{}{\btau^{(0)} \ge i}.
\end{equation}

Note that it is sufficient to prove the above for costs of form $c_i = \mathbf{1}\cbra{i \geq \istar}$ for some $\istar \in [n]$, i.e., there is a unit cost for every index above the threshold $\istar$, and the cost is zero below $\istar$. This is because: (1) every increasing sequence of costs $c_1 \le c_2 \le \cdots \le c_n$ can be written as a conic (non-negative) combination of costs of form $c_i = \mathbf{1}\cbra{i \ge \istar}$; (2) both sides of \Cref{eq:symmetric-general-costs} are linear in $c$.

With the observation above, it remains to show that, for any symmetric $f$ and $\istar \in [n]$, it holds that
\[
	(n - \istar + 1)\cdot\pr{}{\btau^{(\eps)} \ge \istar} \le O(\log(1/\eps)) \cdot \sum_{i = \istar}^{n}\pr{}{\btau^{(0)} \ge i},
\]
which is equivalent to 
\begin{equation}\label{eq:symmetric-step-costs}
	(n - \istar + 1)\cdot\pr{}{\btau^{(\eps)} \ge \istar} \le O(\log(1/\eps)) \cdot \ex{}{\max\{\btau^{(0)} - \istar + 1, 0\}}.
\end{equation}

We will prove an even stronger inequality than the above---after conditioning on the realization of $x_1, \ldots, x_{\istar - 1}$, the inequality above still holds. Note that the input bits $x_1, \ldots, x_{\istar - 1}$ determines whether $\btau^{(\eps)} < \istar$ or $\btau^{(\eps)} \ge \istar$. In the former case, the contribution to the left-hand side of \Cref{eq:symmetric-step-costs} is zero, so the inequality trivially holds. In the remaining case where $\tau^{(\eps)} \ge \istar$ given $x_1$ through $x_{\istar - 1}$, the left-hand side of \Cref{eq:symmetric-step-costs} reduces to $n - \istar + 1$. The right-hand side, on the other hand, is simply $O(\log(1/\eps))$ times the expectation of the stopping time ``$\btau^{(0)}$'' defined with respect to the symmetric function $f|_{x_1, \ldots, x_{\istar - 1}}$ over $n - \istar + 1$ variables.

Therefore, it remains to prove the following lemma:
\begin{lemma}\label{lemma:tau-0-lower-bound}
	For any symmetric function $f$ on $n$ variables with $\bias(f) \ge \eps$, it holds that
	\[
		\ex{}{\btau^{(0)}} \ge \Omega\left(\frac{n}{\log(1/\eps)}\right),
	\]
	where the stopping time $\tau^{(0)}$ (defined over the uniform drawing of $x_1, x_2, \ldots, x_n$) is the smallest value $t$ such that $f|_{x_1, \ldots, x_t}$ reduces to a constant.
\end{lemma}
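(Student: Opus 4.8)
The plan is to reduce $\btau^{(0)}$ to a simple geometric quantity about a shrinking interval. Since $f$ is symmetric, write $f(x)=s(|x|)$ for a spectrum $s\colon\{0,\dots,n\}\to\zo$, and call a half-integer $b\in\{\tfrac12,\tfrac32,\dots,n-\tfrac12\}$ a \emph{boundary} of $s$ if $s(b-\tfrac12)\ne s(b+\tfrac12)$. On a uniform input, after the first $t$ bits are revealed the restriction $f|_{x_1,\dots,x_t}$ depends only on $t$ and $\mathbf a_t:=x_1+\cdots+x_t$: the Hamming weights still consistent with the observations form the integer interval $[\mathbf a_t,\ \mathbf a_t+(n-t)]$, and $f|_{x_1,\dots,x_t}$ is constant exactly when $s$ is constant on this interval, i.e.\ when no boundary lies strictly inside it. Since each revealed bit shrinks this interval by exactly one (from the left on a revealed $1$, from the right on a revealed $0$), $\btau^{(0)}$ is precisely the first time every boundary of $s$ has been expelled from the interval's interior.

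First I would show that $\bias(f)\ge\eps$ forces a boundary near the center: there is a boundary $b^\ast$ with $|b^\ast-n/2|\le r_0:=\sqrt{(n/2)\ln(4/\eps)}=O(\sqrt{n\log(1/\eps)})$. Indeed, if every boundary lay outside $(n/2-r_0,\,n/2+r_0)$, then $s$ would be constant on all integers in that window, say equal to $c$, whence $\dist(f,c)\le\Pr_{\mathbf x}[\,|\mathbf x|\notin(n/2-r_0,n/2+r_0)\,]\le\eps/2$ by Hoeffding's inequality applied to $|\mathbf x|=\sum_{i=1}^n\mathrm{Bern}(1/2)$, contradicting $\bias(f)\ge\eps$.

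Next I would convert this center-proximate boundary into a \emph{deterministic} (pointwise in $\mathbf x$) lower bound on $\btau^{(0)}$. For the interval $[\mathbf a_t,\ \mathbf a_t+(n-t)]$ to stop containing $b^\ast$ in its interior, either its left endpoint $\mathbf a_t$ must reach $\lceil b^\ast\rceil$ — which requires at least $\lceil b^\ast\rceil$ revealed ones and hence $t\ge\lceil b^\ast\rceil$ — or its right endpoint $\mathbf a_t+(n-t)$ must fall to $\lfloor b^\ast\rfloor$ — which requires at least $\lceil n-b^\ast\rceil$ revealed zeros and hence $t\ge\lceil n-b^\ast\rceil$. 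So for every $t<\min\{\lceil b^\ast\rceil,\ \lceil n-b^\ast\rceil\}$ and every input we have $\mathbf a_t\le t<b^\ast<\mathbf a_t+(n-t)$, making $f|_{x_1,\dots,x_t}$ non-constant; therefore $\btau^{(0)}\ge\min\{\lceil b^\ast\rceil,\ \lceil n-b^\ast\rceil\}\ge n/2-r_0$ on every input.

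Finally I would assemble the bound. Combining $\btau^{(0)}\ge n/2-r_0$ with the trivial $\btau^{(0)}\ge1$ (valid since $\bias(f)\ge\eps>0$ makes $f$ non-constant): if $r_0\le n/4$ then $\btau^{(0)}\ge n/4=\Omega(n/\log(1/\eps))$, using $\log_2(1/\eps)\ge1$ for $\eps\le 1/2$; and if $r_0> n/4$ then $\ln(4/\eps)> n/8$, so $\log(1/\eps)=\Omega(n)$ and $n/\log(1/\eps)=O(1)\le\btau^{(0)}$. Either way $\ex{}{\btau^{(0)}}\ge\btau^{(0)}\ge\Omega(n/\log(1/\eps))$, which is in fact stronger than the stated lemma. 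The only step with genuine content is the Hoeffding argument that produces a boundary within $O(\sqrt{n\log(1/\eps)})$ of $n/2$ (and getting that radius right); the remaining obstacle is just being careful that ``the interval shrinks one step at a time'' yields precisely the clean deterministic inequality above, and that the final case split really delivers a uniform $\Omega(n/\log(1/\eps))$ bound rather than something weaker in an intermediate range of $\eps$. None of this appears to need heavy machinery.
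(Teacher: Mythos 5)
Your proof is correct and takes a genuinely different route from the paper's. The paper's proof case-splits crudely on whether $\eps \le e^{-n/100}$: the small-$\eps$ case is dispatched by $\btau^{(0)} \ge 1$, while the large-$\eps$ case is handled probabilistically — after $n/2$ reveals, with probability $\Omega(1)$ the count of ones and zeros is balanced, so the remaining weight interval still covers $[n/3, 2n/3]$, and a Chernoff bound then shows that $s$ being constant on this range would force $\bias(f) < e^{-n/100} \le \eps$, a contradiction. Your proof is instead fully deterministic: you locate a single spectrum boundary $b^\ast$ within radius $r_0 = O(\sqrt{n\log(1/\eps)})$ of $n/2$ (forced by Hoeffding plus $\bias(f)\ge\eps$), and then observe that purging $b^\ast$ from the interior of the shrinking consistency interval $[\mathbf{a}_t,\ \mathbf{a}_t+(n-t)]$ requires $t \ge n/2 - r_0$ on \emph{every} input, since the left endpoint advances by at most one per revealed one and the right endpoint retreats by at most one per revealed zero. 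This is cleaner and yields a strictly stronger conclusion — a pointwise rather than expectation lower bound on $\btau^{(0)}$ — while also tracking the $\log(1/\eps)$ dependence through the correct $\sqrt{n\log(1/\eps)}$ scaling rather than hiding it in a hard threshold at $\eps = e^{-n/100}$. Both arguments rely on the same concentration-of-weight fact about $\mathrm{Bin}(n,1/2)$; the difference is that you push it through the boundary structure of the symmetric spectrum to get a deterministic statement, whereas the paper averages over the first $n/2$ reveals. Your case analysis at the end (whether $r_0 \le n/4$ or not, using $\eps \le 1/2$ to get $\log(1/\eps)\ge 1$) is careful and closes the argument uniformly.
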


\begin{proof}
	If $\eps \le e^{-n/100}$, we only need to show that $\ex{}{\btau^{(0)}} \ge \Omega(1)$, which is trivially true since $\btau^{(0)}$ is at least $1$. Otherwise, we claim that $\btau^{(0)}$ must be at least $n/2$ with probability $\Omega(1)$, which gives the stronger bound of $\ex{}{\btau^{(0)}} \ge \Omega(n)$.

	To see this, note that after $n/2$ input bits are revealed, with probability $\Omega(1)$, both $0$ and $1$ appear $\le n/3$ times. If $f$ reduces to a constant function, $f$ must be constant on every input $x$ with Hamming weight in $[n/3, 2n/3]$, which, by a standard Chernoff bound, contradicts the assumption that $f$ is at distance $\eps \ge e^{-n/100}$ from constant functions. 
\end{proof}

\section{Query Strategies for Functions with Shallow Decision Trees}
\label{sec:follow-the-tree}

In this section, we turn to the setting of \Cref{thm:follow-the-tree-general-informal}, where the function can be represented as a shallow decision tree, either given to the algorithm or unknown. We begin by considering the special case where the decision tree is given and \emph{everywhere-influential}, meaning that each internal node queries a variable with sufficiently high influence. Then, we deal with the more general case, where we first apply the pruning lemma of~\cite{BLQT22} to transform the tree into an everywhere-influential one, and then follow the pruned tree. When the decision tree representation is not given, we apply the learning algorithm of~\cite{BLQT22} to learn an everywhere-influential decision tree that approximates the function, and then query the variables according to the learned tree.

\subsection{Follow the Everywhere-Influential Tree}\label{sec:follow-the-tree-special}
We adopt the definition of everywhere $\tau$-influential decision trees in~\cite{BLQT22}. In the rest of this section, for each internal node $v$ in a decision tree, we let $\ind(v) \in [n]$ denote the index of the variable queried by $v$.

\begin{definition}[Everywhere $\tau$-influential]\label{def:everywhere-influential}
    For $\tau \in [0, 1]$, a decision tree $T$ is everywhere $\tau$-influential with respect to a function $f: \zo^n \to \zo$ if, for every internal node $v$ in $T$,
    \[
        \Inf_{\ind(v)}[f_v] \ge \tau,
    \]
    where $f_v$ is the restriction of $f$ induced by the root-to-$v$ path in $T$.
\end{definition}

We start by analyzing the most straightforward algorithm when a decision tree representation is given---the algorithm simply follows the tree.

\begin{definition}[Follow the tree]\label{def:follow-the-tree}
    Given the decision tree $T$, the algorithm starts at the root of $T$. At each internal node $v$, the algorithm increments $\theta_{\ind(v)}$ by $\beta$ until it reaches the cost $c_{\ind(v)}$, at which point $x_{\ind(v)}$ is revealed. The algorithm follows the computation path of $T$ until a leaf node is reached, at which point the algorithm outputs the label of that leaf.
\end{definition}

Compared to the \IPRR{} algorithms, ``follow the tree'' blindly trusts the given decision tree---at any point, the algorithm concentrates its investment on the variable specified by the tree, rather than hedging among multiple variables. Note that, if the computation path of $T$ queries a variable $x_i$, the investment $\theta_i$ made by ``follow the tree'' is at most $c_i + \beta$, where $\beta$ is the unit investment of the algorithm (cf.~\Cref{def:model}). This leads to the additional $\beta n$ term in all of our upper bounds. Recall that the algorithm may pick a sufficiently small $\beta = 1/\poly(n)$ to make the $\beta n$ term negligible.

When the given decision tree representation of $f$ is everywhere $\tau$-influential (with respect to $f$), ``follow the tree'' is $(1/\tau)$-competitive against the zero-error average-case benchmark.

\begin{proposition}\label{prop:follow-the-tree-special}
    For any $\tau \in (0, 1]$, when ``follow the tree'' is given an everywhere $\tau$-influential decision tree representation of the function, the expected cost of the algorithm is at most
    \[
        \beta n + \optavg_0 \cdot \frac{1}{\tau}.
    \]
\end{proposition}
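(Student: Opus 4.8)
The plan is to relate the expected cost of ``follow the tree'' to the expected depth of the computation path in $T$ on a uniformly random input, and then bound that expected depth using the OSSS-type inequality via the everywhere-$\tau$-influential property.

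First, I would observe that on input $x$, the algorithm queries exactly the variables on the root-to-leaf computation path of $T$, and for each such variable $x_i$ it spends at most $c_i + \beta$ (the investment reaches $c_i$ possibly overshooting by one unit $\beta$). Hence the total cost on $x$ is at most $\beta\cdot(\text{depth of the computation path}) + \sum_{i \in P(x)} c_i$, where $P(x)$ is the set of indices queried. Since the path has length at most $n$, the $\beta$-overhead contributes at most $\beta n$ to the worst-case (hence expected) cost. So it remains to bound $\ex{\bx\sim\zo^n}{\sum_{i \in P(\bx)} c_i}$, which is exactly $\sum_{i=1}^n c_i \cdot \delta_i(\calA)$ where $\calA$ denotes ``follow the tree'' viewed as a decision tree (which is just $T$ itself), and $\delta_i(\calA) = \pr{\bx}{T \text{ queries } \bx_i}$.

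The key step is then to show $\sum_{i=1}^n c_i \cdot \delta_i(T) \le \optavg_0 / \tau$. I would decompose the left side by summing over the internal nodes $v$ of $T$: $\sum_i c_i \delta_i(T) = \sum_{v \text{ internal}} c_{\ind(v)} \cdot \pr{\bx}{\bx \text{ reaches } v}$. Now I want to charge each node's contribution against $\optavg_0$. The natural move is to use \Cref{lemma:OSSS-consequence} (the OSSS consequence of~\cite{blanc2021query}): for the restricted function $f_v$ and the optimal zero-error offline algorithm $\calA^\circ_v$ obtained by restricting $\calA^\circ$ along the root-to-$v$ path, we get $\max_j \Inf_j[f_v]/c_j \ge \bias(f_v)/\costavg_c(\calA^\circ_v)$ — but this isn't quite the right handle since we want a per-node bound tied to the \emph{specific} variable $\ind(v)$ queried by $T$, not the max. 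Instead I expect the cleanest route is to directly use the everywhere-influential hypothesis: $\Inf_{\ind(v)}[f_v] \ge \tau$ means $c_{\ind(v)} \le \frac{1}{\tau}\, c_{\ind(v)}\,\Inf_{\ind(v)}[f_v]$. So $\sum_i c_i \delta_i(T) \le \frac{1}{\tau}\sum_{v}\pr{\bx}{\bx\text{ reaches }v}\cdot c_{\ind(v)}\Inf_{\ind(v)}[f_v]$. The inner sum, over all $v$ at a fixed "level structure", is recognizable as a telescoping / progress quantity: by the OSSS inequality (\Cref{thm:osss-JZ-version}) applied level-by-level, or more directly by the argument in~\cite{blanc2021query}, $\sum_{v}\pr{}{\text{reach }v}\,\Inf_{\ind(v)}[f_v]$ bounds the total bias drop, which relates to how much "work" any zero-error algorithm must do. The honest way is: for each $v$, apply \Cref{lemma:OSSS-consequence} to $f_v$ together with the restriction $\calA^\circ_v$ of the optimal zero-error offline algorithm, yielding $c_{\ind(v)}\Inf_{\ind(v)}[f_v] \le \max_j\frac{\Inf_j[f_v]}{c_j}\cdot c_{\ind(v)}^2$... this doesn't telescope cleanly either. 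I believe the intended argument is simpler: run the comparison globally. Since $T$ computes $f$ exactly (zero error), and for every queried variable $c_{\ind(v)} \le \frac1\tau c_{\ind(v)}\Inf_{\ind(v)}[f_v]$, we get $\costavg_c(T) - \beta n \le \frac1\tau \sum_v \pr{}{\text{reach }v} c_{\ind(v)}\Inf_{\ind(v)}[f_v]$, and I then want the right-hand sum $\le \optavg_0$.

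To close that last inequality, the main obstacle — and the step I would spend the most care on — is showing $\sum_{v}\pr{}{\text{reach }v}\, c_{\ind(v)}\Inf_{\ind(v)}[f_v] \le \optavg_0$. Here I would mimic the potential-function / progress argument of~\cite{blanc2021query}: let $\calA^\circ$ be the optimal zero-error offline algorithm with $\costavg_c(\calA^\circ) = \optavg_0$. For any node $v$ of $T$, the restricted instance $f_v$ is computed with zero error by $\calA^\circ_v$ at average cost $\le \optavg_0$ (more precisely $\le \costavg_c(\calA^\circ \mid \text{path to }v)$; one should phrase this via the restriction bound as in the proof of \Cref{thm:warmup-iprr}), so \Cref{lemma:OSSS-consequence} gives $\max_j \Inf_j[f_v]/c_j \ge \bias(f_v)/\optavg_0$; combined with $c_{\ind(v)}\Inf_{\ind(v)}[f_v] \le c_{\ind(v)}^2 \cdot \max_j \Inf_j[f_v]/c_j$ is still not it. So instead I would argue directly at the level of expected cost: partition random inputs $\bx$ by their computation path, and use that the expected cost $\optavg_0$ of $\calA^\circ$ on $\bx$ is at least (sum over the prefix nodes $v$ visited by $\bx$ in $T$, weighted appropriately) — formally, by a coupling between $\calA^\circ$'s execution restricted to a path prefix and the OSSS bound. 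Concretely, I expect the final write-up to invoke \Cref{lemma:OSSS-consequence} once per node of $T$ and then \emph{sum} the resulting inequalities $c_{\ind(v)}\le \frac{c_{\ind(v)}\Inf_{\ind(v)}[f_v]}{\tau} \le \frac{1}{\tau}\cdot\frac{\bias(f_v)}{?}\cdot$(something), ultimately telescoping $\bias$ drops. The cleanest correct statement to aim for: for the zero-error offline optimum $\calA^\circ$, one has $\sum_v \pr{}{\text{reach }v}\,c_{\ind(v)}\,\Inf_{\ind(v)}[f_v] \le \costavg_c(\calA^\circ) = \optavg_0$ because each term is dominated using \Cref{lemma:OSSS-consequence} applied to the restriction $\calA^\circ_v$ (whose average cost along reaching $v$ is exactly the conditional expected cost of $\calA^\circ$), and these conditional costs sum to $\optavg_0$. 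Verifying that this bookkeeping is tight — i.e. that the per-node OSSS applications aggregate to the global $\optavg_0$ without loss — is the delicate part; everything else (the $\beta n$ overhead, the reduction to expected path cost) is routine.

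Putting it together: $\costavg_c(\text{follow the tree}) \le \beta n + \sum_i c_i \delta_i(T) \le \beta n + \frac{1}{\tau}\sum_v \pr{}{\text{reach }v} c_{\ind(v)}\Inf_{\ind(v)}[f_v] \le \beta n + \frac{\optavg_0}{\tau}$, which is the claimed bound. I would present it in that order: (1) cost $=$ path cost plus $\le\beta n$; (2) rewrite path cost as $\sum_v$ over internal nodes; (3) apply everywhere-$\tau$-influential to insert a factor $\Inf_{\ind(v)}[f_v]/\tau$; (4) bound the resulting weighted-influence sum by $\optavg_0$ via per-node applications of \Cref{lemma:OSSS-consequence} against the restrictions of the optimal zero-error offline algorithm, aggregating the conditional costs. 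Step (4) is the main obstacle.
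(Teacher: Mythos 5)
Your reduction is on the right track and matches the paper up to a point: you correctly observe that the expected cost is at most $\beta n + \sum_i c_i\,\delta_i(T)$, and you correctly identify that it suffices to show $\sum_{v}\pr{}{\text{reach }v}\,c_{\ind(v)}\,\Inf_{\ind(v)}[f_v] \le \optavg_0$. But the proposal never actually establishes this inequality; it circles through several candidate arguments via \Cref{lemma:OSSS-consequence}, explicitly notes twice that they ``don't telescope cleanly,'' and then ends by asserting the inequality without a proof. That is a genuine gap, and \Cref{lemma:OSSS-consequence} is the wrong tool: it only controls the \emph{maximum} influence-to-cost ratio and cannot produce the weighted per-coordinate bound you need.

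The paper closes this step with two elementary observations that do not invoke OSSS at all. First, regroup the sum over nodes by variable index: for each $i$, the events $E_v$ over nodes $v$ with $\ind(v)=i$ are disjoint, and $\Inf_i[f_v] = \pr{\bx}{f(\bx)\ne f(\bx^{\oplus i}) \mid E_v}$, so by the law of total probability $\sum_{v:\ind(v)=i}\pr{}{E_v}\,\Inf_i[f_v] \le \Inf_i[f]$ (this, combined with the everywhere-influential hypothesis, is exactly \Cref{lemma:influence-lower-bound}, which you derive only implicitly). Second, the bound $\sum_i c_i\,\Inf_i[f] \le \optavg_0$ is not an OSSS fact but a direct pivotality argument (\Cref{lemma:optavg-lower-bound}): any zero-error offline algorithm must reveal $x_i$ whenever $f(\bx) \ne f(\bx^{\oplus i})$ (else it would answer identically on $\bx$ and $\bx^{\oplus i}$), so $\pr{}{\calA^\circ\text{ reveals }x_i}\ge\Inf_i[f]$, and summing with weights $c_i$ gives the bound. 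Neither step involves restrictions of $\calA^\circ$, conditional costs, or telescoping bias drops. You should replace step (4) of your outline with this two-line argument; as written, step (4) does not go through.
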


\Cref{prop:follow-the-tree-special} follows from the two simple lemmas below. The first lemma lower bounds the offline benchmark $\optavg_0$ by a cost-weighted sum of the influences.

\begin{lemma}\label{lemma:optavg-lower-bound}
    For any function $f: \zo^n \to \zo$, the zero-error average-case benchmark satisfies
    \[
        \optavg_0 \ge \sum_{i=1}^{n}c_i \cdot \Inf_i[f].
    \]
\end{lemma}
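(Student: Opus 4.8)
The plan is to show that \emph{every} zero-error algorithm $\calA$ for $f$ — offline or online — already has expected cost at least $\sum_{i=1}^n c_i \Inf_i[f]$; taking the infimum over such $\calA$ then gives $\optavg_0 \ge \sum_i c_i \Inf_i[f]$. Fix one such $\calA$ and the (arbitrary) cost vector $c$. For a fixed input $x$, the algorithm invests in a set of coordinates, revealing some subset $S(x) \sse [n]$ of variables (those for which eventually $\theta_i \ge c_i$). Since $\|\theta\|_1 \ge \sum_{i \in S(x)} \theta_i \ge \sum_{i \in S(x)} c_i$, we have the pointwise bound $\cost_c(\calA, x) \ge \sum_{i \in S(x)} c_i$, and hence $\costavg_c(\calA) \ge \sum_{i=1}^n c_i \cdot \Pr_{\bx\sim\zo^n}[i \in S(\bx)]$. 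So it suffices to prove that for each coordinate $i$,
\[
    \Pr_{\bx\sim\zo^n}\sbra{i \in S(\bx)} \ge \Inf_i[f].
\]

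The key step is the classical observation that a zero-error algorithm must reveal $x_i$ whenever $i$ is \emph{pivotal}, i.e., whenever $f(x) \ne f(x^{\oplus i})$. Indeed, suppose $i \notin S(x)$ for some input $x$. Until $x_i$ is revealed, every decision made by $\calA$ — which coordinate to invest in next, and the values it reads off of the revealed coordinates $j \ne i$ — depends only on $(x_j)_{j \ne i}$, which is identical for $x$ and $x^{\oplus i}$. Consequently $\calA$ follows exactly the same trajectory and produces exactly the same output on $x$ and on $x^{\oplus i}$ (in particular $i \notin S(x^{\oplus i})$ as well). If $f(x) \ne f(x^{\oplus i})$, then $\calA$ errs on at least one of these two inputs, contradicting that it is zero-error. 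Therefore $\{x : i \in S(x)\} \supseteq \{x : f(x) \ne f(x^{\oplus i})\}$, and passing to the uniform measure on $\zo^n$ yields $\Pr_{\bx}[i \in S(\bx)] \ge \Inf_i[f]$. Summing $c_i \cdot \Pr_{\bx}[i \in S(\bx)] \ge c_i \Inf_i[f]$ over $i$ finishes the argument.

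The only point requiring care — and essentially the only non-cosmetic part of the write-up — is making the indistinguishability argument precise in the online model rather than simply invoking ``a zero-error decision tree for $f$ queries coordinate $i$ with probability $\ge \Inf_i[f]$''. Concretely, one must check that the cumulative-investment dynamics on $x$ and $x^{\oplus i}$ agree step-for-step as long as $x_i$ has not been revealed (so that the output is forced to coincide), and that the incurred cost $\|\theta\|_1$ dominates $\sum_{i \in S(x)} c_i$. Both are immediate from \Cref{def:model}, so I do not anticipate any genuine obstacle here; everything else is routine.
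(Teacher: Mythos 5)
Your proposal is correct and matches the paper's argument: both hinge on the observation that a zero-error algorithm must reveal $x_i$ whenever $i$ is pivotal (since otherwise the algorithm follows identical trajectories and outputs the same answer on $x$ and $x^{\oplus i}$, hence errs on one of them), which gives $\Pr_{\bx}[x_i \text{ revealed}] \ge \Inf_i[f]$ and then the cost bound by summing $c_i$ against these probabilities. The extra care you take to handle the online cumulative-investment dynamics (i.e., that $\|\theta\|_1 \ge \sum_{i \in S(x)} c_i$ rather than exact equality) is a mild refinement over the paper's phrasing, which is stated for offline algorithms, but it is not a different route.
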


\begin{proof}
    Let $\calA$ be a zero-error offline algorithm computing function $f: \zo^n \to \zo$. We claim that, when $\calA$ runs on an input $\bx$ that satisfies $f(\bx) \ne f(\bx^{\oplus i})$ (i.e., variable $x_i$ is pivotal on $\bx$), $x_i$ must be revealed to $\calA$. Otherwise, $\calA$ would have the same behavior on the alternative input $\bx^{\oplus i}$ with a different function value, and thus cannot have a zero error. Therefore, we have
    \[
        \pr{\bx\sim\zo^n}{\calA\text{ reveals }x_i}
    \ge \pr{\bx\sim\zo^n}{f(\bx) \ne f(\bx^{\oplus i})}
    =   \Inf_i[f].
    \]
    It follows that, under every cost vector $c \in \R_{\geq 0}^n$, the expected cost of $\calA$ is at least
    \[
        \sum_{i=1}^{n}c_i \cdot \pr{\bx\sim\zo^n}{\calA\text{ reveals }x_i}
    \ge \sum_{i=1}^{n}c_i \cdot \Inf_i[f].
    \]
    This gives the desired lower bound $\optavg_0 \ge \sum_{i=1}^{n}c_i \cdot \Inf_i[f]$.
\end{proof}

The second lemma shows that the influence of each variable $x_i$ is lower bounded by $\tau$ times the probability that it is queried in an everywhere $\tau$-influential decision tree.

\begin{lemma}\label{lemma:influence-lower-bound}
    Suppose that decision tree $T$ is everywhere $\tau$-influential with respect to function $f: \zo^n \to \zo$. Then, for every $i \in [n]$,
    \[
        \Inf_i[f] \ge \tau \cdot \delta_i(T).
    \]
\end{lemma}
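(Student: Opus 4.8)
The plan is to run $T$ on a uniformly random input and track where (if at all) the variable $x_i$ first gets queried. Fix $i \in [n]$ and let $S := \{v : \ind(v) = i\}$ be the set of internal nodes of $T$ that query $x_i$. I will use the standard convention that each root-to-leaf path of $T$ queries each variable at most once, so that the computation path of any input contains at most one node from $S$. Draw $\bx \sim \zo^n$, let $P(\bx)$ denote its computation path in $T$, and define $\bv$ to be the (unique) node of $S$ on $P(\bx)$ if one exists, and $\bv = \bot$ otherwise. Since for $v \in S$ the event $\{\bv = v\}$ is exactly the event that $P(\bx)$ reaches $v$, these events together with $\{\bv = \bot\}$ partition the probability space, and moreover
\[
    \delta_i(T) = \pr{\bx\sim\zo^n}{T\text{ queries }\bx_i} = \sum_{v \in S}\pr{\bx\sim\zo^n}{\bv = v}.
\]

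Next I would compute $\Inf_i[f] = \pr{\bx}{f(\bx) \neq f(\bx^{\oplus i})}$ by conditioning on $\bv$. Fix $v \in S$ and condition on $\bv = v$. The root-to-$v$ path queries only variables other than $x_i$ (as $v$ is where $x_i$ is first queried), so conditioning on $\bv = v$ fixes exactly those path bits of $\bx$ to their path labels and leaves $x_i$ together with all remaining coordinates uniform and independent; write $\bz$ for these free coordinates. Flipping $x_i$ does not alter the bits read along the root-to-$v$ path, so $\bx^{\oplus i}$ also reaches $v$, and both $f(\bx)$ and $f(\bx^{\oplus i})$ are then computed by the restricted function $f_v$ evaluated on $\bz$ and $\bz^{\oplus i}$ respectively. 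Hence
\[
    \pr{}{f(\bx) \neq f(\bx^{\oplus i}) \;\middle|\; \bv = v} = \pr{\bz}{f_v(\bz) \neq f_v(\bz^{\oplus i})} = \Inf_i[f_v] \ge \tau,
\]
where the last inequality is precisely the everywhere $\tau$-influential assumption applied to $v$ (since $\ind(v) = i$). In the remaining case $\bv = \bot$, the variable $x_i$ is never queried on $P(\bx)$, so $\bx^{\oplus i}$ follows the same path, reaches the same leaf, and $f(\bx) = f(\bx^{\oplus i})$; thus $\pr{}{f(\bx) \neq f(\bx^{\oplus i}) \mid \bv = \bot} = 0$.

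Combining these via the law of total probability gives
\[
    \Inf_i[f] = \sum_{v \in S}\pr{}{\bv = v}\cdot\Inf_i[f_v] + \pr{}{\bv = \bot}\cdot 0 \;\ge\; \tau\sum_{v \in S}\pr{}{\bv = v} = \tau\cdot\delta_i(T),
\]
which is the claimed bound. The one point that needs care is the conditioning step: one must verify that, given the computation path reaches a node $v \in S$, the bit $x_i$ and the other as-yet-unread coordinates remain uniform and independent, and that flipping $x_i$ cannot divert the path before $v$; once these are established the rest is a routine application of total probability. (An alternative, essentially equivalent, route is a structural induction on $T$ proving the exact identity $\Inf_i[f] = \sum_{v \in S}\Pr[\text{reach }v]\cdot\Inf_i[f_v]$, using at the root either $\Inf_i[f] = \frac12\Inf_i[f_{x_j=0}] + \frac12\Inf_i[f_{x_j=1}]$ when the root queries $x_j \neq x_i$, or directness when it queries $x_i$; the probabilistic argument above just packages this cleanly.)
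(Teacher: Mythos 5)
Your proof is correct and takes essentially the same approach as the paper's: both condition on the (unique) node $v$ along the computation path that queries $x_i$, use that reaching $v$ is equivalent to $\bx$ agreeing with the path restriction $\pi_v$ so that the conditional pivotality probability is exactly $\Inf_i[f_v]\ge\tau$, and conclude via the law of total probability. Your write-up is a bit more explicit in handling the $\bv=\bot$ case (yielding an exact identity where the paper settles for an inequality), but this is only a cosmetic difference.
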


\begin{proof}
    Fix an index $i \in [n]$ and consider a uniformly random input $\bx \sim \zo^n$. The goal is to lower bound $\Inf_i[f]$, namely, the probability of $f(\bx) \ne f(\bx^{\oplus i})$. Consider the computation path when $T(\bx)$ is evaluated. Let
    \[
        V \coloneqq \{v: v\text{ is an internal node of }T, \ind(v) = i\}
    \]
    be the set of internal nodes that query $x_i$. For each $v \in V$, let $E_v$ denote the event that node $v$ is reached when computing $T(\bx)$. Note that the events $\{E_v: v \in V\}$ are disjoint, and
    \[
        \sum_{v \in V}\pr{}{E_v} = \delta_i(T).
    \]
    Let $\pi_v$ denote the restriction induced by the root-to-$v$ path. We note that event $E_v$ is equivalent to that the input $\bx$ agrees with restriction $\pi_v$. Therefore, we have
    \begin{align*}
        \pr{\bx\sim\zo^n}{f(\bx) \ne f(\bx^{\oplus i}) \mid E_v}
    &=  \pr{\bx\sim\zo^n}{f(\bx) \ne f(\bx^{\oplus i}) \mid \bx \agrees \pi_v}\\
    &=  \pr{\bx\sim\zo^n}{f_v(\bx) \ne f_v(\bx^{\oplus i})}\\
    &=  \Inf_i[f_v]
    \ge \tau,
    \end{align*}
    where the last step applies the fact that $\ind(v) = i$ and $T$ is everywhere $\tau$-influential with respect to $f$.
    
    Therefore, we conclude that
    \begin{align*}
        \tau \cdot \delta_i(T)
    &=  \sum_{v \in V}\tau \cdot \pr{}{E_v}\\
    &\le\sum_{v \in V}\pr{\bx \sim \zo^n}{f(\bx) \ne f(\bx^{\oplus i}) \mid E_v} \cdot \pr{}{E_v}\\
    &\le\pr{\bx \sim \zo^n}{f(\bx) \ne f(\bx^{\oplus i})} \tag{law of total probability}\\
    &=  \Inf_i[f]\,, \tag{definition of influence}
    \end{align*}
    which completes the proof. 
\end{proof}

\Cref{prop:follow-the-tree-special} is then an immediate consequence of the two lemmas above.

\begin{proofof}{\Cref{prop:follow-the-tree-special}}
    The cost of ``follow the tree'' is given by
    \[
        \sum_{i=1}^{n}(c_i + \beta)\cdot \delta_i(T)
    \le \beta n + \frac{1}{\tau}\sum_{i=1}^{n}c_i \cdot \Inf_i[f]
    \le \beta n + \optavg_0 \cdot \frac{1}{\tau},
    \]
    where the first step applies \Cref{lemma:influence-lower-bound} and the second step applies \Cref{lemma:optavg-lower-bound}.
\end{proofof}

\subsection{Follow the Pruned Tree}\label{sec:follow-the-tree-general}
In the more general case that the algorithm is given an arbitrary decision tree representation $T$ of the function, our strategy is a natural one: First, we ``prune'' the decision tree $T$ into an everywhere $\tau$-influential tree $T'$, so that $T'$ agrees with $T$ on most of the inputs. Such a pruning result was given by~\cite{BLQT22}. Then, we run the ``follow the tree'' algorithm on $T'$.

\paragraph{The pruning lemma of~\cite{BLQT22}.} Let $\Delta(T)$ denote the average depth of decision tree $T$, which is formally defined as:
\[
    \Delta(T) \coloneqq \begin{cases}
        0, & \text{if }T\text{ has only one node},\\
        1 + \frac{1}{2}\left[\Delta(T_0) + \Delta(T_1)\right], & \text{if the root of }T\text{ has subtrees }T_0\text{ and }T_1.
    \end{cases}
\]

We will use the following version of the pruning lemma of~\cite{BLQT22}.

\begin{lemma}[Theorem 4 of \cite{BLQT22}]\label{lemma:pruning-lemma-BLQT}
    For any decision tree $T$ and $\tau \in [0, 1]$, there is an everywhere $\tau$-influential decision tree $T'$ (with respect to the function computed by $T$) such that
    \[
        \dist(T, T') \le \tau \cdot \Delta(T).
    \]
    Furthermore, such a $T'$ can be computed from $T$ in time polynomial in the size of $T$.
\end{lemma}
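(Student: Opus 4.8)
The plan is to build $T'$ from $T$ by a sequence of local \emph{prunings}. While the current tree $S$ (initially $T$) has an internal node $v$ whose queried variable $x_i$, $i = \ind(v)$, satisfies $\Inf_i[\mathrm{fn}(S_v)] < \tau$, where $S_v$ is the subtree rooted at $v$ and $\mathrm{fn}(S_v)$ the function it computes, pick any such $v$, let $b^\ast \in \zo$ be the child direction minimizing the average depth of the corresponding child subtree, and replace $S_v$ by that child subtree. Each step strictly decreases the number of nodes, so the procedure halts; since the influence of a variable in a function given by a decision tree is computable from that tree in polynomial time, the whole procedure runs in time polynomial in the size of $T$, giving the ``furthermore'' clause. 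When it halts, every internal node $v$ of the resulting tree $T'$ queries a variable of influence $\ge \tau$ with respect to $\mathrm{fn}(T'_v)$, and since $\mathrm{fn}(T'_v)$ is exactly the restriction of $\mathrm{fn}(T')$ along the root-to-$v$ path, this is the everywhere $\tau$-influential condition.

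For the distance bound I would use the potential $\Phi(S) := \Delta(S)$. Consider a single pruning step at $v$, which is reached by a uniformly random input with probability $p := 2^{-\mathrm{depth}_S(v)}$; let $x_i$ be the queried variable, $g := \mathrm{fn}(S_v)$, and $\Delta_0,\Delta_1$ the average depths of the two child subtrees. The step changes the computed function only on inputs reaching $v$, and among those it changes the value with probability at most $\dist(g, g|_{x_i = b^\ast}) = \tfrac12\Inf_i[g] < \tau/2$ (the inequality being exactly the reason $v$ was eligible for pruning), so the distance to the original function increases by less than $p\cdot\tau/2$. Simultaneously, the average depth of the subtree at $v$ drops from $1 + \tfrac12(\Delta_0+\Delta_1)$ to $\min(\Delta_0,\Delta_1) \le \tfrac12(\Delta_0+\Delta_1)$, a decrease of at least $1$, so $\Phi$ decreases by at least $p$. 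Summing over all pruning steps and using $\Phi \ge 0$ throughout gives $\sum_{\text{steps}} p_{\text{step}} \le \Delta(T)$, hence $\dist(T, T') < \tfrac{\tau}{2}\sum_{\text{steps}} p_{\text{step}} \le \tfrac{\tau}{2}\Delta(T) \le \tau\Delta(T)$.

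The one genuinely subtle point — and where I expect to lean on \cite{BLQT22} directly — is that the lemma demands $T'$ be everywhere $\tau$-influential \emph{with respect to $f$ itself}, not merely with respect to the (slightly perturbed) function $T'$ computes; it is this stronger form that lets one chain \Cref{lemma:influence-lower-bound} and \Cref{lemma:optavg-lower-bound} to bound the cost of ``follow the pruned tree'' against $\optavg_0(f)$. To obtain it, the pruning criterion should instead be $\Inf_{\ind(v)}[f_\rho] < \tau$, where $\rho$ is the \emph{surviving} (i.e.\ pruned-tree) root-to-$v$ path, which makes the everywhere $\tau$-influential property with respect to $f$ hold by construction. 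The cost is that the intermediate subtree functions no longer coincide with the restrictions $f_\rho$, so the clean telescoping above must be replaced by a more careful accounting of how far these functions have drifted from the corresponding restrictions of $f$ as deletions accumulate along a branch; bounding this drift is precisely the content of Theorem~4 of \cite{BLQT22}, which I would invoke to close the argument.
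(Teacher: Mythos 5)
This lemma is imported as a black box from \cite{BLQT22}; the paper offers no proof of it, so there is no in-paper argument to compare against. Your first pass is a correct proof of a \emph{different} statement: pruning on the criterion $\Inf_{\ind(v)}[\mathrm{fn}(S_v)] < \tau$ and telescoping against the potential $\Phi(S)=\Delta(S)$ does cleanly give a tree $T'$ that is everywhere $\tau$-influential with respect to $\mathrm{fn}(T')$ with $\dist(T,T') \le \tfrac{\tau}{2}\Delta(T)$. You correctly observe that this is \emph{not} what the lemma asserts nor what the downstream application can use (the cost bound in \Cref{thm:follow-the-tree-general} applies \Cref{lemma:influence-lower-bound} to $T'$ and $f$, so it needs the condition with respect to $f = \mathrm{fn}(T)$), and that the two conditions are incomparable since restriction can move a coordinate's influence either up or down.

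Where I'd push back is on framing the fix as merely ``more careful accounting.'' Switch the criterion to $\Inf_{\ind(v)}[f_\rho] < \tau$ (with $\rho$ the surviving path) and the per-step distance contribution is $p\cdot\tfrac12\Inf_i[\mathrm{fn}(S_v)]$, and $\mathrm{fn}(S_v)$ is no longer $f_\rho$, so this is not bounded by $p\tau/2$. Writing $\delta := \dist(\mathrm{fn}(S_v), f_\rho)$, the generic bound $\bigl|\Inf_i[g]-\Inf_i[h]\bigr| \le 2\,\dist(g,h)$ only gives $\Inf_i[\mathrm{fn}(S_v)] < \tau + 2\delta$, so with $b^*$ the average-depth-minimizer the drift recursion is multiplicative, not additive. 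Conversely, choosing $b^*$ to minimize the drift term $\dist\bigl(\mathrm{fn}(S_v)|_{x_i=b}, f_\rho|_{x_i=b}\bigr)$ caps the per-step drift increase at $\tau/2$, but then $\Delta(S_{v,b^*})$ can exceed $1 + \tfrac12(\Delta_0+\Delta_1)$ (e.g.\ one subtree a leaf, the other deep), so $\Phi$ can \emph{increase} and the bound $\sum p \le \Delta(T)$ is lost. The two demands on $b^*$ genuinely conflict, and reconciling them is the nontrivial content of Theorem~4 of \cite{BLQT22}, which --- as you say yourself --- you ultimately cite rather than reprove. That is an honestly flagged gap, not a flaw in what you did write, but it is a real gap.
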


\begin{definition}[Follow the pruned tree]\label{def:follow-the-pruned-tree}
    Given the decision tree $T$ and accuracy parameter $\eps \in (0, 1]$, the algorithm applies \Cref{lemma:pruning-lemma-BLQT} to $T$ and $\tau \coloneqq \eps / \Delta(T)$ and efficiently samples a pruned tree $T'$. Then, the algorithm runs ``follow the tree'' (\Cref{def:follow-the-tree}) on $T'$.
\end{definition}

\begin{theorem}[Formal version of \Cref{thm:follow-the-tree-general-informal}]\label{thm:follow-the-tree-general}
    Given $\eps \in (0, 1]$ and a decision tree representation $T$ of function $f$, ``follow the pruned tree'' computes $f$ up to an error of $\eps$ and with an expected cost of at most
    \[
        \beta n + \optavg_0 \cdot \frac{\Delta(T)}{\eps}.
    \]
\end{theorem}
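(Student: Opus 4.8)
The plan is to derive the theorem as a short corollary of the two ingredients already established: the pruning lemma of~\cite{BLQT22} (\Cref{lemma:pruning-lemma-BLQT}) and the analysis of ``follow the tree'' on everywhere-influential trees (\Cref{prop:follow-the-tree-special}). First I would dispose of the degenerate case $\Delta(T) = 0$: here $T$ consists of a single node, $f$ is constant, and ``follow the pruned tree'' incurs zero cost and zero error, so the bound holds trivially. Hence assume $\Delta(T) \ge 1$ and set $\tau := \eps/\Delta(T)$, noting that $\tau \in (0,1]$ since $\eps \in (0,1]$.

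Next I would invoke \Cref{lemma:pruning-lemma-BLQT} with this choice of $\tau$ to efficiently obtain a decision tree $T'$ that is everywhere $\tau$-influential with respect to the function computed by $T$---that is, with respect to $f$---and satisfies $\dist(T, T') \le \tau \cdot \Delta(T) = \eps$. Since $T$ computes $f$ exactly, this gives $\dist(f, T') \le \eps$. As ``follow the pruned tree'' outputs $T'(x)$ on the unknown input $x$ (it runs ``follow the tree'' on $T'$ and returns the label of the reached leaf), this immediately yields the claimed error bound of $\eps$.

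For the cost, observe that ``follow the pruned tree'' is exactly ``follow the tree'' executed on $T'$, and $T'$ is everywhere $\tau$-influential with respect to $f$. I would therefore apply \Cref{prop:follow-the-tree-special} with parameter $\tau = \eps/\Delta(T)$, concluding that the expected cost is at most $\beta n + \optavg_0 / \tau = \beta n + \optavg_0 \cdot \Delta(T)/\eps$, as required. The one subtlety to address is that \Cref{prop:follow-the-tree-special} is phrased for an everywhere $\tau$-influential \emph{representation} of the function, whereas $T'$ is only $\eps$-close to $f$; but inspecting its proof, the cost estimate uses only \Cref{lemma:influence-lower-bound} (which needs merely that $T'$ is everywhere $\tau$-influential with respect to $f$, exactly what the pruning lemma delivers) and \Cref{lemma:optavg-lower-bound} (a statement about $f$ alone). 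So the chain $\sum_{i=1}^n (c_i + \beta)\,\delta_i(T') \le \beta n + \tau^{-1}\sum_{i=1}^n c_i \Inf_i[f] \le \beta n + \optavg_0/\tau$ goes through verbatim.

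The main---essentially the only---obstacle is the bookkeeping about \emph{which} function each object is everywhere-influential relative to: the pruning lemma produces a tree influential with respect to $f$, not with respect to the (slightly different) function it itself computes, and one must confirm that this is precisely what the cost analysis of \Cref{prop:follow-the-tree-special} needs, while the entire error is absorbed by the $\dist(T, T') \le \eps$ guarantee. Once that is pinned down, the theorem follows in two lines.
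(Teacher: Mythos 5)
Your proposal is correct and takes essentially the same approach as the paper: set $\tau = \eps/\Delta(T)$, apply the pruning lemma to get $T'$ with $\dist(T,T') \le \eps$, and bound the cost via the chain $\sum_i (c_i+\beta)\delta_i(T') \le \beta n + \tau^{-1}\sum_i c_i \Inf_i[f] \le \beta n + \optavg_0/\tau$. The subtlety you flag---that \Cref{prop:follow-the-tree-special} is phrased for an everywhere-influential \emph{representation} of $f$ whereas $T'$ merely approximates $f$---is exactly why the paper's proof re-derives the cost chain directly from \Cref{lemma:influence-lower-bound} (applied to $T'$ and $f$, which is precisely what the pruning lemma certifies) and \Cref{lemma:optavg-lower-bound}, rather than citing \Cref{prop:follow-the-tree-special} as a black box; your inspection of the proposition's proof reaches the same conclusion.
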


\begin{proof}
    Let $\tau \coloneqq \eps / \Delta(T)$ and $T'$ be the pruned tree computed from \Cref{lemma:pruning-lemma-BLQT}. Then, the error of ``follow the pruned tree'' is at most
    \[
        \dist(T, T') \le \tau\cdot\Delta(T) = \eps.
    \]
    Furthermore, the expected cost of the algorithm is given by
    \[
        \sum_{i=1}^{n}(c_i + \beta) \cdot \delta_i(T')
    \le \beta n + \frac{1}{\tau}\sum_{i=1}^{n}c_i \cdot \Inf_i[f]
    \le \beta n + \optavg_0 \cdot \frac{\Delta(T)}{\eps},
    \]
    where the first step applies \Cref{lemma:influence-lower-bound} to $T'$ and $f$, while the second step applies \Cref{lemma:optavg-lower-bound}.
\end{proof}

\begin{remark}
    \label{remark:unknown-tree}
    Suppose that the function $f$ is guaranteed to be computed by some decision tree $T$ of average depth $d = \Delta(T)$, but $T$ is not given. In this case, we can apply the decision tree learning algorithm of~\cite[Theorem 7]{BLQT22} to compute a decision tree $T'$ such that: (1) $\dist(T, T') \le \eps$; (2) $T'$ is everywhere $\tau$-influential with respect to $f$ for $\tau = \Omega(\eps / d)$. The algorithm only requires query access to $f$, and runs in time
    \[
        \poly(n) \cdot \left(\frac{d}{\eps}\right)^{O(d / \eps)}.
    \]
    Then, running ``follow the tree'' on $T'$ gives an error $\le \eps$ and expected cost of at most
    \[
        \beta n + \optavg_0 \cdot O\left(\frac{d}{\eps}\right).
    \]

    Furthermore, when the value of $d = \Delta(T)$ is unknown, we can obtain the same result by guessing the value of $d = 1, 2, 3, \ldots$. For each guess of $d$, we can estimate the error of the tree learned by the algorithm of~\cite{BLQT22} by making membership queries to $f$ on uniformly random inputs. We increment the guess on $d$ until the resulting error is below $\eps$. This gives an algorithm with expected cost $\optavg_0 \cdot O(d/\eps)$ without knowing the value of $\Delta(T)$ in advance.
\end{remark}
\section{Separating the Offline and Online Settings: Proof of \Cref{thm:tribes-lower-bound-informal}}
\label{sec:tribes-lb}

We turn to the proof of~\Cref{thm:tribes-lower-bound}. 

\subsection{Warmup: A Linear Lower Bound against Zero-Error Algorithms}
We start with an $\Omega(n)$ lower bound on the competition ratio, which applies to all \emph{zero-error} algorithms for the $\AND$ function. The costs of the $n$ variables are known to be a permutation of $[n] = \{1, 2, \ldots, n\}$, but the exact ordering of the costs is unknown. Formally, we consider a hard distribution over instances defined as follows.

\begin{definition}[$\AND$ instance]\label{def:and-instance}
    For integer $n \ge 1$, $f(x) = \bigwedge_{i=1}^{n}x_i$ is the $\AND$ function on $n$ variables. The costs $(c_1, c_2, \ldots, c_n)$ are set to a permutation of $[n]$ chosen uniformly at random.
\end{definition}

\begin{proposition}[Formal version of \Cref{prop:and-lower-bound-informal}]
    \label{prop:and-lower-bound}
    On the $\AND$ instance with $n$ variables, we have the offline benchmark $\optavg_0 = O(1)$, while every zero-error online algorithm has an expected cost of $\Omega(n)$. 
\end{proposition}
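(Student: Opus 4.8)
The plan is to establish the two halves of the statement separately. For the offline bound, I would analyze the obvious zero-error offline algorithm: since the cost vector is known, relabel the variables so that $c_1 < c_2 < \cdots < c_n$ (so $c_k = k$), query $x_1, x_2, \dots$ in this order, halt and output $0$ the instant some queried bit is $0$, and output $1$ once all $n$ bits are revealed to be $1$. On a uniformly random input the $k$-th query, of cost $c_k = k$, is made precisely when $x_1 = \cdots = x_{k-1} = 1$, an event of probability $2^{-(k-1)}$, so the expected cost is at most $\sum_{k \ge 1} k\cdot 2^{-(k-1)} = 4$. Since this bound is uniform over cost permutations, $\optavg_0 \le 4 = O(1)$.

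For the online lower bound, I would first record that any zero-error online algorithm for $\AND$ must reveal at least one variable on every input and cost vector: otherwise its entire execution is determined by its internal coins alone, so it outputs a fixed bit, which is wrong for $\AND$. Next, fix the algorithm's internal randomness; before the first reveal the algorithm has received no information, so the sequence of coordinates into which it invests $\beta$ at a time is a fixed sequence, independent of the random permutation $c$ and of the input $x$. Let $a$ be the investment vector along this sequence at the first point at which the total invested amount is at least $n/10$. This point exists: the sequence cannot terminate (with the algorithm halting, having revealed nothing) at some earlier total investment $s^\ast < n/10$ with investment vector $a^\ast$, because then, as $\sum_i a^\ast_i = s^\ast < n$, one could choose a cost permutation with $c_i > a^\ast_i$ for all $i$ (assign the large values $n, n-1, \dots$ to the at most $s^\ast$ coordinates with $a^\ast_i \ge 1$), under which no variable is ever revealed and the algorithm halts having revealed nothing, contradicting zero-error. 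Note $\|a\|_1 \le n/10 + \beta$; and if $\beta > n/10$ then a single investment step already costs more than $n/10$ and, since at least one step is always taken, we are done, so assume $\beta \le n/10$.

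Now, if the first reveal occurs while the total invested amount is still below $n/10$, then — since investments are monotone and the actual run coincides with the fixed sequence up to the first reveal — the revealed coordinate $i_0$ satisfies $a_{i_0} \ge c_{i_0}$. Over the random permutation, each $c_i$ is marginally uniform on $[n]$, so $\pr{c}{c_i \le a_i} = \lfloor a_i\rfloor/n \le a_i/n$, and a union bound gives $\pr{c}{\exists i:\ a_i \ge c_i} \le \|a\|_1/n \le 1/10 + \beta/n \le 1/5$. Hence with probability at least $4/5$ over $c$ the first reveal happens only after the total invested amount reaches $n/10$; since the algorithm must eventually reveal a variable and the total cost is non-decreasing in time, its cost on such runs is at least $n/10$. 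Averaging over the internal coins, over $c$, and over $x$ (the cost incurred before the first reveal does not even depend on $x$) yields expected cost $\Omega(n)$.

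The step that requires the most care — the main obstacle — is making the ``no information before the first reveal'' claim fully rigorous within the model of \Cref{def:model}: one must argue that the only implicit signal the algorithm receives, namely ``no threshold has been crossed yet,'' is exactly captured by saying the fixed investment sequence has not yet triggered a reveal, so that the pre-reveal move sequence is genuinely a function of the coins alone; one must handle the discretization parameter $\beta$ (done via the case split above); and one must justify via the counting/Hall-type argument that a zero-error online algorithm cannot commit to a short pre-reveal sequence. Once this setup is in place the union-bound computation is routine, and the same argument in fact shows the stronger statement that \emph{revealing even a single variable} costs $\Omega(n)$ in expectation — the form needed for \Cref{thm:tribes-lower-bound}.
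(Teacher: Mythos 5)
Your proposal is correct and follows essentially the same strategy as the paper's proof: both halves match (the geometric-sum upper bound for the offline benchmark, and the online lower bound via fixing the algorithm's coins, observing that the pre-reveal investment sequence is then deterministic, applying a union bound over the random permutation to show that with constant probability no coordinate's cost is exceeded until the total investment reaches $\Omega(n)$, and invoking zero-error to rule out early termination without a reveal). The only cosmetic differences are the constants ($n/10$ and $4/5$ versus $n/2$ and $1/2$), your more explicit handling of the discretization parameter $\beta$, and your closing remark about the strengthening to ``revealing any variable costs $\Omega(n)$,'' which the paper defers to a separate lemma (\Cref{lemma:cost-of-seeing-a-variable}) used for the $\Tribes$ bound.
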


\begin{proof}
    We first note that $\optavg_0 = O(1)$: Since the offline algorithm knows the costs, it may query the variables in increasing order of costs, and stop whenever a zero is encountered. The algorithm is always correct, and the variable with cost $i$ is queried with probability $2^{-(i-1)}$ (namely, when all the $i-1$ variables with lower costs take value $1$). Thus, the algorithm has an expected cost of
    \[
        \sum_{i=1}^{n}2^{-(i-1)} \cdot i
    \le \sum_{i=1}^{+\infty}2^{-(i-1)} \cdot i
    =	4.
    \]

    It remains to show that every zero-error algorithm must incur an expected cost of $\Omega(n)$. By an averaging argument, it suffices to prove this for deterministic algorithms. This is because any zero-error randomized algorithm can be viewed as a mixture of zero-error deterministic algorithms. If we could lower bound the cost of every zero-error deterministic algorithm by $\Omega(n)$, the same lower bound holds for the mixture as well.

    Fix a deterministic algorithm $\calA$. Let $\calA^{(1)}$ denote the simulation of $\calA$ on an $\AND$ instance with $c_i = n$ for every $i \in [n]$. Note that this is \emph{not} an instance from \Cref{def:and-instance}; later, we will couple this simulation with the execution of $\calA$ on an actual instance. We stop the simulation as soon as either one of the following happens:
    \begin{itemize}
        \item \textbf{Case 1.} $\calA$ ``terminates voluntarily'' by outputting an answer.
        \item \textbf{Case 2.} The total cost, $\sum_{i=1}^{n}\theta_i$, reaches $n/2$. Formally, when $\calA$ attempts to increase some $\theta_i$, we check whether $\sum_{i=1}^{n}\theta_i \ge n/2$ would hold after the increase. If so, we terminate the algorithm before $\theta_i$ gets increased.
    \end{itemize}
    Since $\calA$ is deterministic, when $\calA^{(1)}$ terminates, the investment in each variable $x_i$ is a deterministic value (denoted by $a_i$). We note that $\sum_{i=1}^{n}a_i < n/2$; otherwise, the simulation should have stopped earlier (by Case~2). Since $c_i = n$ for every $i \in [n]$, none of the $n$ variables is revealed before $\calA^{(1)}$ terminates.
    
    Now, suppose that we run $\calA$ on an actual instance in \Cref{def:and-instance}, i.e., the cost vector $c$ is a random permutation of $\{1, 2, \ldots, n\}$. Call this simulation $\calA^{(2)}$. We will show that $\calA^{(2)}$ agrees with $\calA^{(1)}$ with probability $\Omega(1)$. Formally, we have
    \begin{equation}\label{eq:agreement-prob}
        \pr{c}{c_i > a_i,~\forall i \in [n]} \ge \Omega(1).
    \end{equation}
    Assuming \Cref{eq:agreement-prob}, with probability $\Omega(1)$ over the randomness in $c$, $\calA^{(2)}$ agrees with $\calA^{(1)}$ until $\calA^{(1)}$ terminates. Conditioning on this agreement, Case~1---that $\calA^{(1)}$ outputs an answer before the total investment reaches $n/2$---cannot be true. This is because none of the variables has been revealed so far, so $\calA^{(2)}$ would incur a non-zero error by outputting an answer at that time. Thus, we must be in Case~2, i.e., $\calA^{(1)}$ attempts to reach $\sum_{i=1}^{n}\theta_i \ge n/2$ when we stop the simulation. Since $\calA^{(2)}$ agrees with $\calA^{(1)}$, $\calA^{(2)}$ incurs a cost $\ge n/2$ before terminating. This would then imply that $\calA^{(2)}$ has an expected cost of $\Omega(1) \cdot (n/2) = \Omega(n)$.

    To prove \Cref{eq:agreement-prob}, we note that each constraint $c_i > a_i$ gets violated if and only if $c_i$ is in $\{1, 2, \ldots, n\} \cap [0, a_i]$, a set of size at most $a_i$. Over the randomness in the permutation $c$, $c_i$ is uniformly distributed over the size-$n$ set $\{1, 2, \ldots, n\}$. Therefore, we have
	\[
		\pr{c}{c_i \le a_i}
	\le \frac{a_i}{n},
	\]
	and thus, by the union bound,
	\[
		\pr{c}{c_i > a_i,~\forall i \in [n]}
	\ge 1 - \sum_{i=1}^{n}\pr{c}{c_i \le a_i}
	\ge 1 - \sum_{i=1}^{n}\frac{a_i}{n}
	\ge 1 - \frac{n/2}{n}
	=	\frac{1}{2}.
	\]
\end{proof}

\subsection{A Logarithmic Lower Bound against Constant-Error Algorithms}
Even if we allow the algorithm to have a constant error probability, we still have an $\Omega(\log n)$ lower bound on the competitive ratio. We prove this lower bound using the $\Tribes$ function, which we formally define as follows.

\begin{definition}[$\Tribes$ instance]\label{def:tribes-instance}
    For integer $w \ge 1$, $f(x) \coloneqq \bigvee_{i=1}^{2^w}\bigwedge_{j=1}^{w}x_{i,j}$ is the $\Tribes$ function with $2^w$ tribes of width $w$, where we rename the $n = 2^w\cdot w$ variables as $(x_{i,j})_{i \in [2^w], j \in [w]}$ for clarity. For each $i \in [2^w]$, the costs $(c_{i,1}, c_{i,2}, \ldots, c_{i,w})$ are set to a permutation of $[w]$ chosen independently and uniformly at random.
\end{definition}

\begin{theorem}[Formal version of \Cref{thm:tribes-lower-bound-informal}]\label{thm:tribes-lower-bound}
	For every $\eps_0 \in [0, 1/4)$ and integer $w \ge 1$, on the $\Tribes$ instance with width $w$, we have the offline benchmark $\optavg_0 = O(2^w)$, while every $\eps_0$-error online algorithm has an expected cost of $\Omega(2^w \cdot w)$.
\end{theorem}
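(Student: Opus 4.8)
The plan splits along the two halves of the statement. For the offline bound $\optavg_0 = O(2^w)$, I would analyze the obvious cost-aware strategy: within each tribe relabel the variables so that $c_{i,1}\le\cdots\le c_{i,w}$, then process tribes $1,2,\ldots$ in turn, querying the variables of a tribe in this increasing-cost order, abandoning a tribe the moment a $0$ appears, and halting (and outputting $1$) the moment some tribe turns out to be all-ones. This is an exact offline algorithm. Fully processing a single tribe costs $\sum_{k=1}^{w}k\cdot 2^{-(k-1)}\le 4$ in expectation (exactly as in the offline part of \Cref{prop:and-lower-bound}), tribe $i$ is processed precisely when tribes $1,\dots,i-1$ all failed to be all-ones, and those events are independent of probability $1-2^{-w}$ each; so by linearity the expected cost is at most $4\sum_{i=1}^{2^w}(1-2^{-w})^{i-1}\le 4\cdot 2^w$. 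For the lower bound I would follow the three-step plan from \Cref{sec:proof-overview}: (i) a quantitative strengthening of \Cref{prop:and-lower-bound}; (ii) a structural claim that any low-error online algorithm for $\Tribes$ must touch a constant fraction of the tribes; (iii) a planting argument combining the two.

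Step (i): for the $m$-variable $\AND$ instance (costs a uniformly random permutation of $[m]$), \emph{every} online algorithm $\mathcal{B}$ --- regardless of its error --- satisfies $\costavg(\mathcal{B})\ge\Omega(m)\cdot\pr{}{\mathcal{B}\text{ reveals at least one variable}}$, provided the resolution $\beta$ is, say, at most $1$. This falls out of the simulation in the proof of \Cref{prop:and-lower-bound}: run $\mathcal{B}$ against the fictitious cost vector with every $c_i=m$; it halts either (Case~1) by outputting before its total investment reaches $m/2$ or (Case~2) upon total investment reaching $m/2$, in both cases with final investment vector $a$ satisfying $\|a\|_1\le m/2$ and no variable ever revealed. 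On the genuine random-permutation instance the event $G=\{c_i>a_i~\forall i\}$ has probability $\ge 1/2$ (union bound, since $\pr{}{c_i\le a_i}\le a_i/m$), and on $G$ the genuine run tracks the simulation exactly. In Case~2, conditioning on $G$ forces cost $\ge m/2-o(1)$, so $\costavg(\mathcal{B})=\Omega(m)$; in Case~1 the genuine run reveals something only off $G$, so $\pr{}{\mathcal{B}\text{ reveals}\ge 1}\le\|a\|_1/m$ while $\costavg(\mathcal{B})\ge\tfrac12\|a\|_1$, and dividing gives the trade-off. Averaging over $\mathcal{B}$'s internal coins handles randomized $\mathcal{B}$.

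Step (ii): every $\eps_0$-error online algorithm $\calA$ for the $\Tribes$ instance reveals at least one variable from $\Omega(2^w)$ tribes in expectation --- equivalently, by symmetry of \Cref{def:tribes-instance}, touches each individual tribe with probability $\Omega(1)$. The leverage is a feature of the online model: whether $\calA$ ever invests in a variable of tribe $j$ is a function of the bits of the \emph{other} tribes only, since tribe $j$'s bits are revealed to $\calA$ only after it has already committed investment to tribe $j$. Hence, if $f(x)=0$ and tribe $j$ is untouched on input $x$, then letting $x^{(j)}$ be $x$ with tribe $j$ overwritten to all-ones, $\calA$ produces the identical transcript and output on $x$ and on $x^{(j)}$, while $f(x)=0\ne 1=f(x^{(j)})$, so $\calA$ errs on exactly one of $x,x^{(j)}$. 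I would exploit this through the coupling $\bx\sim(\text{uniform}\mid f=0)$ then $\bj$ uniform among the tribes untouched on $\bx$: using $\pr{}{f=0}=(1-2^{-w})^{2^w}\ge 1/4>\eps_0$, a short averaging shows $\calA$ errs on $\bx^{(\bj)}$ with probability $\ge 1-\eps_0/\pr{}{f=0}=\Omega(1)$; bounding the density of $\bx^{(\bj)}$ against the uniform distribution (the ratio stays $O(1)$ so long as $\Omega(2^w)$ tribes are untouched) then shows $\error(\calA)$ would exceed $\eps_0$ unless, on a constant fraction of inputs, $\calA$ in fact touches $\Omega(2^w)$ tribes.

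Step (iii) and the obstacle: simulating the other $2^w-1$ tribes with fresh internal randomness turns $\calA$ into an online algorithm for the $w$-variable $\AND$ instance that reveals a variable exactly when $\calA$ touches tribe $j$; Step~(i) applied to it and summed over $j$ gives $\costavg(\calA)\ge\Omega(w)\cdot\ex{}{|R|}$, where $R$ is the set of touched tribes, and plugging in $\ex{}{|R|}=\Omega(2^w)$ from Step~(ii) yields $\costavg(\calA)=\Omega(2^w\cdot w)$ (for $w$ below an absolute constant the claimed bound is merely $\Omega(1)$ and follows from the bias of $\Tribes$ alone). I expect the genuine work to be Step~(ii): carrying the coupling argument through with constants good enough to cover the entire range $\eps_0<1/4$, since the crude version only seems to reach some smaller constant threshold. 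The delicate points are controlling the mild-but-nonzero nonuniformity of the perturbed input $\bx^{(\bj)}$ --- which forces the argument to be run ``on the event that few tribes are touched'' and closed by contradiction --- and ruling out the rare inputs on which $\calA$ happens to reveal an (almost) all-ones tribe and thereby gain confidence cheaply; a more refined or two-sided perturbation may be needed to sharpen the constant. Step~(i) should be routine once the simulation from \Cref{prop:and-lower-bound} is set up.
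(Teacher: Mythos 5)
Your overall decomposition mirrors the paper's exactly: an $O(2^w)$ offline upper bound, a quantitative strengthening of the $\AND$ lower bound, a ``must touch $\Omega(2^w)$ tribes'' lemma, and a planting argument stitching them together. These correspond to the paper's \Cref{lemma:cost-of-seeing-a-variable}, \Cref{lemma:number-of-revealed-tribes}, and \Cref{lemma:from-tribes-to-and}. Your offline analysis is the same as the paper's, and your step~(iii) planting is fine (the paper averages over a uniformly random planted tribe $\istar$; you sum a per-tribe bound; these are equivalent). Your step~(i) is the same simulation idea as \Cref{lemma:cost-of-seeing-a-variable} but cast as the cleaner multiplicative inequality $\costavg(\mathcal{B}) \geq \Omega(m)\cdot\Pr[\text{reveal}]$ rather than the paper's $\geq (p^2/4)m$; your version is correct and in fact slightly stronger, and the case analysis (Case~1: cost $\geq \Pr[G]\cdot\|a\|_1 \geq \|a\|_1/2$ and $\Pr[\text{reveal}] \leq \|a\|_1/m$; Case~2: cost $\geq \Pr[G]\cdot m/2 \geq m/4$) closes it.

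The genuine gap is in step~(ii). The paper proves \Cref{lemma:number-of-revealed-tribes} by a \emph{canonicalization plus OSSS} argument: it transforms $\calA$ into a Bayes-optimal $\calA'$ that never queries inside a tribe already known to contain a $0$, shows the expected number of queries of $\calA'$ is at most twice the expected number of tribes revealed by $\calA$ (since within each touched tribe the canonical algorithm stops at the first $0$), and then applies the OSSS inequality (\Cref{thm:osss-JZ-version}) using $\bias(\Tribes) \geq 1/4$ and $\Inf_{i,j}[\Tribes] \leq 2^{-w}$ to conclude $\Omega(2^w)$ queries, hence $\Omega(2^w)$ tribes. This cleanly yields the constant $(1/4 - \eps_0)/2 > 0$ for every $\eps_0 < 1/4$. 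Your flip-and-compare-densities route does not reach this range, and the obstruction is not a matter of tightening constants: your source distribution is $\bx \sim (\text{uniform} \mid f = 0)$, so even in the ideal scenario where \emph{no} tribe is ever touched, the density of $\bx^{(\bj)}$ against the uniform distribution is at least $1/\Pr[f{=}0] - o(1) \geq e - o(1) \approx 2.7$ on the relevant inputs. Meanwhile your chain of inequalities requires the density ratio to be below $(1 - \eps_0/\Pr[f{=}0])/\eps_0$, which for $\eps_0$ near $1/4$ is around $1.3$, and drops below $1$ well before $\eps_0 = 1/4$. So the contradiction simply does not arise for a range of $\eps_0$ the theorem claims to cover, regardless of how well you control the ``few tribes touched'' event. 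You flag this honestly (``the crude version only seems to reach some smaller constant threshold'' and ``a more refined or two-sided perturbation may be needed''), but a fix is not supplied and is not obviously a small tweak; the OSSS route sidesteps the density comparison entirely, which is exactly why the paper uses it.
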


Since $w = \Theta(\log n)$ in \Cref{def:tribes-instance}, the theorem shows that an online algorithm can be at best $\Omega(\log n)$-competitive compared to the benchmark $\optavg_0$.

We start with an overview of the proof. Let $\calA$ denote an algorithm that computes the width-$w$ $\Tribes$ function accurately. Intuitively, algorithm $\calA$ needs to look at at least one variable in a constant fraction of the $2^w$ tribes. Then, we can transform algorithm $\calA$ into a new one (denoted by $\calA'$) for the $\AND$ instance (\Cref{def:and-instance}) over $w$ variables, so that $\calA'$ reveals one of the $w$ variables with probability $\Omega(1)$. Furthermore, the expected cost of $\calA'$ is only a $2^{-w}$-fraction of that of $\calA$. Finally, we extend our proof of \Cref{prop:and-lower-bound} to show that $\calA'$ must have an expected cost of $\Omega(w)$. This would then lower bound the expected cost of $\calA$ by $2^w \cdot \Omega(w) = \Omega(2^w \cdot w)$.

We start by formally defining an algorithm \emph{``revealing a tribe''} in the context of computing the $\Tribes$ function.

\begin{definition}[Revealing a tribe]\label{def:reveal-tribe}
    When an algorithm runs on a $\Tribes$ instance (\Cref{def:tribes-instance}), we say that the algorithm reveals the $i$-th tribe ($i \in [2^w]$) if at least one of the variables $x_{i,1}, x_{i,2}, \ldots, x_{i,w}$ is revealed to the algorithm.
\end{definition}

\Cref{thm:tribes-lower-bound} follows from the three lemmas below, which we prove in the remainder of this section.

\begin{restatable}{lemma}{numberofrevealedtribes}\label{lemma:number-of-revealed-tribes}
	For every $\eps_0 \in [0, 1/4)$ and integer $w \ge 1$, any $\eps_0$-error algorithm computing the width-$w$ $\Tribes$ function must reveal $\Omega(2^w)$ tribes in expectation.
\end{restatable}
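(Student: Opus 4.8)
The plan is to compare the algorithm's behavior on two closely related input distributions, exploiting the fact that under the uniform distribution $\Tribes$ is not too biased. Write $q := \Pr_{\bx}[\Tribes(\bx) = 0] = (1-2^{-w})^{2^w}$; since $(1-1/m)^m$ is increasing in $m\ge 2$, one has $q \ge 1/4$ for every $w \ge 1$ (with equality at $w=1$), so the hypothesis $\eps_0 < 1/4$ is exactly what forces the algorithm to ``look at'' many tribes. Throughout I condition on the (random) cost vector, so it suffices to prove the bound for an arbitrary fixed cost vector; with the costs fixed, the number of revealed tribes (\Cref{def:reveal-tribe}) is a function $K$ of the input and the algorithm's internal coins, and I will lower-bound $\mathbb{E}_{\bx}[K]$. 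The two distributions are: $\mathcal{D}_0$, the uniform distribution conditioned on $\Tribes = 0$ (equivalently, each of the $2^w$ tribes independently uniform over $\zo^w \setminus \{1^w\}$); and $\mathcal{D}_0^+$, obtained from $\mathcal{D}_0$ by picking a uniformly random tribe $\bi^\ast$ and overwriting it with $1^w$. The key structural fact is that $\mathcal{D}_0^+$ is precisely the uniform distribution conditioned on \emph{exactly one} tribe being on, an event of probability $q_1 := (1-2^{-w})^{2^w-1}$, and that $q_1 > q$.

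The heart of the argument is a ``planting'' coupling. Run the algorithm on a sample $\bx^0 \sim \mathcal{D}_0$ (with fixed costs and coins), and suppose it never reveals a variable of tribe $\bi^\ast$. Then overwriting tribe $\bi^\ast$ by $1^w$ changes nothing: by induction on the investment steps, the algorithm makes exactly the same investments, reveals exactly the same bits (the costs are unchanged and the two inputs agree outside tribe $\bi^\ast$), and produces the same output. Since $\bi^\ast$ is uniform and independent, $\Pr[\bi^\ast \text{ not revealed} \mid \bx^0, \text{coins}] = 1 - K/2^w$, and averaging gives
\[
    \Pr_{\mathcal{D}_0^+}[\text{output} = 0] \;\ge\; \Pr_{\mathcal{D}_0}[\text{output} = 0] - \frac{\mathbb{E}_{\mathcal{D}_0}[K]}{2^w}.
\]

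Now I combine these. Let $a := \Pr_{\mathcal{D}_0}[\text{output} = 0]$; since the error is at most $\eps_0$ and $\Pr[\Tribes = 0] = q$, we get $a \ge 1 - \eps_0/q$. The events ``output $0$ and exactly one tribe on'' and ``output $1$ and no tribe on'' are disjoint and both consist of errors, so
\[
    \eps_0 \;\ge\; q_1 \cdot \Pr_{\mathcal{D}_0^+}[\text{output} = 0] + q(1-a) \;\ge\; q_1\!\left(a - \frac{\mathbb{E}_{\mathcal{D}_0}[K]}{2^w}\right) + q(1-a).
\]
As $q_1 > q$, the right-hand side is increasing in $a$, so I may replace $a$ by its minimum $1 - \eps_0/q$; the inequality then rearranges (with everything cancelling cleanly) to $\mathbb{E}_{\mathcal{D}_0}[K] \ge 2^w(1-\eps_0/q)$. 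Since $\mathcal{D}_0$ is the uniform distribution conditioned on $\Tribes = 0$, the unconditional expectation satisfies $\mathbb{E}_{\bx}[K] \ge q \cdot \mathbb{E}_{\mathcal{D}_0}[K] \ge 2^w(q - \eps_0) \ge 2^w(\tfrac14 - \eps_0) = \Omega(2^w)$, which is the claim.

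The step I expect to be the real obstacle is setting up the comparison so that the bound survives all the way to $\eps_0 < 1/4$ rather than some smaller constant: a naive version that lower-bounds the error only by false negatives on $\mathcal{D}_0^+$ loses a constant factor and stalls around $\eps_0 \approx 0.15$. What rescues it is retaining \emph{both} error types---false positives on $\mathcal{D}_0$ and false negatives on $\mathcal{D}_0^+$---and optimizing over $a$, since these trade off against one another; this is also why the threshold $1/4$ is sharp (for $w=1$ a constant output already achieves error exactly $1/4$ while revealing no tribe). A secondary technicality is justifying the planting coupling in the \emph{online} model, where ``revealing'' is mediated by the unknown cost threshold rather than an explicit query---but fixing the cost vector and using that the algorithm's decisions depend only on already-revealed bits makes the induction on investment steps routine.
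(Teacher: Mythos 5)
Your proposal is correct, and it takes a genuinely different route from the paper. The paper's proof proceeds in two stages: it first converts an algorithm $\calA$ that reveals $m$ tribes in expectation into a ``canonicalized'' algorithm $\calA'$ that (i) outputs the Bayes-optimal prediction and (ii) never wastes queries inside a tribe already known to contain a zero, and shows $\calA'$ has no larger error while making at most $2m$ variable queries in expectation; it then applies the OSSS inequality (\Cref{thm:osss-JZ-version}) to $\calA'$, using $\bias(\Tribes) \ge 1/4$ and $\Inf_{i,j}[\Tribes] \le 2^{-w}$ to conclude $2m \ge (1/4 - \eps_0)\cdot 2^w$. Your argument sidesteps OSSS and the canonicalization entirely with a direct planting coupling tailored to $\Tribes$: if the algorithm never peeks inside tribe $\bi^\ast$, overwriting that tribe with $1^w$ cannot change its output, so the outputs on $\mathcal{D}_0$ and $\mathcal{D}_0^+$ disagree with probability at most $\mathbb{E}[K]/2^w$; combining the resulting false-negative rate on $\mathcal{D}_0^+$ with the false-positive rate on $\mathcal{D}_0$ and optimizing over $a$ gives a clean bound. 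Your approach is more elementary, avoids a factor-of-$2$ loss (you get $\mathbb{E}[K] \ge (q - \eps_0)2^w$ versus the paper's $\ge \tfrac{1}{2}(1/4 - \eps_0)2^w$), and makes the sharpness of the $1/4$ threshold transparent; the paper's approach is arguably more portable, since the OSSS-based argument depends only on bias and pointwise influence bounds rather than on the combinatorial structure of $\Tribes$. One small imprecision in your write-up: after fixing the cost vector $c$, you invoke ``error at most $\eps_0$'' to conclude $a \ge 1 - \eps_0/q$, but the $\eps_0$-error guarantee holds only on average over $c$, so for a fixed $c$ you should work with $\eps(c)$ and prove $\mathbb{E}_{\bx}[K\mid c] \ge 2^w(q - \eps(c))$; averaging over $c$ and using linearity in $\eps(c)$ then recovers $\mathbb{E}[K] \ge 2^w(q - \eps_0)$. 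This is a bookkeeping fix, not a gap.
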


\begin{restatable}{lemma}{fromtribestoand}\label{lemma:from-tribes-to-and}
    Suppose that for some $\alpha, C > 0$, an algorithm for the $\Tribes$ instance with width $w$: (1) reveals at least $\alpha \cdot 2^w$ tribes in expectation; and: (2) has an expected cost of $C$. Then, there is an algorithm for the $\AND$ instance with $w$ variables that: (1) reveals at least one of the variables with probability $\alpha$; and: (2) has an expected cost of $C/2^w$.
\end{restatable}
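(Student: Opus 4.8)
The plan is a planting-and-averaging reduction. Let $\calA$ be the $\Tribes$ algorithm satisfying the two hypotheses; I build an $\AND$ algorithm $\calA'$ on $w$ variables as follows. Given its own $\AND$ instance with input $x \in \zo^w$ and unknown cost vector $c$ (a uniform permutation of $[w]$, as in \Cref{def:and-instance}), the algorithm $\calA'$ first draws a uniformly random tribe index $\istar \in [2^w]$, and for every tribe $i \ne \istar$ it samples --- on its own --- a fresh uniform input in $\zo^w$ together with a fresh uniform cost permutation of $[w]$. It then simulates $\calA$ on the width-$w$ $\Tribes$ instance whose $\istar$-th tribe is exactly the planted $\AND$ instance $(x, c)$ and whose remaining tribes are the ones it just sampled. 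Whenever $\calA$ invests in a variable of some tribe $i \ne \istar$, $\calA'$ performs that step internally at no cost, since it knows both the bit and its cost; whenever $\calA$ invests $\beta$ in the $j$-th variable of tribe $\istar$, $\calA'$ invests $\beta$ in its own $j$-th $\AND$ variable, paying the real cost. Since the planted tribe carries the vector $c$, the variable $x_{\istar,j}$ is revealed to $\calA$ exactly when $\calA'$'s cumulative investment in variable $j$ reaches $c_j$, and the revealed bit is $x_j$; so the simulation is faithful and --- as in the online model --- requires no advance knowledge of $c$.

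The crux is that, conditioned on any fixed value of $\istar$, the $\Tribes$ instance fed to $\calA$ is distributed exactly as in \Cref{def:tribes-instance}: the $n = 2^w w$ input bits are independent and uniform, and the $2^w$ per-tribe cost vectors are independent uniform permutations of $[w]$. In particular this distribution does not depend on $\istar$, so $\istar$ is independent of the entire run of $\calA$ --- in particular of the random set of tribes that $\calA$ reveals (in the sense of \Cref{def:reveal-tribe}) and of how $\calA$'s total cost splits across the $2^w$ tribes.

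Both conclusions then follow by averaging over the uniform, independent $\istar$. Writing $N$ for the (random) number of tribes that $\calA$ reveals, conclusion (1) holds because $\calA'$ reveals one of its $w$ variables exactly when $\calA$ reveals tribe $\istar$, so
\[
    \Pr\!\left[\calA' \text{ reveals a variable}\right]
  = \Pr\!\left[\calA \text{ reveals tribe } \istar\right]
  = \Ex\!\left[\frac{N}{2^w}\right]
  \ge \frac{\alpha \cdot 2^w}{2^w}
  = \alpha
\]
by hypothesis~(1). For conclusion (2), note that the cost paid by $\calA'$ is exactly the cost $\calA$ incurs on the variables of tribe $\istar$ (investments in the other tribes being free); letting $C_i$ denote the cost $\calA$ incurs on tribe $i$, we have $\sum_{i=1}^{2^w} C_i$ equal to the total cost of $\calA$, hence
\[
    \Ex\!\left[\cost(\calA')\right]
  = \Ex\!\left[C_{\istar}\right]
  = \Ex\!\left[\frac{1}{2^w}\sum_{i=1}^{2^w} C_i\right]
  = \frac{C}{2^w}
\]
by hypothesis~(2). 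The argument has no substantive obstacle; the single point that needs care is the distributional claim --- that for \emph{every} fixed $\istar$ the simulated instance matches \Cref{def:tribes-instance} --- since this is precisely what makes $\istar$ independent of $\calA$'s behaviour and lets both averaging steps go through.
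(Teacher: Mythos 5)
Your proposal is correct and is essentially the same planting-and-averaging argument given in the paper: draw a uniform $\istar$, plant the $\AND$ instance as tribe $\istar$, sample the other tribes yourself, and average over $\istar$ using linearity of expectation. The paper's proof is phrased via the per-tribe quantities $p_i$ and $C_i$ rather than by explicitly invoking the independence of $\istar$ from $\calA$'s run, but the two are the same argument.
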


\begin{restatable}{lemma}{costofseeingavariable}\label{lemma:cost-of-seeing-a-variable}
    For any $p \in (0, 1]$, there exists $\alpha > 0$ such that the following is true for every integer $n \ge 1$: If an algorithm for the $\AND$ instance with $n$ variables reveals at least one of the variables with probability at least $p$, the expected cost of the algorithm must be at least $\alpha \cdot n$.
\end{restatable}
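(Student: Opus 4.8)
The plan is to prove a quantitative version of the coupling argument behind \Cref{prop:and-lower-bound}. First I would reduce to \emph{deterministic} algorithms by conditioning on the algorithm's internal coins and on the input $x$ (which is irrelevant to the algorithm's behaviour until the first variable is revealed); it then suffices to bound, for an arbitrary deterministic algorithm, the probability over the random cost permutation $c$ that it reveals any variable. To such an algorithm I attach its \emph{canonical trajectory}: the (possibly infinite) sequence of coordinates $i_1, i_2, \dots$ into which it would invest $\beta$ if no variable were ever revealed. Because a deterministic online algorithm's behaviour before its first reveal depends only on the (empty) history of revealed bits, its actual execution coincides with the canonical trajectory up to and including the step where the first reveal happens. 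Writing $\phi(t)$ for the cumulative investment into coordinate $i_t$ immediately after canonical step $t$, the event ``some variable is eventually revealed'' equals $\bigcup_{t \ge 1}\{c_{i_t} \le \phi(t)\}$, and the minimal such $t$ is the step of the first reveal.

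The crux is a \emph{budgeted} union bound. Fix a parameter $M > 0$. If the algorithm reveals a variable and its total cost is at most $M$, then the first reveal occurs at a canonical step $t$ with $t\beta \le M$, since every step adds exactly $\beta$ to $\|\theta\|_1$ and the first $t$ steps are canonical; hence this event is contained in $\bigcup_{t \le M/\beta}\{c_{i_t} \le \phi(t)\}$. Grouping these steps by coordinate and letting $n_j$ be the number of $t \le M/\beta$ with $i_t = j$, the largest investment into $j$ along these steps is $n_j\beta$ and $\sum_j n_j \le M/\beta$. Since $c$ is a uniform permutation of $[n]$, each $c_j$ is uniform on $[n]$, so
\[
    \Pr_c\!\left[\text{reveal}\ \wedge\ \text{cost} \le M\right]
    \le \sum_{j=1}^n \Pr_c\!\left[c_j \le n_j\beta\right]
    \le \sum_{j=1}^n \frac{n_j\beta}{n}
    \le \frac{M}{n}.
\]
This bound is uniform over deterministic algorithms, so it also holds for the original randomized algorithm after averaging over its coins and the input.

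Finally I would apply Markov's inequality to the total cost: writing $C$ for the expected cost, for every $M > 0$ we get $\Pr[\text{reveal}] \le M/n + C/M$, and taking $M = \sqrt{Cn}$ (the case $C = 0$ being trivial) yields $\Pr[\text{reveal}] \le 2\sqrt{C/n}$. Contrapositively, if the algorithm reveals some variable with probability at least $p$, then $C \ge (p^2/4)\,n$, establishing the lemma with $\alpha := p^2/4$, a constant that depends only on $p$. (One may equivalently argue by contradiction: assuming $C < (p^2/4)\,n$ and choosing $M := pn/2$ gives reveal probability strictly below $p$.)

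The step I expect to be the main obstacle is the budgeted union bound: one must isolate the right deterministic object (the canonical trajectory truncated at cumulative investment $M$), observe that along any prefix each coordinate's investment is a fixed integer multiple of $\beta$ so that ``coordinate $j$ ever revealed within the prefix'' collapses to the single event $\{c_j \le n_j\beta\}$, and then use $\sum_j n_j \le M/\beta$ to make the union bound telescope to exactly $M/n$. Pinning down the quantifier order --- fixing the algorithm's randomness and the input before invoking the randomness of $c$, and noting that total cost at most $M$ forces the first reveal into the first $M/\beta$ canonical steps --- takes some care but is routine once this picture is in place.
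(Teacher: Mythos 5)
Your proposal is correct and takes essentially the same approach as the paper: fix the algorithm's randomness, simulate it assuming no reveals (your ``canonical trajectory,'' the paper's simulation with all costs set to $n$), truncate at a budget, bound the probability of a reveal within that budget by a union bound over coordinates using $\sum_j n_j\beta \le M$, and then combine with a Markov-type argument to lower bound the expected cost, arriving at the same constant $\alpha = p^2/4$. The cosmetic differences (reducing to deterministic algorithms by conditioning rather than arguing about random variables $a_i$ directly, and optimizing over the budget $M$ rather than fixing $m = pn/2$ up front) do not change the substance.
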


\begin{proof}[Proof of \Cref{thm:tribes-lower-bound} assuming \Cref{lemma:number-of-revealed-tribes,lemma:from-tribes-to-and,lemma:cost-of-seeing-a-variable}.]
    We first show that $\optavg_0 = O(2^w)$. Consider the algorithm that evaluates the $2^w$ tribes one by one. For each tribe, the algorithm queries the $w$ variables in increasing order of costs and stops whenever a zero is revealed. The probability of querying the cost-$i$ variable is $2^{-(i-1)}$, so the expected cost on each tribe is
    \[
        \sum_{i=1}^{w}2^{-(i-1)}\cdot i \le 4.
    \]
    This implies $\optavg_0 \le 4\cdot 2^w = O(2^w)$.

    For the lower bound part, let $\calA$ be an $\eps_0$-error algorithm computing the width-$w$ $\Tribes$ instance with an expected cost of $C$. By \Cref{lemma:number-of-revealed-tribes}, for some $p > 0$, $\calA$ reveals at least $p \cdot 2^w$ tribes in expectation. Then, by \Cref{lemma:from-tribes-to-and}, there is an algorithm for the $w$-variable $\AND$ instance that reveals at least one of the $w$ variables with probability $p$ and has an expected cost of $C/2^w$. By \Cref{lemma:cost-of-seeing-a-variable}, we must have $C/2^w \ge \Omega(w)$, which implies $C = \Omega(2^w \cdot w)$. 
\end{proof}

\subsection{Proof of \Cref{lemma:number-of-revealed-tribes}}
We start with \Cref{lemma:number-of-revealed-tribes}, which states that any algorithm that computes the $\Tribes$ function with a low error must reveal a constant fraction of the tribes. 

\numberofrevealedtribes*

Note that directly applying the OSSS inequality (\Cref{thm:osss-JZ-version}) would give a lower bound only on the number of revealed \emph{variables}, and not on the number of revealed \emph{tribes}. Our workaround is to transform an algorithm $\calA$---which reveals only a few tribes---into a ``canonical form'' algorithm $\calA'$, so that the number of revealed variables in $\calA'$ is comparable to that of revealed tribes in $\calA$. Furthermore, $\calA'$ has the same or smaller error probability. Then, applying the OSSS inequality to $\calA'$ would give the desired lower bound on the number of tribes revealed by $\calA$.

\begin{proof}
    Suppose that algorithm $\calA$ solves the width-$w$ $\Tribes$ instance with an error probability $\le \eps_0$ while revealing $m$ tribes in expectation. We will first derive another algorithm $\calA'$ that computes $\Tribes$ up to an error probability $\le \eps_0$ and queries at most $2m$ variables in expectation. Then, we apply the OSSS inequality to show that $2m \ge \Omega(2^w)$, which implies the desired lower bound $m \ge \Omega(2^w)$.
    
    \paragraph{Translating number of tribes to query complexity.} We will construct an alternative algorithm $\calA'$ for $\Tribes$. $\calA'$ works in a setting where the variables are not associated with costs, i.e., only the number of queries matters. Furthermore, our construction ensures that $\calA'$ satisfies the following two properties:
    \begin{itemize}
        \item \textbf{Property 1.} When $\calA'$ terminates, it computes the conditional expectation of $f$ given the revealed variables, and outputs the value rounded to $\zo$. Formally, letting $\pi$ denote the restriction formed by the observed variables, the output of $\calA'$ is always $\1{\ex{\bx\sim\zo^n}{f_{\pi}(\bx)} \ge 1/2}$.
        \item \textbf{Property 2.} $\calA'$ never queries a variable $x_{i,j}$ if, for some $j' \in [w]$, $x_{i,j'}$ is already queried and known to take value $0$.
    \end{itemize}
    Intuitively, Property~1 requires that the algorithm always makes the Bayes-optimal prediction. Property~2 prevents the algorithm from making an unnecessary query to a variable $x_{i,j}$ in the $i$-th tribe when that tribe is known to take value $0$ (due to the previous revelation of $x_{i,j'} = 0$).

    Formally, $\calA'$ is defined as follows:
    \begin{itemize}
        \item $\calA'$ simulates algorithm $\calA$ on a width-$w$ $\Tribes$ instance in \Cref{def:tribes-instance} by randomly choosing the costs $(c_{i,j})_{i \in [2^w], j \in [w]}$.
        \item Whenever $\calA$ reveals a variable $x_{i,j}$, $\calA'$ checks whether there exists another variable $x_{i,j'}$ in the same tribe that has already been revealed as a zero. If so, $\calA'$ draws a random bit and feeds that random bit to $\calA$ as the value of $x_{i,j}$; otherwise, $\calA'$ queries $x_{i,j}$ and forwards the value to $\calA$.
        \item When $\calA$ decides to terminate and output an answer, $\calA'$ outputs an answer according to Property~1.
    \end{itemize}

    In the following, we show that $\calA'$ queries $\le 2m$ variables in expectation, and has an error probability $\le \eps_0$.
    
    \paragraph{Upper bound the number of queries.} We first analyze the expected number of queries made by $\calA'$. Fix $i \in [2^w]$. We note that, conditioning on the event that $\calA$ reveals the $i$-th tribe, the conditional distribution of $(x_{i,1}, x_{i,2}, \ldots, x_{i,w})$ is still uniform over $\zo^w$. Then, by our construction of $\calA'$, $\calA'$ queries at least $j$ variables in the $i$-th tribe only if the first $j-1$ variables that get revealed are all ones, which happens with probability $2^{-(j-1)}$. Formally, we have
    \[
        \pr{}{\calA'\text{ queries at least }j\text{ variables in tribe }i}
    \le \pr{}{\calA\text{ reveals tribe }i}\cdot 2^{-(j-1)}.
    \]
    The expected number of variables in the $i$-th tribe queried by $\calA'$ is then given by
    \begin{align*}
        &~\ex{}{\sum_{j=1}^{w}\1{\calA'\text{ queries at least }j\text{ variables in tribe }i}}\\
    =   &~\sum_{j=1}^{w}\pr{}{\calA'\text{ queries at least }j\text{ variables in tribe }i}\\
    \le &~\sum_{j=1}^{w}\pr{}{\calA\text{ reveals tribe }i}\cdot 2^{-(j-1)}\\
    \le &~2\pr{}{\calA\text{ reveals tribe }i}.
    \end{align*}

    Therefore, the expected number of queries made by $\calA'$ is upper bounded by
    \[
        2\sum_{i=1}^{2^w}\pr{}{\calA\text{ reveals tribe }i}
    =   2\ex{}{\sum_{i=1}^{2^w}\1{\calA\text{ reveals tribe }i}}
    =   2m.
    \]
    
    \paragraph{Upper bound the error probability.} Next, we show that the error probability of $\calA'$ is at most $\eps_0$. To this end, we consider the following modified version of $\calA'$, denoted by $\calA''$:
    \begin{itemize}
        \item $\calA''$ simulates $\calA$ almost in the same way as $\calA'$ does. The only change is that, whenever $\calA$ reveals a variable $x_{i,j}$, $\calA''$ always queries that variable and forwards its value to $\calA$. (In contrast, $\calA'$ would feed a random bit to $\calA$ in some cases.)
        \item When $\calA$ terminates, $\calA''$ outputs an answer according to Property~1, i.e., it outputs the conditional expectation of $f$ (given the observed inputs) rounded to $\{0, 1\}$.
    \end{itemize}
    Compared with algorithm $\calA'$, $\calA''$ satisfies Property~1 but not Property~2.

    We first note that the error probability of $\calA''$ is never higher than that of $\calA$. When $\calA$ terminates, $\calA''$ and $\calA$ have observed the same subset of variables, which induce the same restriction $\bpi$. Conditioning on this event, all the unobserved variables are still uniformly distributed. Then, predicting $0$ leads to a conditional error probability of $\pr{\bx\sim\zo^n}{f_{\bpi}(\bx) \ne 0} = \ex{\bx\sim\zo^n}{f_{\bpi}(\bx)}$, while predicting $1$ leads to a conditional error of $1 - \ex{\bx\sim\zo^n}{f_{\bpi}(\bx)}$. Then, by predicting $\1{\ex{\bx\sim\zo^n}{f_{\bpi}(\bx)} \ge 1/2}$, algorithm $\calA''$ always has a lower or equal conditional error probability than $\calA$ does. Applying the law of total probability shows $\error_{\calA''}(f) \le \error_{\calA}(f)$.

    It remains to show that $\error_{\calA'}(f) \le \error_{\calA''}(f)$. We will show that these two sides are equal by coupling the two algorithms carefully. Suppose that both $\calA'$ and $\calA''$ simulate $\calA$ with the same randomness in $\calA$ and in the costs. Furthermore, we ``defer'' the randomness in the input $\bx \in \zo^n$ by 
    realizing  each input bit only when it gets queried by an algorithm. Whenever $\calA$ reveals a variable $x_{i,j}$, there are two cases:
    \begin{itemize}
        \item \textbf{Case 1: Some $x_{i,j'} = 0$ in the same tribe is already revealed.} Recall that $\calA'$ would feed a random bit to $\calA$ in this case, while $\calA''$ would query $x_{i,j}$ and feed the actual value to $\calA$. Note that, before $\calA''$ queries $x_{i,j}$, the input bit is uniformly distributed among $\{0, 1\}$. Thus, we may couple the two executions, so that the random bit chosen by $\calA'$ is always equal to the random realization of $x_{i,j}$ in $\calA''$.
        \item \textbf{Case 2: No variable $x_{i,j'}$ is revealed to be zero.} In this case, both $\calA'$ and $\calA''$ would actually query $x_{i,j}$. We couple the two executions such that the realization of $x_{i,j}$ are the same.
    \end{itemize}
    Note that this coupling ensures that $\calA$ follows the same computation path in both $\calA'$ and $\calA''$.

    At the end of the simulations, let $\bX', \bX'' \subseteq [2^w] \times [w]$ denote the indices of the variables that are queried by $\calA'$ and $\calA''$, respectively. By our coupling, we always have $\bX' \subseteq \bX''$. Let $\bpi'$ and $\bpi''$ denote the restrictions naturally induced by the variables observed by $\calA'$ and $\calA''$, respectively. Note that $\calA'$ and $\calA''$ output $\1{\ex{\bx\sim\zo^n}{f_{\bpi'}(\bx)} \ge 1/2}$ and $\1{\ex{\bx\sim\zo^n}{f_{\bpi''}(\bx)} \ge 1/2}$, respectively.

    We then couple the realization of the remaining randomness in $\bx$ such that, for every $(i, j) \in ([2^w] \times [w]) \setminus \bX''$, the realization of $x_{i,j}$ is the same for both $\calA'$ and $\calA''$. At this point, the input $\bx$ is determined in both simulations, and we let $\bx'$ and $\bx''$ denote the two realizations for clarity. We will verify the following two facts:
    \begin{itemize}
        \item $f(\bx') = f(\bx'')$, i.e., $f(\bx)$ takes the same value in both $\calA'$ and $\calA''$.
        \item The outputs of $\calA'$ and $\calA''$ are equal.
    \end{itemize}
    Assuming the above, we immediately have $\error_{\calA'}(f) = \error_{\calA''}(f)$, since
    \[
        \error_{\calA'}(f) = \pr{}{f(\bx') \ne\text{output of }\calA'}
    \quad\text{and}\quad
        \error_{\calA''}(f) = \pr{}{f(\bx'') \ne\text{output of }\calA''}.
    \]

    To verify the first fact, we note that $\bx'$ and $\bx''$ might differ only on coordinates
    $(i, j) \in \bX'' \setminus \bX'$. By definition of $\calA'$, for every such coordinate $(i, j)$, there is another variable in the $i$-th tribe that was revealed as a zero in both $\calA'$ and $\calA''$. Formally, there must exist $j' \in [w]$ such that $x'_{i,j'} = x''_{i,j} = 0$. Then, the difference $x'_{i,j} \ne x''_{i,j}$ would not matter, since the $i$-th tribe would take value $0$ in either case.
    
    To verify the second fact, for $\bx \in \zo^n = \zo^{2^w \cdot w}$, let $g_i(\bx) \coloneqq \bigwedge_{j=1}^{w}x_{i,j}$ denote the $i$-th tribe. For $x \in \zo^n$ and restriction $\pi$, we write $x \agrees \pi$ if $x$ is consistent with the restriction. 
    Then, for any generic restriction $\pi$, we have
    \[
        \ex{\bx\sim\zo^n}{f_{\pi}(\bx)}
    =   1 - \pr{\bx\sim\zo^n}{f(\bx) = 0 \mid \bx \agrees \pi}
    =   1 - \prod_{i=1}^{2^w}\left[1 - \pr{\bx\sim\zo^n}{g_i(\bx) = 1 \mid \bx \agrees \pi}\right].
    \]
    The second step above holds since, after conditioning on $\bx \agrees \pi$, $\bx$ always follows a product distribution.

    Therefore, it suffices to verify that, for every pair of restrictions $(\bpi', \bpi'')$ and every $i \in [2^w]$, it holds that
    \[
        \pr{\bx\sim\zo^n}{g_i(\bx) = 1 \mid \bx \agrees \bpi'}
    =   \pr{\bx\sim\zo^n}{g_i(\bx) = 1 \mid \bx \agrees \bpi''}.
    \]
    By our definition of $\calA'$ and $\calA''$, there are two possible cases:
    \begin{itemize}
        \item \textbf{Case 1: $\bX'$ and $\bX''$ agree on the $i$-th tribe.} In this case, $\bpi'$ and $\bpi''$ induce the same restriction on variables $x_{i,1}, x_{i,2}, \ldots, x_{i,w}$, so we have
        \[
            \pr{\bx\sim\zo^n}{g_i(\bx) = 1 \mid \bx \agrees \bpi'}
        =   \pr{\bx\sim\zo^n}{g_i(\bx) = 1 \mid \bx \agrees \bpi''}.
        \]
        \item \textbf{Case 2: For some $j \in [w]$, $(i, j) \in \bX'' \setminus \bX'$.} Again, this can only happen if, for some $j' \in [w]$, both $\bpi'$ and $\bpi''$ contain the restriction $x_{i,j'} = 0$, in which case we have
        \[
            \pr{\bx\sim\zo^n}{g_i(\bx) = 1 \mid \bx \agrees \bpi'}
        =   \pr{\bx\sim\zo^n}{g_i(\bx) = 1 \mid \bx \agrees \bpi''}
        =   0.
        \]
    \end{itemize}

    Therefore, we conclude that
    \[
        \error_{\calA'}(f) = \error_{\calA''}(f) \le \error_{\calA}(f) \le \eps_0,
    \]
    and $\calA'$ queries at most $2m$ variables in expectation.

    \paragraph{Lower bound on the query complexity of $\Tribes$.} This is a consequence of the OSSS inequality (\Cref{thm:osss-JZ-version}). In the width-$w$ $\Tribes$ function $f$, we have
    \[
        \pr{\bx\sim\zo^n}{f(\bx) = 0}
    =   \left(1 - \frac{1}{2^w}\right)^{2^w}
    \in \left[\frac{1}{4}, \frac{1}{e}\right].
    \]
    It follows that $\bias(f) \ge 1/4$. Furthermore, each variable is pivotal if and only if: (1) all the $2^w-1$ other tribes take value $0$; (2) all the $w-1$ other variables in the same tribe take value $1$. Thus, the influence of every variable $x_{i,j}$ is given by
    \[
        \Inf_{i,j}[f]
    =   \left(1 - \frac{1}{2^w}\right)^{2^w-1} \cdot \frac{1}{2^{w-1}}
    \le \frac{1}{2}\cdot\frac{2}{2^w}
    =   \frac{1}{2^w}.
    \]
    Then, for every $\eps_0$-error algorithm $\calA$, \Cref{thm:osss-JZ-version} gives
    \[
        \frac{1}{4} - \eps_0
    \le \bias(f) - \error_f(\calA)
    \le \sum_{i=1}^{2^w}\sum_{j=1}^{w}\delta_{i,j}(\calA) \cdot \Inf_{i,j}[f]
    \le \frac{1}{2^w}\sum_{i=1}^{2^w}\sum_{j=1}^{w}\delta_{i,j}(\calA).
    \]
    Therefore, the expected number of queries made by $\calA$, $\sum_{i=1}^{2^w}\sum_{j=1}^{w}\delta_{i,j}(\calA)$, is at least $(1/4 - \eps_0)\cdot 2^w$. Applying this to algorithm $\calA'$ constructed above gives $m \ge \frac{1/4 - \eps_0}{2}\cdot 2^w = \Omega(2^w)$.
\end{proof}

\subsection{Proof of \Cref{lemma:from-tribes-to-and}}
Next, we turn to \Cref{lemma:from-tribes-to-and}, which transforms an algorithm computing the width-$w$ $\Tribes$ function into one for the $w$-variable $\AND$ function. This is done by ``planting'' the $\AND$ instance as one of the $2^w$ tribes in the $\Tribes$ instance. The transformation ensures that the resulting algorithm reveals at least one of the $w$ variables with a sufficiently high probability, while incurring a low cost in expectation.

\fromtribestoand*

\begin{proof}
    Let $\calA$ denote the algorithm for width-$w$ $\Tribes$. In the following, we construct another algorithm, denoted by $\calA'$, for the $w$-variable $\AND$ function.
    \begin{itemize}
        \item Draw $\istar \in [2^w]$ uniformly at random.
        \item Generate a width-$w$ $\Tribes$ instance by sampling, independently for each $i \in [2^w] \setminus \{\istar\}$, a uniformly random permutation $(c_{i,1}, c_{i,2}, \ldots, c_{i,w})$ of $[w]$, and input bits $(x_{i,1}, x_{i,2}, \ldots, x_{i,w}) \in \zo^w$.
        \item Simulate $\calA$ on a width-$w$ $\Tribes$ instance with the $\istar$-th tribe being the actual $\AND$ instance. Formally, when $\calA$ increments the investment $\theta_{i,j}$ for some $i \in [2^w]\setminus\{\istar\}$ and $j \in [w]$, $\calA'$ checks whether $\theta_{i,j} \ge c_{i,j}$ holds. If so, $\calA'$ reveals the value of $x_{i,j}$ to $\calA$. When $\calA$ increments $\theta_{\istar, j}$ for some $j \in [w]$, $\calA'$ increases the investment in the $j$-th variable to $\theta_{\istar, j}$ in the $\AND$ instance (that $\calA'$ is solving).
    \end{itemize}

    By construction of $\calA'$, $\calA$ runs on a random width-$w$ $\Tribes$ instance. For each $i \in [2^w]$, let $p_i$ denote the probability that the $i$-th tribe is revealed by $\calA$, and let $C_i$ denote the expected cost that $\calA$ invests in the $w$ variables in the $i$-th tribe. Note that, conditioning on the event that $\istar = i$, the probability that algorithm $\calA'$ reveals at least one of the $w$ variables (in the $\AND$ instance) is exactly $p_i$. Furthermore, the expected cost incurred by $\calA'$ (in the $\AND$ instance) is exactly $C_i$.
    
    By our assumptions on $\calA$ and the linearity of expectation, we have
    \[
        \sum_{i=1}^{2^w}p_i \ge \alpha \cdot 2^w
    \quad\text{and}\quad
        \sum_{i=1}^{2^w}C_i \le C.
    \]
    Since $\istar$ is drawn uniformly at random from $[2^w]$, the overall probability that $\calA'$ reveals at least one of the $w$ variables is
    \[
        \frac{1}{2^w}\sum_{i=1}^{2^w}p_i
    \ge 2^{-w}\cdot (\alpha \cdot 2^w)
    =   \alpha,
    \]
    and the expected cost of $\calA'$ is
    \[
        \frac{1}{2^w}\sum_{i=1}^{2^w}C_i
    =   C / 2^w.
    \]
    Therefore, $\calA'$ would be the algorithm that the lemma asks for.
\end{proof}

\subsection{Proof of \Cref{lemma:cost-of-seeing-a-variable}}
\costofseeingavariable*
The proof follows the same idea as the one for \Cref{prop:and-lower-bound}: We simulate the algorithm on an $\AND$ instance in which every cost is high, so that none of the variables can be revealed. Then, we argue that the simulation coincides with the execution of the algorithm on an actual instance from \Cref{def:and-instance}. The proof needs to be slightly modified for two reasons: (1) we can no longer assume that the algorithm is deterministic; (2) The algorithm is only assumed to reveal a variable with probability $\Omega(1)$, rather than having a zero error for computing the $\AND$ function.

\begin{proof}
    Let $\calA$ denote the hypothetical algorithm for the $\AND$ instance. Set $m = \frac{p}{2}\cdot n$. We simulate $\calA$ on an alternative instance---in which the function is the $\AND$ over $n$ variables, each of which has a cost of $c_i = n$---until either one of the following two happens:
    \begin{itemize}
        \item \textbf{Case 1:} $\calA$ terminates voluntarily.
        \item \textbf{Case 2:} $\calA$ attempts to increase the total investment beyond $m$.
    \end{itemize}
    Let $\calA^{(1)}$ denote the simulation above. For each $i \in [n]$, let $a_i$ denote the total investment in variable $x_i$ when $\calA^{(1)}$ terminates. Since $\calA$ can be randomized in general, $a_1, a_2, \ldots, a_n$ are random variables that depend on the internal randomness of $\calA^{(1)}$. Nevertheless, it always holds that $\sum_{i=1}^{n}a_i \le m$; otherwise the simulation should have been terminated earlier by Case~2. Furthermore, since $m = (p/2) \cdot n \le n / 2 < n$, none of the $n$ variables gets revealed in simulation $\calA^{(1)}$.
    
    Let $\calA^{(2)}$ denote the simulation of $\calA$---using the same internal randomness as in $\calA^{(1)}$---on an actual $\AND$ instance from \Cref{def:and-instance}. Then, the two simulations $\calA^{(1)}$ and $\calA^{(2)}$ agree except with probability at most
	\[
		\sum_{i=1}^{n}\pr{c}{c_i \le a_i}
	\le \sum_{i=1}^{n}\frac{a_i}{n}
	\le \frac{m}{n}
	=	\frac{p}{2}.
	\]
    Moreover, since $\calA^{(2)}$ actually runs on an $\AND$ instance from \Cref{def:and-instance}, our assumption on $\calA^{(2)}$ implies that it reveals at least one variable with probability at least $p$.

    Let $E_1$ denote the event that $\calA^{(1)}$ and $\calA^{(2)}$ agree, and $E_2$ denote the event that $\calA^{(2)}$ eventually reveals a variable. The analysis above gives
    \[
        \pr{}{\overline{E_1}} \le \frac{p}{2}
    \quad\text{and}\quad
        \pr{}{E_2} \ge p,
    \]
    which implies
    \[
        \pr{}{E_1 \wedge E_2}
    =	\pr{}{E_2} - \pr{}{E_2 \wedge \overline{E_1}}
    \ge \pr{}{E_2} - \pr{}{\overline{E_1}}
    \ge p - \frac{p}{2}
    =	\frac{p}{2}.
    \]

    By our definition of the simulations, if both $E_1$ and $E_2$ happen, $\calA^{(2)}$ must reveal one of the $n$ variables after the total investment reaches $m$. Therefore, the expected cost of $\calA^{(2)}$ is at least
	\[
		\pr{}{E_1 \wedge E_2}\cdot m
	\ge \frac{p}{2} \cdot \frac{p}{2}n
	=	\frac{p^2}{4}\cdot n.
	\]
	In other words, the claim holds for $\alpha = p^2/4$.
\end{proof}

\section*{Acknowledgements}
R.R.~is supported by the NSF TRIPODS program (award DMS-2022448) and CCF-2310818.

\appendix

\bibliography{allrefs}
\bibliographystyle{alpha}

\end{document}